\title[Entropy games]{The operator approach to entropy games}
\thanks{The authors were partially supported by the ANR through the MALTHY INS project, and by the Gaspard Monge corporate sponsorship Program (PGMO) of 
EDF, Orange, Thales and Fondation Math\'ematique Jacques Hadmard.}
\author{Marianne Akian}
\address{INRIA and CMAP, \'Ecole polytechnique, CNRS, France, 
  \texttt{Marianne.Akian@inria.fr}}
\author{St\'ephane Gaubert}
\address{INRIA and CMAP, \'Ecole polytechnique, CNRS, France, 
  \texttt{Stephane.Gaubert@inria.fr}}
\author{Julien Grand-Cl\'ement}
\address{\'Ecole polytechnique, France\\
  \texttt{Julien.Grand-Clement@polytechnique.edu}}
\author{J\'er\'emie Guillaud}
\address{\'Ecole polytechnique, France\\
  \texttt{Jeremie.Guillaud@polytechnique.edu}}
\subjclass{G.2.1 Combinatorial algorithms, F.2.1 Numerical Algorithms and Problems}
\keywords{Stochastic games, Shapley operators, policy iteration, Perron eigenvalues, Risk sensitive control}%
\newcommand{\valueik}[2]{v_{\!#1}^{#2}}
\newcommand{\valuek}[1]{v^{#1}}
\newcommand{\valued}{{}^{\alpha}v}
\newcommand{\bitsize}[1]{\langle #1\rangle}
\newcommand{\ceil}[1]{\lceil #1 \rceil}
\newcommand{\R}{\mathbb{R}}
\newcommand{\cP}{\mathcal{P}}
\newcommand{\sK}{\mathscr{K}}
\newcommand{\A}{\mathcal{A}}
\newcommand{\B}{\mathcal{B}}
\newcommand{\E}{\mathcal{E}}
\newcommand{\M}{\mathcal{M}}
\newcommand{\Ftau}{{}^{\tau}F}
\newcommand{\Mt}[1]{{}^{#1}\!M}
\newcommand{\Mdt}[2]{{}^{#2}\!M^{#1}}
\newcommand{\Fdt}[2]{{}^{#2}F^{#1}}
\newcommand{\Ftauk}{{}^{\tau^k}F}
\newcommand{\Ftaup}{{}^{\tau'}F}
 \newcommand{\gammage}{\E}
 \newcommand{\gammagekdbar}{\E^k_{\bar{d}}}
 \newcommand{\gammageinftydbar}{\E^{\infty}_{\bar{d}}}
 \newcommand{\gammae}{\E^{\text{orig}}}
 \newcommand{\gammaeinfty}{\E^{\text{orig},\infty}}
 \newcommand{\gammampg}{\mathcal{K}\mathcal{L}}
 \newcommand{\gammampgd}{^{\alpha}\gammampg}
 \newcommand{\gammampginfty}{\gammampg^{\infty}}
\newcommand{\disfac}[1]{\gamma(#1)}
\newcommand{\argmax}{\mathop{\mathrm{argmax}}}
\newcommand{\argmin}{\mathop{\mathrm{argmin}}}
\newcommand{\etasep}{\eta_{\text{sep}}}
\newcommand{\maxweight}{W}
\newtheorem{theorem}{Theorem}
\newtheorem{proposition}[theorem]{Proposition}
\newtheorem{lemma}[theorem]{Lemma}
\newtheorem{fact}[theorem]{Fact}
\newtheorem{assumption}[theorem]{Assumption}
\newtheorem{corollary}[theorem]{Corollary}
\theoremstyle{remark}
\newtheorem{example}{Example}
\newtheorem{remark}{Remark}
\theoremstyle{definition}
\newtheorem{definition}{Definition}
\renewcommand{\geq}{\geqslant}
\renewcommand{\leq}{\leqslant}
\newcommand{\unit}{{e}}
\begin{document}

\maketitle

\begin{abstract}
Entropy games and matrix multiplication games have been recently
introduced by Asarin et al. They model the situation in which one
player (Despot) wishes to minimize the growth rate of a matrix
product, whereas the other player (Tribune) wishes to maximize it. We
develop an operator approach to entropy games.  This allows us to show
that entropy games can be cast as stochastic mean payoff games in
which some action spaces are simplices and payments are given by a
relative entropy (Kullback-Leibler divergence).  In this way, we show
that entropy games with a fixed number of states belonging to Despot
can be solved in polynomial time.  This approach also allows
us to solve these games by a policy iteration algorithm, which
we compare with the spectral simplex algorithm developed by Protasov.
\end{abstract}

\section{Introduction} 
\subsection{Entropy games and matrix multiplication games}
Entropy games have been introduced by Asarin et al.~\cite{asarin}.
They model the situation in which two players with conflicting interests,
called ``Despot'' and ``Tribune'', wish 
to minimize or to maximize a topological entropy representing the freedom
of a half-player, ``People''. 
Entropy games are special ``matrix multiplication games'', in which two players alternatively 
choose matrices in certain
prescribed sets; the first player wishes to minimize 
the growth rate of the infinite matrix product obtained
in this way, whereas the second player wishes to maximize it.
Whereas matrix multiplication games are hard in general
(computing joint spectral radii is a special case),
entropy games correspond to a tractable subclass
of multiplication games, in which the matrix sets
have the property of being invariant by row interchange,
the so called independent row uncertainty (IRU) assumption,
sometimes also called \textit{row-wise} or \textit{rectangularity} assumption.
In particular, Asarin et al.\ showed in~\cite{asarin}
that the problem of comparing the value of an entropy game to a given rational
number is in NP $\cap$ coNP, giving to entropy games a status
somehow comparable to other important classes of games with an unsettled
complexity, including mean payoff games, simple stochastic games,
or stochastic mean payoff games, see~\cite{andersson_miltersen}
for background.  

Another motivation to study entropy games arises from risk sensitive control~\cite{FHH97,FHH99,anantharam}: as we shall see, essentially the same
class of operators arise in the latter setting. 
A recent application of entropy games to the approximation
of the joint spectral radius of
nonnegative matrices (without making the IRU 
assumption) can be found in~\cite{stott}.
Other motivations originate from symbolic dynamics~\cite[Chapter 1.8.4]{lothaire}.

\subsection{Contribution}
We first show that entropy games, which were introduced
as a new class of games, 
are equivalent to a class of zero-sum mean payoff stochastic games with perfect information, in which some action spaces are simplices, and the instantaneous payments are given by a Kullback-Leibler entropy. Hence, entropy games
fit in a classical class of games, with a ``nice''
payment function over infinite action spaces.

To do so, we introduce a slightly more expressive variant
of the model of Asarin et al~\cite{asarin}, in which
the initial state is prescribed (the initial
state is chosen by a half-player, People, in the original model).
This may look like a relatively minor extension, so we keep the name ``entropy game'' for it, but this extension is essential to develop an operator
approach and derive consequences from it.
We show that the main results known for stochastic mean payoff
games with finite actions space and perfect information,
namely the existence of the
value and the existence of optimal positional strategies,
are still valid for entropy games
(Theorems~\ref{th-1} and~\ref{cor-morphism}). This is derived 
from a model theory approach of Bolte, Gaubert, and Vigeral~\cite{bolte2013},
together with the observation that the dynamic programming
operators of entropy games are definable in the real exponential field.
Then, a key ingredient is the proof of existence of Blackwell optimal policies,
as a consequence of o-minimality, see \Cref{cor-ominimal}.
Another consequence of the operator approach is the existence
of Collatz-Wielandt optimality certificates for entropy games, Theorem~\ref{th-cw}. When specialized to the one player case, this leads to a convex programming
characterization of the value, Corollary~\ref{cor-cw},
which can also be recovered from a characterization of Anantharam and Borkar~\cite{anantharam}.

Our
main result, Theorem~\ref{cor-polytime}, shows that entropy games
in which Despot has a fixed number of significant states (states with
a nontrivial choice) can be solved {\em strategically} in polynomial time,
meaning that optimal (stationary) strategies can be found in polynomial time.
Thus, entropy games are somehow similar to stochastic mean payoff games, for
which an analogous fixed-parameter tractability result holds (by reducing
the one player case to a linear program).
This approach also reveals
a fundamental asymmetry between the players Despot and Tribune:
our approach does not lead to a polynomial bound if one fixes
the number of states of Tribune.
In our proof, o-minimality arguments allow
a reduction from the two-player to the one-player case
(\Cref{cor-morphism}).
Then, the one-player case is dealt with using several
ingredients:  ellipsoid method, separation
bounds between algebraic numbers, and
results from Perron-Frobenius theory. 

The operator approach also allows one to obtain practically
efficient algorithms to solve entropy games. In this way, the
classical policy iteration of Hoffman-Karp~\cite{HoffmanKarp}
can be adapted to entropy games. We report experiments showing that when specialized to one player problems, policy iteration yields a speedup by one
order of magnitude by comparison with 
the ``spectral simplex'' method recently introduced
by Protasov~\cite{protasov}.

Let us finally complete the discussion of related works. 
The formulation of entropy games in terms
of ``classical'' mean payoff games in which the payments
are given by a Kullback-Leibler entropy builds on 
known principles in risk sensitive control~\cite{FHH99,anantharam}.
It can be thought
as a version for two player problems 
of the Donsker-Varadhan characterization of the Perron-eigenvalue~\cite{donsker}. The latter is closely related to the log-convexity property of the spectral
radius established by Kingman~\cite{kingman}.
A Donsker-Varadhan type formula for risk sensitive problems, which can be applied in particular to Despot-free player entropy games, has been recently
obtained by Anantharam and Borkar, in a wider setting allowing an infinite state space~\cite{anantharam}. In a nutshell, for Despot-free
problems, the Donsker-Varadhan formula appears to be the (convex-analytic) dual of the Collatz-Wielandt formula. 
Chen and Han~\cite{chen} developed
a related convex programming
approach to solve the entropy maximization problem for Markov chains 
with uncertain parameters. 
We also note that the present Collatz-Wielandt
approach, building on~\cite{agn}, yields 
an alternative to the approach of~\cite{asarin}
using the ``hourglass alternative'' of~\cite{kozyakin}
to produce concise certificates allowing one to bound
the value of entropy games.  By comparison with~\cite{asarin},
a essential difference is the use of o-minimality arguments:
these are needed because we study the more precise version
of the game, in which the initial state is fixed. 
Indeed, a counter example of Vigeral shows that the mean payoff
may not exist in such cases without an
o-minimality assumption~\cite{Vipreprint},
whereas the existence of the mean payoff holds universally (without restrictions
of an algebraic nature on the Shapley operator) if
one allows one player to choose the initial state, 
see e.g. Proposition~2.12 of~\cite{AGGut10}.
Finally, the identification of tractable subclasses of matrix multiplication
games can be traced back at least to the work of Blondel and Nesterov~\cite{blondel}.

\section{Entropy games}
\label{sec-genent}
An entropy game $\gammage$ is a perfect information game
played on a finite directed weighted graph $G$.
There are $2$ players, ``Despot'', ``Tribune'', and a half-player
with a nondeterministic behavior, ``People''.
The set of nodes of the graph is written as the disjoint union
$D\cup T \cup P$, where $D,T$ and $P$ represent sets of states in which Despot,
Tribune, and People play. We assume that the set of arcs $E$ is 
included in $(D\times T) \cup (T \times P) \cup (P\times D)$, meaning that 
Despot, Tribune, and People alternate their actions. A {\em weight}
$m_{pd}$,
which is a positive real number, is attached to every arc $(p,d)\in P\times D$.
All the other arcs in $E$ have weight $1$.
An initial state, $\bar{d}\in D$, is known to the players.
A token, initially in node $\bar{d}$,
is moved in the graph according to the following rule.
If the token is currently in a node $d$ belonging to $D$,
then, Despot chooses an arc $(d,t)\in E$ and moves the token
to a node $t$. 
Similarly, if the token is currently in a node 
$t\in T$, Tribune chooses an arc $(t,p)\in E$ and moves
the token to node $p$. Finally, if the token is in a node
$p\in P$, People chooses an arc $(p,d')\in E$ and moves
the token to a node $d'\in D$. We will assume that every
player has at least one possible action in each state
in which it is his or her turn to play. In other words,
for all $d\in D$, the set of actions $\{(d,t) \in E\}$ 
must be nonempty, and similar conditions apply to $t\in T$ and $p\in P$. 

A {\em history} of the game consists of a finite path in the directed graph $G$,
starting from the initial node $\bar{d}$. The {\em number of turns}
of this history is defined to be the length of this path, each arc
counting for a length of one third. 
The {\em weight} of a history
is defined to be the product of the weights of the arcs arising on this path.
For instance, a history
$(d_0,t_0,p_0,d_1,t_1,p_1,d_2,t_2)$ where
$d_i\in D$, $t_i\in T$ and $p_i\in P$, 
makes $2$ and $1/3$ turn, and its weight is $m_{p_0d_1}m_{p_1d_2}$.

A {\em strategy} of Player Despot is 
a map $\delta$ which assigns to every history 
ending in some node $d$ in $D$ 
an arc of the form $(d,t)\in E$.
Similarly,  a {\em strategy} of Player Tribune is 
a map $\tau$ which assigns 
an arc $(t,p)\in E$
to every history ending with a node $t$ in $T$.
The strategy $\delta$ is said to be {\em positional}
if it only depends on the last node $d$ which has been
visited and eventually of the number of turns.
Similarly, the strategy $\tau$ is 
said to be {\em positional} if it only depends on $t$
and eventually of the number of turns.
These strategies are in addition {\em stationary}, if they
do not depend on the number of turns.

For every integer $k$, we define as follows the {\em game in horizon $k$}
with initial state $\bar{d}$, $\gammagekdbar$. 
We assume that Despot and Tribune play according to the strategies
$\delta,\tau$. Then, People plays in a nondeterministic way. Therefore,
the pair of strategies $\delta,\tau$ allows for different
histories. The payment received by Tribune, in $k$ turns, 
is denoted by 
$R_{\bar{d}}^k(\delta,\tau)$.
It is defined as the sum of the weights of all the paths
of the directed graph $G$ of length $k$ with initial node $\bar{d}$
determined by the strategies $\delta$ and $\tau$:
each of these paths corresponds to different successive choices of People,
leading to different histories allowed by the strategies $\delta,\tau$. 
The payment received by Despot is 
defined to be the opposite of $R_{\bar{d}}^k(\delta,\tau)$,
so that the game in horizon $k$ is zero-sum.
In that way, the payment $R_{\bar{d}}^k$ measures the ``freedom'' of People,
Despot wishes to minimize it whereas Tribune wishes to maximize
it.

We say that the game $\gammagekdbar$
in horizon $k$ with initial state $\bar{d}$
{\em has the value} $V^k_{\bar{d}}$
if for all $\epsilon>0$, there is a strategy $\delta^*_\epsilon$
of Despot  such that for all strategies $\tau$ of Tribune,
\begin{align}
\epsilon+ V^k_{\bar{d}} 
\geq R^k_{\bar{d}}(\delta^*_\epsilon,\tau)  \enspace, 
\label{e-def-value0}
\end{align}
and similarly, there is a strategy $\tau^*_\epsilon$
of Tribune such that for all strategies $\delta$ of Despot,
\begin{align}
R^k_{\bar{d}}(\delta,\tau^*_\epsilon) \geq V^k_{\bar{d}}-\epsilon \enspace .
\label{e-def-value1}
\end{align}
The strategies $\delta^*_\epsilon$ and $\tau^*_\epsilon$ are said to be $\epsilon$-optimal. In other words, Despot can make sure his loss will not exceed
$V^{k}_{\bar{d}}+\epsilon$ by playing $\delta^*_\epsilon$, and Tribune can make sure
to win at least $V^{k}_{\bar{d}}-\epsilon$ by playing $\tau^*_\epsilon$.
The strategies $\delta^*$ and $\tau^*$ are optimal if they are $0$-optimal, i.e., if 
we have the saddle point property:
\begin{align}\label{e-def-value}
R^k_{\bar{d}}(\delta,\tau^*) \geq R^k_{\bar{d}}(\delta^*,\tau^*)
= V^k_{\bar{d}} 
\geq R^k_{\bar{d}}(\delta^*,\tau) \enspace ,
\end{align}
for all strategies $\delta,\tau$ of Despot and Tribune.
If the value $V^k_{\bar{d}}$ exists for all choices
of the initial state $\bar{d}$, we define the {\em value vector}
of the family of games $(\gammage^k_{d})_{d\in D}$ in horizon $k$,
to be $V^k:=(V^k_d)_{d\in D}\in \R^D$. 

We now define the {\em infinite horizon game} $\gammageinftydbar$, in which the payment received by Tribune is given by 
\[
R_{\bar{d}}^\infty(\delta,\tau):= 
\limsup_{k\to \infty} (R_{\bar{d}}^k(\delta,\tau))^{1/k}
\]
and the payment received by Despot is the opposite
of the latter payment. (The choice of limsup is somehow arbitrary,
we could choose liminf instead without affecting the results which follow.)
The {\em value} $V^\infty_{\bar{d}}$ of the infinite horizon game $\gammageinftydbar$,
and the optimal strategies in this game, are still defined by a saddle
point condition, as in~\eqref{e-def-value0}, \eqref{e-def-value1}, \eqref{e-def-value}, the payment 
$R^k_{\bar{d}}(\delta,\tau)$ being now replaced by 
$R_{\bar{d}}^\infty(\delta,\tau)$.

We denote by $V^\infty=(V^\infty_d)_{d\in D}\in \R^D$ the 
{\em value vector} of the infinite horizon games $(\gammage^\infty_{d})_{d\in D}$.

We associate to the latter games the dynamic programming operator $F: \R^D\to \R^D$, such that, for all $X\in \R^D$, and $d\in D$,
\begin{align}
F_d(X) = \min_{(d,t)\in E} \max_{(t,p)\in E} \sum_{(p,d') \in E} m_{pd'}X_{d'}\enspace .
\label{e-def-dp}
\end{align}
To relate this operator with the value of the above finite or infinite horizon
games, we shall interpret these games as zero-sum 
stochastic games with expected multiplicative criteria.
The one-player case was studied in particular
by Howard and Matheson under the name of
risk-sensitive Markov decision processes~\cite{Howard-Matheson}
and by Rothblum under the name of multiplicative Markov decision processes,
see for instance~\cite{rothblum}. %

For any node
$p\in P$, we denote by $E_p:= \{(p,d)\in E\}$ the set
of actions available to People in state $p$, and we denote by
$q_p$ the probability measure on $E_p$ obtained by normalizing the
restriction of the weight function $m$ to $E_p$:
$q_{pd}=m_{pd}/\disfac{p}$ with $\disfac{p}=\sum_{(p,d')\in E_p} m_{pd'}$.
Then, $F$ can be rewritten as 
\[ %
F_d(X) = \min_{(d,t)\in E} \max_{(t,p)\in E} 
\left (\disfac{p} \sum_{(p,d') \in E} q_{pd'}X_{d'}\right) \enspace .
\] %

A pair of strategies $\delta$ and $\tau$ of both players,
determine the stochastic process $(D_k,T_k,P_k)_{k\geq 0}$
with values in $D\times T\times P$, such that
$P(D_{k+1}=d'\mid H)=q_{pd'}$ for all $d'\in D$ 
and all histories $H$ having $k-1/3$ turns and ending in $p\in P$, and 
such that the transitions from
$D$ to $T$ and $T$ to $D$ are deterministicaly determined by the strategies
$\delta$ and $\tau$ respectively as in the above description of the entropy 
games $\gammage$.
Then, the payoff of the entropy game with horizon $k$ starting in $\bar{d}$,
$\gammagekdbar$, is equal to the following
expected multiplicative/ risk-sensitive criterion:
\[ R_{\bar{d}}^k(\delta, \tau)= \mathbb{E}\left( \disfac{P_0}\cdots \disfac{P_{k-1}}\mid D_0= \bar{d}\right)\enspace .\]

\begin{proposition}\label{prop-simple}
The value of the entropy game in horizon $k$ with initial state $d$,
$\gammage^k_d$,
does exists. The value vector $V^k$ of this game is determined by the relations
$V^0 =\unit$,  $V^k= F(V^{k-1})$, $k=1,2,\dots$,
where $\unit$ is the unit vector $(1,...,1)^{\top}$ of $\R^D$.
Moreover, there exist optimal strategies for Despot and Tribune that
are positional.
\end{proposition}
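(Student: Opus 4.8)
The plan is to proceed by backward induction on the horizon $k$, exploiting the multiplicative structure of the payment together with the alternating, perfect-information nature of the game. First I would record the natural decomposition of the payoff. Since the arc leaving $\bar{d}$ chosen by Despot and the arc leaving the resulting $T$-node chosen by Tribune both carry weight $1$, the only weight accumulated during the first full turn is the weight $m_{pd'}$ of People's arc $(p,d')$. Writing $t=\delta(\bar{d})$ and $(t,p)=\tau((\bar{d},t))$ for the first choices of the two players, and denoting by $\delta',\tau'$ the residual strategies obtained by shifting each history through the prefix $(\bar{d},t,p,d')$, every length-$k$ path from $\bar{d}$ factors as an initial segment $\bar{d}\to t\to p\to d'$ followed by a length-$(k-1)$ path from $d'$. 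Summing over People's branches yields the recursion
\begin{equation*}
R^k_{\bar{d}}(\delta,\tau) = \sum_{(p,d')\in E} m_{pd'}\, R^{k-1}_{d'}(\delta',\tau') \enspace .
\end{equation*}

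With this identity at hand, the induction is routine. For the base case $k=0$, the only history from $\bar{d}$ is the trivial one, which makes $0$ turns and has empty product equal to $1$; hence $V^0=\unit$. For the inductive step I would assume that for horizon $k-1$ the value vector $V^{k-1}$ exists and that both players admit positional optimal strategies. In the horizon-$k$ game, the residual game reached after the first turn in state $d'$ is exactly a horizon-$(k-1)$ game from $d'$, so by the inductive hypothesis it has value $V^{k-1}_{d'}$. Because Despot has finitely many arcs $(d,t)\in E$ and Tribune finitely many arcs $(t,p)\in E$, and because the players act sequentially with full information, Tribune's optimal response at a People-predecessor $p$ is to maximize $\sum_{(p,d')\in E} m_{pd'} V^{k-1}_{d'}$, while Despot chooses $t$ so as to minimize Tribune's resulting guarantee. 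Substituting into the recursion gives $V^k_{\bar{d}} = \min_{(d,t)\in E}\max_{(t,p)\in E}\sum_{(p,d')\in E} m_{pd'} V^{k-1}_{d'} = F(V^{k-1})_{\bar{d}}$, the claimed relation.

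The positional optimal strategies are then read off from the same argument: at a node $d$ with $j$ turns remaining Despot plays an arc attaining the outer minimum in $F(V^{j-1})_d$, and at a node $t$ with $j$ turns remaining Tribune plays an arc attaining the inner maximum. These choices depend only on the current node and on the number of remaining turns, hence are positional. The saddle-point inequalities \eqref{e-def-value} follow because the operators $\min$ and $\max$ in $F$ appear in the same order as the players move: against any $\delta$, playing $\tau^*$ yields $\max$ over Tribune's arcs given Despot's move, which is at least $\min_t\max_p[\cdots]=V^k$, whereas against any $\tau$, playing $\delta^*$ yields at most $\max_p[\cdots]$ at Despot's optimal $t$, which equals $V^k$. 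Propagating these guarantees through the recursion by a second, straightforward induction certifies the saddle point.

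The main obstacle I anticipate is purely bookkeeping rather than conceptual: making the residual-strategy construction $\delta\mapsto\delta'$, $\tau\mapsto\tau'$ precise, so that the multiplicative decomposition holds for arbitrary history-dependent strategies of both players, and then verifying that a player who deviates from the positional prescription cannot improve on the value certified by the backward induction. The alternating structure $E\subseteq (D\times T)\cup(T\times P)\cup(P\times D)$ guarantees that the outer minimum and inner maximum in $F$ range over disjoint finite action sets, so no attainment or measurability issues arise; the only care needed is the careful propagation of guarantees against non-positional opponents.
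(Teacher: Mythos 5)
Your proof is correct and follows essentially the same route as the paper's: a classical backward-induction / dynamic-programming argument that defines $V^k$ by the recursion, reads off positional candidate strategies from the minimizers and maximizers, and verifies the saddle point by propagating guarantees against arbitrary history-dependent opponents. The only cosmetic difference is that you carry out the one-player verification directly via the multiplicative decomposition of $R^k_{\bar d}(\delta,\tau)$, whereas the paper outsources that step to the known dynamic programming principle for multiplicative (risk-sensitive) Markov decision processes after fixing one player's candidate strategy.
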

\begin{proof}
This result follows from a classical dynamic programming argument.
Indeed, in the one player case, that is when there is only one choice of
$\delta$ or one choice of $\tau$, that is when the operator $F$ contains
only a ``min'' or a ``max'', the game is in the class
of Markov Decision Problems with multiplicative criterion and 
the Dynamic Programming Principle has already been proved 
in this setting in~\cite{Howard-Matheson,rothblum},
see also~\cite[Th.~1.1, Chap 11]{whittle}.
This shows that the game has a value
which satifies $V^k=F(V^{k-1})$ and $V^0=e$, and that an optimal
strategy is obtained using these equations. For instance for a ``max''
(when Despot has only one choice), Tribune chooses any action 
$(t,p)$ attaining the maximum in
\[
 \max_{(t,p)\in E} \sum_{(p,d') \in E} m_{pd'}V^{k-1}_{d'} =
 \max_{(t,p)\in E}  \left (\disfac{p} \sum_{(p,d') \in E}q_{pd'}V^{k-1}_{d'} \right) \enspace .
\]
The resulting strategy $\tau^*$ is positional 
and it is optimal among all strategies
$\tau$. A similar result holds for a ``min'', leading
to a positional strategy $\delta^*$ for Despot.

Let us now consider the general two-player case.
Define the sequence of vectors $V^k$ by
\begin{align}
V^k_d = \min_{(d,t)\in E} 
\max_{(t,p)\in E} \sum_{(p,d') \in E} m_{pd'}V^{k-1}_{d'}\enspace .
\label{e-dp0}
\end{align}
with $V^0_d=1$, for all $d\in D$.
We construct candidate strategies $\delta^*$ and $\tau^*$, depending
on the current position and number of turns, as follows. 
In state $d$, if there remains $k$ turns to be played, Despot selects
an action $(d,t)$ achieving the minimum in~\eqref{e-dp0}.
We denote by $\delta^*(d,k)$ the value of $t$ such that $(d,t)$
is selected.
 In state $t$,
if there remains $k-1/3$ turns to be played, Tribune chooses any action
$(t,p)$ attaining the maximum in
\[
 \max_{(t,p)\in E} \sum_{(p,d') \in E} m_{pd'}V^{k-1}_{d'} \enspace .
\]
Now, if Player Despot plays according to $\delta^*$, we
obtain a reduced one player game. It follows from 
the same dynamic programming principle as above
(applied here to time dependent transition probabilities $q$
and factors $\disfac{\cdot}$)
that the value vector $V^{\delta^*,k}$ of this reduced
game in horizon $k$ does exist and satisfies the recursion
\[
V^{\delta^*,k}_d = \max_{(\delta^*(d,k),p)\in E} \left(\disfac{p}\sum_{(p,d') \in E} q_{pd'}V^{\delta^*,k-1}_{d'}\right)\enspace ,
\]
with $V^{\delta^*,0}_d=1$, for all $d\in D$. 
Since $V^{\delta^*,k}$ is the value, we have
$V^{\delta^*,k}_d\geq R_{d}^k(\delta^*, \tau)$
for all strategies $\tau$ of Tribune.
Noting that $V^{\delta^*,k}_d=V^{k}_d$
by definition of $\delta^*$, we deduce that Despot, by playing $\delta^*$,
can guarantee that his loss in the horizon $k$ game starting from state
$d$ will not exceed $V^k_d$. A dual argument
shows that by playing $\tau^*$, Tribune can guarantee that his
win will be at least $V^k_d$.
\end{proof}

\begin{example}\label{ex-1}
Consider the entropy game whose graph and dynamic programming operator
are given by:
\begin{center}
\begin{tabular}{cc}
\ \ \ \ \ \includegraphics[scale=0.07]{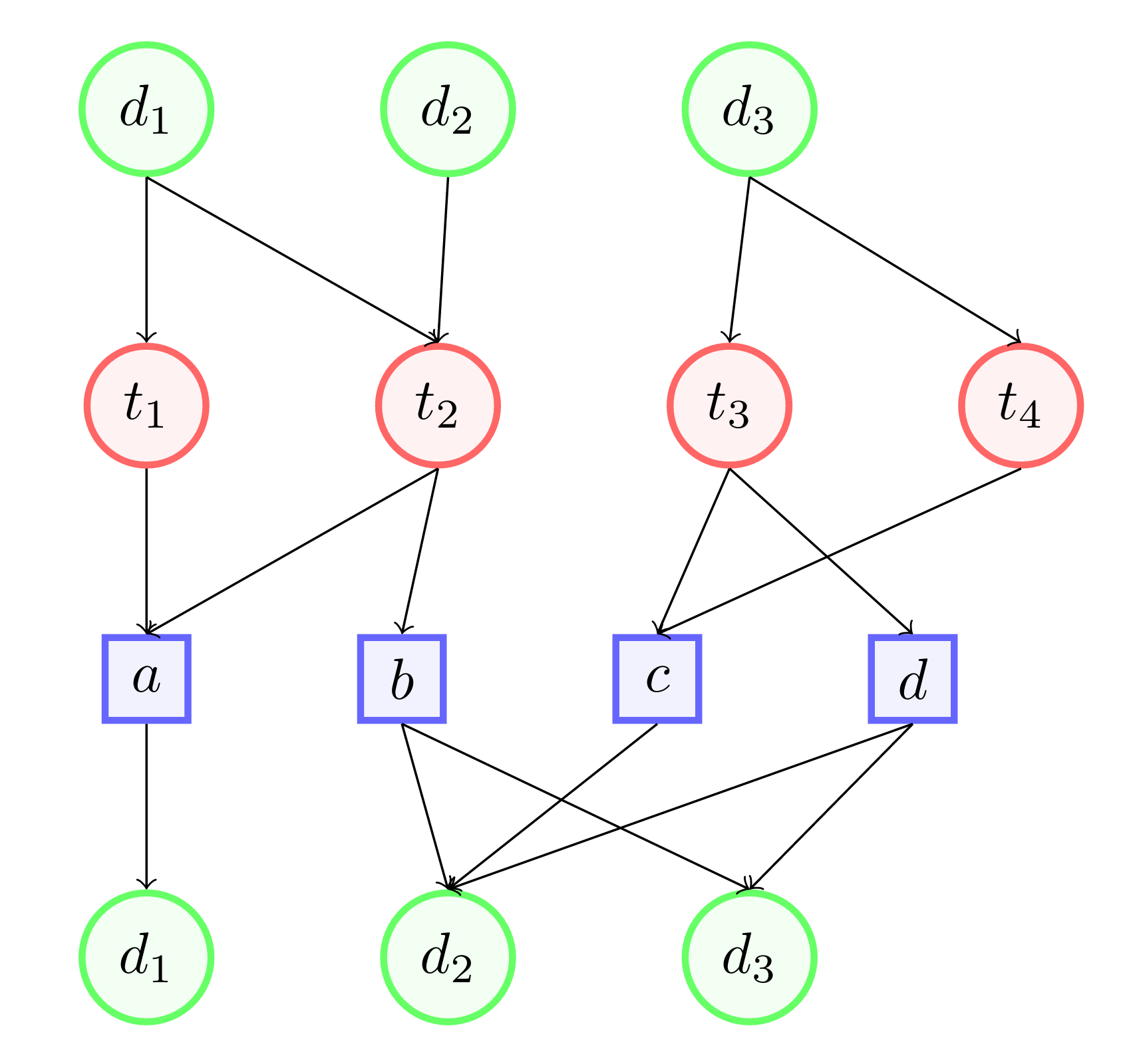}&
\ \ \ \ \ \ \ \ \ \ \begin{minipage}[b]{7cm}
$F_{1}(X) = \min\big(X_{1}, \max(X_{1},X_{2}+X_{3}) \big)$,\\ \ \\ 
$F_{2}(X) = \max\big(X_{1},X_{2}+X_{3} \big)$, \\ \ \\
$F_{3}(X) = \min \big( \max (X_{2}, X_{2} + X_{3} ) , X_2\big)$.\\ \ \\ 
\ \\
\end{minipage}
\end{tabular}
\end{center}
For readability, the states of Despot are shown twice on the picture.
Here, $D = \{ d_{1}, d_{2},d_{3} \}$, $T = \{t_{1},t_{2},t_{3},t_{4} \}$, $ P = \{a,b,c,d\}$, 
$E = \{ (d_{1},t_{1})$, $(d_{1},t_{2})$, $(d_{2},t_{2})$, $(d_{3},t_{3})$, $(d_{3},t_{4})$, $(t_{1},a)$, $(t_{2},a)$, $(t_{2},b)$, $(t_{3},c)$, $(t_{3},d)$, $(t_{4},c)$,
$(a,d_1)$, $(b,d_2)$, $(b,d_3)$, $(c,d_2)$, $(d,d_2)$, $(d,d_3)\}$, and all
the weights are equal to $1$, i.e.,
$m_{p d_i}=1$ for all $p\in P$ and $1\leq i\leq 3$ such that $(p,d_i)\in E$. 

One can check that 
$V^k=(1,\phi_{k+1},\phi_{k})$, where $\phi_0=\phi_1=1$ and $\phi_{k+2}=\phi_k+\phi_{k+1}$
is the Fibonacci sequence. 
As an application of Theorem~\ref{th-1} below, it can be checked
that the value vector of this entropy game is
$V^{\infty} = (1, \varphi,\varphi)$ where $\varphi:=(1+\sqrt{5})/2$
is the golden mean.
\end{example}

\section{Stochastic mean payoff games with Kullback-Leibler payments}
\label{sec-equivalence}

We next show that entropy games are equivalent
to a class of stochastic mean payoff games in which some action spaces
are simplices, and payments are given by a Kullback-Leibler divergence.

To the entropy game $\gammage$,
we associate a stochastic zero-sum 
game with Kullback-Leibler payments, denoted $\gammampg$ and
defined as follows, referred to as ``Kullback-Leibler game''
for brevity. This new game is played
by the same players, Despot, and Tribune,
on the same weighted directed graph $G$ (so with same 
sets $E,P,D$ and same weight function $m$).
The nondeterministic half-player, People, will be replaced
by a standard probabilistic half-player, Nature.

For any node
$p\in P$, recalling that $E_p:= \{(p,d)\in E\}$ is the set
of actions available to People in state $p$, we denote by
$\Delta_p$ the set of probability measures on $E_p$.
Therefore, an element of $\Delta_p$ can be identified
to a vector $\vartheta=(\vartheta_{pd})_{(p,d)\in E_p}$ with nonnegative
entries and sum $1$. The admissible actions of Despot and Tribune 
in the states $d\in D$ and $t\in T$ are the same in the game
$\gammampg$ and in the entropy game $\gammage$. However,
the two games have different rules when the state $p\in P$
belongs to the set of People's states.
Then, Tribune is allowed to play again, by selecting
a probability measure $\vartheta\in \Delta_p$; in other words,
Tribune plays twice in a row, selecting first an arc $(t,p)\in E$,
and then a measure $\vartheta \in \Delta_p$. 
Then, Nature chooses the next state $d$
according to probability $\vartheta_{pd}$, and Tribune receives
the payment $-S_p(\vartheta;m)$, where $S_p(\vartheta;m)$ is the 
relative entropy or Kullback-Leibler divergence between
$\vartheta$ and the measure obtained by 
restricting the weight function $m$ to $E_p$:
\begin{equation}\label{KLdiv}
S_p(\vartheta; m) := \sum_{(p,d)\in E_p} \vartheta_{pd}\log (\vartheta_{pd}/m_{pd})
\enspace .
\end{equation}
Therefore, using the notations of Section~\ref{sec-genent},
we get that %
\[S_p(\vartheta; m)
= -\log\disfac{p}+\sum_{(p,d)\in E_p} \vartheta_{pd}\log (\vartheta_{pd}/q_{pd}) \]
 is minimal when the chosen probability distribution $\vartheta$ on $E_p$ is equal to the probability distribution $q_p$ of the transitions from state $p$ in the stochastic game defined in Section~\ref{sec-genent}. %
Recall that relative entropy is related to information theory and statistics~\cite{kullback}.
An interesting special case arises when $m\equiv 1$, 
as in~\cite{asarin}, thus $q_p$ is the uniform distribution on $E_p$. Then, 
$S_p(\vartheta;m)=S_p(\vartheta):= \sum_{(p,d)\in E_p} \vartheta_{pd} \log \vartheta_{pd}$
is nothing but the Shannon entropy of $\vartheta$.

A history in the game $\gammampg$ now consists of a
finite sequence $(d_0,t_0,p_0,\vartheta_0, d_1$, $t_1$, $p_1,\dots)$,
which encodes both the states and actions which have been chosen.
A strategy $\delta$ of Despot is still a function which associates
to a history ending in a state in $d$ an arc $(d,t)$  in $E$. 
A strategy of Tribune has now two components $(\tau,\pi)$, 
$\tau$ is a map which assigns to a history ending in a state in $t$ an arc
$(t,p)\in E$, as before, whereas $\pi$ assigns to the same history
and to the next state $p$ chosen according to $\tau$
a probability measure on $\Delta_p$. 

To each history corresponds a path in $G$, obtained by ignoring the 
occurrences of probability measures. For instance,
the path corresponding to the history $h=(d_0,t_0,p_0,\vartheta_0, d_1,t_1,p_1)$
is $(d_0,t_0,p_0, d_1,t_1,p_1)$.
Again, the number of turns of a history is defined as
the length of this path, each arc counting for $1/3$.
So the number of turns of $h$ is $1$ and $2/3$.
Choosing strategies $\delta$ and $(\tau,\pi)$ of both players
and fixing the initial state $d_0=\bar{d}$
determines a probability measure on the space of histories $h$.
We denote by 
\[ r^k_{\bar{d}}(\delta,(\tau,\pi)):= -\mathbb{E}\left(S_{p_0}(\vartheta_0;m)+\dots + S_{p_{k-1}}(\vartheta_{k-1};m)\right)
\]
the expectation
of the payment received by Tribune, in $k$ turns,
with respect to this measure, where $S_p$ is as in~\eqref{KLdiv} and 
$m$ is the weight function of the graph of the game.
We denote by $\valueik{\bar{d}}{k}$
the {\em value} of the game in horizon $k$,
with initial state $\bar{d}$, 
and we denote by $\valuek{k}=(\valueik{{d}}{k})_{d\in D}$ the {\em value vector}.
As in the case of entropy games, we shall use subscripts
and superscripts to indicate special versions
of the game, e.g., $\gammampg^k_d$ refers to the game in horizon
$k$ with initial state $d$. Note also our convention to use lowercase letters
(as in $\valueik{{d}}{k}$) to refer to the game with Kullback-Leibler payments,
whereas we used uppercase letters (as in $V^k_{{d}}$) to refer to the entropy game. 

It will be convenient to consider more special games in which the actions
of one of the players are restricted. We will call
{\em policy} of Despot a stationary positional strategy of this player, 
i.e., a map which assigns to every node $d\in D$ a node
$\delta(d)=t\in T$ such that $(d,t)\in E$.
Similarly, we will call {\em policy} of Tribune 
a map which assigns to every node $t\in T$
a node $\tau(t)=p\in P$ such that $(t,p)\in E$.
Observe, in this definition of policy, the symmetry between Despot and Tribune, while the game is asymetric: the policy $\tau$ is not enough to determine a positional
strategy of Tribune, because the probability distribution at every
state $p\in P$ is not specified by the policy $\tau$.
The set of policies of Despot and Tribune are denoted
by $\cP_D$ and $\cP_T$, respectively.

If one fixes
a policy $\delta$ of Despot, we end up
with a reduced game $\gammampg^k(\delta,\ast)$ in which only Tribune has actions. 
We denote by $\valuek{k}(\delta,\ast)=(\valuek{k}_d(\delta,\ast))_{d\in D}\in \R^D$ the value vector of this game in horizon $k$. Similarly, if
one fixes a policy $\tau$ of Tribune, we obtain
a reduced game denoted by $\gammampg^k(\ast,(\tau,\ast))$,  
in which Despot plays when the state is in $D$, Tribune
selects an action according to the policy $\tau$ when the state
is in $T$, and Tribune plays when the state is in $P$.
The value vector of this reduced game is denoted by
$\valuek{k}(\ast,(\tau,\ast))=(\valuek{k}_d(\ast,(\tau,\ast)))_{d\in D}\in \R^D$.
We also denote by $\valuek{k}(\delta,(\tau,\ast))=(\valuek{k}_d(\delta,(\tau,\ast)))_{d\in D}\in \R^D$ the value of the reduced game in which both policies
$\delta$ of Despot and $\tau$ of Tribune are fixed, which means that 
only Tribune plays when the state is in $P$.
The systematic character of notation used here should be self explanatory: the symbol $\ast$ refers to the actions which are not fixed by the policy.

We also consider the {\em infinite horizon}
or {\em mean payoff} game $\gammampginfty$, in which
the payment of Tribune is now
\[ r^\infty _{\bar{d}}(\delta,(\tau,\pi))
:= \limsup_{k\to \infty} k^{-1} r^k_{\bar{d}}(\delta,(\tau,\pi))\enspace .
\]
For $0<\alpha<1$, we also consider the {\em discounted game} $\gammampgd$
with a discount factor $\alpha$, in which the payment
of Tribune is 
\[ {}^{\alpha}r _{\bar{d}}(\delta,(\tau,\pi))
:= 
 -\mathbb{E}\left(S_{p_0}(\vartheta_0;m)+
\alpha  S_{p_1}(\vartheta_1;m)+ 
\alpha^2  S_{p_2}(\vartheta_2;m)+ \cdots \right)
\]
The value of the mean payoff game is denoted by 
$v^\infty_{\bar{d}}$, whereas the value of the discounted
game is denoted by ${}^{\alpha}v_{\bar{d}}$. 
As above, we denote by $\gammampginfty(\delta,\ast)$
and $\gammampginfty(\ast,(\tau,\ast))$ the games restricted
by the choice of policies $\delta,\tau$, 
and use an analogous notation for the corresponding
value vectors. For instance, $^{\alpha}v(\ast,(\tau,\ast))$ refers
to the value vector of the game ${}^\alpha\gammampg(\ast,(\tau,\ast))$
with a discount factor $\alpha$.
We define the notion of value, as well as the notion of optimal
strategies, by saddle point conditions, as in Section~\ref{sec-genent}.

 The following dynamic programming principle entails that
the value of the stochastic game with Kullback-Leibler payments
in horizon $k$ is the log of the value of the entropy game.
\begin{proposition}
\label{prop-dp}
The value vector $v^k=(v^k_d)_{d\in D}$ of the Kullback-Leibler game
in horizon $k$ does exist. 
It is determined
by the relations
$v^0=0$,  $v^k=f(v^{k-1})$, $k=1,2,\ldots$,
where 
\begin{align}\label{e-shapley}
f_d(x) = \min_{(d,t)\in E} \max_{(t,p)\in E} 
\log\left(\sum_{(p,d') \in E} m_{pd'}\exp(x_{d'})\right) \enspace ,
\end{align}
and we have $v_{d}^k = \log V_d^k$. 
\end{proposition}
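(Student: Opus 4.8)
The plan is to prove the statement by backward induction on the horizon $k$, exactly as in the proof of Proposition~\ref{prop-simple}, the only genuinely new ingredient being the continuum of actions $\vartheta\in\Delta_p$ now available to Tribune at the states $p\in P$. The crux is a one-step variational identity of Donsker--Varadhan / log-sum-exp type, which shows that when Tribune optimizes his choice of $\vartheta$ against a continuation value, the resulting payoff is precisely the log-sum-exp appearing in~\eqref{e-shapley}. Once this identity is in hand, the finite-horizon dynamic programming principle assembles the operator $f$, and the relation $v^k=\log V^k$ follows by comparison with Proposition~\ref{prop-simple}.

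First I would establish the key identity: for every $p\in P$ and every $x\in\R^D$,
\[
\max_{\vartheta\in\Delta_p}\Big(\sum_{(p,d)\in E_p}\vartheta_{pd}\,x_{d}-S_p(\vartheta;m)\Big)
=\log\Big(\sum_{(p,d)\in E_p} m_{pd}\exp(x_{d})\Big)\enspace,
\]
the maximum being attained at $\vartheta_{pd}^\ast=m_{pd}\exp(x_d)/\sum_{(p,d')\in E_p}m_{pd'}\exp(x_{d'})$. This is the standard convex duality between the log-partition function and relative entropy, checked by substituting $\vartheta^\ast$ into the left-hand side. Two points deserve care. The objective $\vartheta\mapsto\sum\vartheta_{pd}x_d-S_p(\vartheta;m)$ is continuous on the \emph{closed} simplex $\Delta_p$, because $t\log t\to 0$ as $t\to 0^+$, so $S_p$ extends continuously to the boundary; compactness of $\Delta_p$ then guarantees that the maximum is attained, which is exactly what makes the value well defined. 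Concavity of $-S_p$ moreover forces the maximizer to be unique.

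Then I would run the backward induction. Since the game $\gammampg$ has perfect information and the players move in alternation, no minimax theorem is needed: at each step the relevant player optimizes against the already-computed continuation value. With $v^0=0$ (no turns, no payment), suppose $v^{k-1}$ is the value vector in horizon $k-1$. At a state $p\in P$ with continuation value $v^{k-1}$, Tribune chooses $\vartheta$ to maximize $-S_p(\vartheta;m)+\sum_{(p,d')\in E_p}\vartheta_{pd'}v^{k-1}_{d'}$, which by the identity above equals $\log\big(\sum_{(p,d')\in E_p}m_{pd'}\exp(v^{k-1}_{d'})\big)$; taking the maximum over Tribune's arcs $(t,p)$ and then the minimum over Despot's arcs $(d,t)$ produces exactly $f_d(v^{k-1})$, so $v^k=f(v^{k-1})$, and the optimal positional actions are read off from the minimizing and maximizing choices. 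Finally, since $\log$ is increasing it commutes with $\min$ and $\max$, giving $f=\log\circ F\circ\exp$ coordinatewise; as $v^0=0=\log\unit=\log V^0$ and $V^k=F(V^{k-1})$ by Proposition~\ref{prop-simple}, an immediate induction yields $v^k=\log V^k$. The main obstacle is not the induction, which is routine, but the correct handling of the one-step identity on the boundary of $\Delta_p$: establishing continuity there is what ensures the supremum is a genuine maximum, so that the value exists and optimal strategies are attained.
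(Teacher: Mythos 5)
Your proposal is correct and follows essentially the same route as the paper: the paper likewise reduces the claim to the standard finite-horizon dynamic programming equation and then invokes the Legendre--Fenchel duality between log-sum-exp and relative entropy (its Lemma~\ref{lemma-rw}) to identify the inner maximization over $\Delta_p$ with $\log\bigl(\sum_{(p,d')\in E}m_{pd'}\exp(x_{d'})\bigr)$, concluding $f=\log\circ F\circ\exp$ and hence $v^k=\log V^k$. Your extra care about continuity of $S_p$ on the boundary of the closed simplex and attainment of the maximum is a welcome detail the paper leaves implicit.
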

In order to prove~\Cref{prop-dp}, we recall the following
classical result in convex
analysis 
showing
that the ``log-exp''function is the 
Legendre-Fenchel
transform of Shannon entropy. 
\begin{lemma}\label{lemma-rw}
The function $x\mapsto \log(\sum_{1\leq i\leq n} e^{x_i})$
is convex and it satisfies
\[
\log\left(\sum_{1\leq i\leq n} e^{x_i}\right) = \max \sum_{1\leq i\leq n} \vartheta_i (x_i - \log \vartheta_i); \qquad \vartheta_i \geq 0 , \; 1\leq i\leq n, \;
\sum_{1\leq i\leq n} \vartheta_j =1 \enspace .
\]
\end{lemma}
This result is mentioned in~\cite{RockafellarWets}, Example 11.12.
This convexity property is a
special instance of the general fact that the log of the Laplace
transform of a positive measure is convex (which follows
from the Cauchy-Schwarz inequality), whereas the explicit expression
as a maximum follows from a straightforward computation (apply Lagrange multipliers rule).

\begin{proof}[Proof of \Cref{prop-dp}]
For a zero-sum game with finite horizon and additive criterion, the existence
of the value is a standard fact, proved in a way similar
to \Cref{prop-simple}. The value vector
$v^k$ satisfies the following dynamic programming equation
\begin{align}
v^k_d &= 
\min_{(d,t)\in E} \max_{(t,p)\in E} 
\max_{\vartheta\in \Delta_p} 
\left(-S_p(\vartheta;m)
+ \langle \vartheta, v^{k-1}\rangle
\right) \enspace ,
\label{e-dp-kl}
\end{align}
where $\langle \vartheta,x\rangle = \sum_{(p,d')\in E_p} \vartheta_{pd'}x_{d'}$
for $x\in \R^D$, and $v^0_d=0$. 
By~\Cref{lemma-rw}, 
\begin{align*}
\log\left(\sum_{(p,d') \in E} m_{pd'}\exp(x_{d'})\right) 
&= \log\left(\sum_{(p,d') \in E_p} \exp(x_{d'}+\log m_{pd'})\right) \\
& = \max_{\vartheta\in \Delta_p} \sum_{(p,d')\in E_p} \vartheta_{pd'}(x_{d'}+\log m_{pd'}- \log\vartheta_{pd'})\\
&= \max_{\vartheta\in \Delta_p} 
\left(-S_p(\vartheta;m)
+ \langle \vartheta, x\rangle
\right)
\end{align*}
and so, ~\eqref{e-dp-kl} can be rewritten as $v^k=f(v^{k-1})$ where $f$ is given by~\eqref{e-shapley}.  Observe that the operator $f$ is the conjugate of the 
operator $F$ of the original entropy game: $f=\log \circ F \circ \exp$. It
follows that $v^k=f^k(v^0)=\log F^k(V^0)=\log V^k$, where for a vector $Y \in (\R^{*}_{+})^{D}$ the notation $`\log(Y)'$ denotes the vector $(\log(Y_{i}))_{1 \leq i \leq D}$, and $\exp:=\log^{-1}$. 
\end{proof}

The map $f$ arising in~\eqref{e-shapley} is obviously
order preserving and it commutes with the addition of a constant, meaning that
$f(x+\lambda \unit) = f(x) + \lambda \unit$ where $\unit$ is the unit vector $(1,...,1)^{\top}$ of $\R^{D}$,
and $\lambda \in \R$. Any map
with these two properties is nonexpansive in the sup-norm,
meaning that $\|f(x)-f(y)\|_\infty\leq \|x-y\|_\infty$,
see~\cite{crandall}. Hence, the map $x\mapsto f(x\alpha)$ has
a unique fixed point. For discounted games, the existence of the value
and of optimal positional strategies is a known fact:
\begin{proposition}\label{prop-discounted}
The discounted game $\gammampgd$ with discount factor $0<\alpha<1$ has a value and it admits
optimal strategies that are positional and stationary. The value
vector $\valued$ is the unique solution of $\valued=f(\valued\alpha)$. 
\end{proposition}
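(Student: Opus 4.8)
The plan is to use a verification argument based on the discounted dynamic programming equation. Write $v:=\valued$ for the unique fixed point of the $\alpha$-contraction $x\mapsto f(\alpha x)$, whose existence and uniqueness were already recorded above. Applying \Cref{lemma-rw} with the substitution $y=\alpha v$, exactly as in the proof of \Cref{prop-dp}, the fixed-point relation $v=f(\alpha v)$ unfolds into the discounted Bellman equation
\begin{equation}\label{e-disc-bellman}
v_d=\min_{(d,t)\in E}\max_{(t,p)\in E}\max_{\vartheta\in\Delta_p}\bigl(-S_p(\vartheta;m)+\alpha\langle\vartheta,v\rangle\bigr),\qquad d\in D .
\end{equation}
I would first record that the per-stage reward is uniformly bounded: since $\Delta_p$ is compact and $\vartheta\mapsto S_p(\vartheta;m)$ is continuous (with the convention $0\log 0=0$), there is a constant $C$ with $|S_p(\vartheta;m)|\leq C$ for all $p$ and $\vartheta\in\Delta_p$. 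Together with the boundedness of $v$, this guarantees that the series defining ${}^{\alpha}r_{\bar d}$ converges absolutely and that the telescoping tails used below vanish.

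Next I would treat Despot's side. Let $\delta^*\in\cP_D$ be the stationary positional policy selecting, in each state $d$, an arc $(d,\delta^*(d))$ attaining the outer minimum in~\eqref{e-disc-bellman}. Fix an arbitrary Tribune strategy $(\tau,\pi)$ and consider the induced process $(D_k,T_k,P_k,\vartheta_k)_{k\geq 0}$ with $D_0=\bar d$, writing $\mathcal F_k$ for the information available up to turn $k$. Since $\delta^*$ realizes the outer minimum, dropping the remaining maxima over $(t,p)$ and over $\vartheta$ in~\eqref{e-disc-bellman} gives, along every trajectory, $v_{D_k}\geq -S_{P_k}(\vartheta_k;m)+\alpha\langle\vartheta_k,v\rangle$, hence $-S_{P_k}(\vartheta_k;m)\leq v_{D_k}-\alpha\,\mathbb{E}[v_{D_{k+1}}\mid\mathcal F_k]$. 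Multiplying by $\alpha^k$, summing for $k=0,\dots,N$ and taking expectations, the right-hand side telescopes to $v_{\bar d}-\alpha^{N+1}\mathbb{E}[v_{D_{N+1}}]$; letting $N\to\infty$ the boundary term vanishes, yielding ${}^{\alpha}r_{\bar d}(\delta^*,(\tau,\pi))\leq v_{\bar d}$, so Despot guarantees at most $v_{\bar d}$.

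The dual inequality is symmetric. I would let $\tau^*\in\cP_T$ select, for each $t$, an arc $(t,\tau^*(t))$ attaining the inner maximum in~\eqref{e-disc-bellman}, and let $\pi^*$ play at each resulting state $p$ the maximizer supplied by \Cref{lemma-rw}, namely the Gibbs distribution $\vartheta^*_{pd'}=m_{pd'}\exp(\alpha v_{d'})/\sum_{(p,d'')\in E}m_{pd''}\exp(\alpha v_{d''})$. Because the outer operation in~\eqref{e-disc-bellman} is a minimum over Despot's arcs, for whatever arc Despot selects one still has $v_{D_k}\leq -S_{P_k}(\vartheta^*;m)+\alpha\langle\vartheta^*,v\rangle$, whence $-S_{P_k}(\vartheta^*;m)\geq v_{D_k}-\alpha\,\mathbb{E}[v_{D_{k+1}}\mid\mathcal F_k]$; the same discounted telescoping produces ${}^{\alpha}r_{\bar d}(\delta,(\tau^*,\pi^*))\geq v_{\bar d}$ for every Despot strategy $\delta$. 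Combining the two bounds shows that the saddle-point value exists and equals $\valued_{\bar d}$, and that $\delta^*$ and $(\tau^*,\pi^*)$ are optimal positional stationary strategies.

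The routine part is the two symmetric telescoping computations; the step I expect to require the most care is the passage to the infinite horizon, that is, justifying the interchange of the discounted sum with the expectation and the vanishing of the tail $\alpha^{N+1}\mathbb{E}[v_{D_{N+1}}]$ against arbitrary history-dependent opponent strategies. This is precisely what the uniform bound on $S_p$ and the boundedness of $v$ secure. A secondary point is the two-stage nature of Tribune's move, first the arc $(t,p)$ and then the measure $\vartheta$, which is handled cleanly because \Cref{lemma-rw} furnishes the optimal $\vartheta^*$ in closed form.
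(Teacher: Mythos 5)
Your proof is correct and follows the same route the paper takes: the paper's proof simply cites the existence and characterization of the discounted value as standard and notes that optimal positional stationary strategies are obtained by selecting the minimizing and maximizing actions in each coordinate of $f(\valued\alpha)$, which is exactly the verification argument you carry out in full (Bellman equation via \Cref{lemma-rw}, greedy selectors, discounted telescoping with the bounded per-stage entropy). Your write-up just supplies the details the paper delegates to the references.
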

\begin{proof}
The existence and the characterization of the value are standard results, see e.g.\ the discussion in~\cite{neymansurv}. It is also known that the optimal strategies are obtained by selecting actions of the player attaining the minimum and maximum when evaluating every coordinate of $f(\valued\alpha)$, in a way similar
to the proof of \Cref{prop-simple}, $V^{k-1}$ there being replaced
by ${}^\alpha v$. Since  ${}^\alpha v$ does not depend on the number
of turns, the optimal strategies are also stationary.
\end{proof}

Nonexpansive maps can be considered more generally with respect
to an arbitrary norm. In this setting, the issue of the existence of the limit of $v^k/k= f^k(v^0)/k$ as $k\to\infty$,  and of the limit of $(1-\alpha)(\valued)$, as $\alpha\to 1^-$, where $\valued$ is the unique fixed
point of $x\mapsto f(x \alpha)$, has received much attention. The former limit is sometimes
called {\em escape rate} vector. Nonexpansiveness implies that the set
of accumulation points of the sequence $v^k/k$ is independent
of the choice of $v^0$, but it does not suffice
to establish the existence of the limit;
some additional ``tameness'' condition on the map
$f$ is needed.
Indeed, a result of Neyman~\cite{neymansurv}, using
a technique of Bewley and Kohlberg~\cite{BewlKohl76}, 
shows that the two limits
$\lim_{k\to\infty} f^k(v^0)/k$ and $\lim_{\alpha\to 1^-} (1-\alpha)\valued$
do exist and coincide if $f$ is semi-algebraic.
More generally,
Bolte, Gaubert and Vigeral~\cite{bolte2013} showed that the same
limits still exist and coincide if the nonexpansive mapping $f$ is definable in an o-minimal structure.  A counter example of Vigeral shows that the
latter limit may not exist, even if the action spaces
are compact and the payment and transition probability functions are continuous, so the o-minimality assumption is essential in what follows~\cite{Vipreprint}.

In order to apply this result, let us recall the needed definitions,
referring to~\cite{Dries98,vandendriessurvey} for background.
An {\em o-minimal structure}
consists, for each integer $n$, of a family
of subsets of $\R^n$. A subset of $\R^n$ is said to be 
{\em definable} with respect to this structure if it belongs
to this family. It is required that definable
sets are closed under the Boolean operations,
under every projection map (elimination of one variable) from $\R^n$
to $\R^{n-1}$, and under the lift,
meaning if $A\subset \R^n$ is definable, then $A\times \R\subset\R^{n+1}$ 
and $\R\times A\subset\R^{n+1}$ are also definable.
It is finally required that when $n=1$, definable subsets are precisely finite unions of intervals. A function $f$ from $\R^n$ to $\R^{k}$ is said to
be {\em definable} if its graph is definable.

An important example of o-minimal structure
is the {\em real exponential field} $\mathbb{R}_{\text{alg},\text{exp}}$.
The definable sets in this structure
are the {\em subexponential sets}~\cite{vandendriessurvey}, i.e., the images
under the projection maps $\R^{n+k}\to\R^n$ of
the {\em exponential sets} of $\R^{n+k}$, the latter being
sets of the form $\{x\mid P(x_1,\dots,x_{n+k},e^{x_1},\dots,e^{x_{n+k}})=0\}$
where $P$ is a real polynomial. A theorem of Wilkie~\cite{wilkie}
implies that $\mathbb{R}_{\text{alg},\text{exp}}$ is o-minimal,
see~\cite{vandendriessurvey}. Observe in particular that the set 
$\{x\in \R^2\mid x_1\leq x_2\}$ is definable
in this structure, being the projection of $\{x\in \R^3\mid x_2-x_1=x_3^2\}$. 
Using the o-minimal character of this structure, this 
implies that definable maps are stable by the operations of pointwise
maximum and minimum. We deduce the following 
key fact.
\begin{fact}\label{new-fact}
The dynamic programming operator $f$ of the Kullback-Leibler game, defined
by~\eqref{e-shapley}, is definable in the real exponential field.
\hfill\qed
\end{fact}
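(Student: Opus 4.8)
The plan is to verify that $f$ is assembled from the generators of the real exponential field $\mathbb{R}_{\text{alg},\text{exp}}$ by operations that all preserve definability, so that no argument beyond the closure properties of o-minimal structures is needed.

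First I would record the two elementary functions we need. The exponential $x\mapsto e^x$ is, by definition, definable in $\mathbb{R}_{\text{alg},\text{exp}}$, and its inverse $\log$ on $(0,\infty)$ is definable as well, since its graph $\{(x,y)\mid x=e^y,\ x>0\}$ is an exponential set, hence subexponential. Definable functions are closed under composition, because the graph of a composite is obtained from the graphs of its factors by a product followed by a projection, and projections preserve definability. Moreover, finite sums and multiplications by the constants $m_{pd'}$ are polynomial operations, hence definable, provided these weights are themselves definable constants; this holds for rational or algebraic $m_{pd'}$, which is the relevant case for the complexity statements that follow. Consequently, for each arc $(t,p)$, the inner map
\[
x\longmapsto \log\Bigl(\sum_{(p,d')\in E} m_{pd'}\exp(x_{d'})\Bigr)
\]
is a composition of coordinatewise exponentials, a linear map, and $\log$, and is therefore definable.

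Next I would use the stability of definable functions under pointwise maximum and minimum already noted in the excerpt: for definable $g,h$, the graph of $\max(g,h)$ is the Boolean combination $\{(x,y)\mid (y=g(x)\wedge g(x)\geq h(x))\vee(y=h(x)\wedge h(x)\geq g(x))\}$ of definable sets, using the definability of the order relation $\{x_1\leq x_2\}$; the minimum is symmetric. Since the sets $\{t\mid (d,t)\in E\}$ and $\{p\mid (t,p)\in E\}$ are finite, each $f_d$ in~\eqref{e-shapley} is obtained from the finitely many definable inner maps by applying these max/min operations a finite number of times, and is thus definable.

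Finally, a map $\R^D\to\R^D$ is definable if and only if each of its coordinate functions is, since its graph is cut out from the graphs of the coordinates by finitely many intersections and lifts. As all the $f_d$ are definable, so is $f$. I do not expect a genuine obstacle here: the only points requiring care are the definability of the constants $m_{pd'}$ (automatic for rational or algebraic weights) and the reduction of definability of the vector-valued map to its coordinates; everything else is a direct application of the closure properties of $\mathbb{R}_{\text{alg},\text{exp}}$.
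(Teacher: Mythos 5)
Your argument is correct and is essentially the one the paper intends: the Fact is stated with an immediate \qed precisely because the preceding paragraph already records the definability of $\exp$ (hence of $\log$ and of the inner log-sum-exp maps by composition), the definability of the order relation, and the resulting stability of definable maps under pointwise $\max$ and $\min$ over finite index sets. Your only superfluous caveat is the restriction to rational or algebraic weights $m_{pd'}$: in the structure as the paper defines it, exponential sets are cut out by polynomials with arbitrary real coefficients, so every real constant is definable and no restriction on the weights is needed.
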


\begin{theorem}[\cite{bolte2013}]\label{th-bolte}
Let $f:\R^n \to \R^n$ be nonexpansive in any norm, and suppose
that $f$ is definable in an o-minimal structure. Then, 
\[ 
\lim_{k\to \infty} f^k(0)/k 
\]
does exists, and it coincides with 
\[
\lim_{\alpha \to 1^-} (1-\alpha) {}^\alpha v
\enspace .
\]
\end{theorem}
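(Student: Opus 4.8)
The plan is to study the discounted value $\valued$ as a definable function of $\alpha$, to extract its limit as $\alpha\to1^-$ using o-minimality, and then to transfer this to the Ces\`aro limit $f^k(0)/k$ by a uniform Tauberian theorem. First I would note that for each $\alpha\in(0,1)$ the map $x\mapsto f(\alpha x)$ is an $\alpha$-contraction in the norm for which $f$ is nonexpansive, since $\|f(\alpha x)-f(\alpha y)\|\le\alpha\|x-y\|$; thus $\valued$ is its unique fixed point, exactly as in \Cref{prop-discounted}. The set $\{(\alpha,x)\in(0,1)\times\R^n:x=f(\alpha x)\}$ is definable, because $f$ is definable and scaling and composition preserve definability, and by uniqueness of the fixed point this set is the graph of the map $\alpha\mapsto\valued$. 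Hence $\alpha\mapsto\valued$, and in particular each coordinate of $\alpha\mapsto(1-\alpha)\valued$, is a definable function of a single real variable.

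Next I would bound the family and invoke the monotonicity theorem. Taking $y=0$ in the nonexpansiveness inequality gives $\|\valued\|\le\|f(\alpha\valued)-f(0)\|+\|f(0)\|\le\alpha\|\valued\|+\|f(0)\|$, whence $\|(1-\alpha)\valued\|\le\|f(0)\|$, so the family stays bounded as $\alpha\to1^-$. By the monotonicity theorem for o-minimal structures, each coordinate of $(1-\alpha)\valued$ is, on some interval $(\alpha_0,1)$, continuous and monotone, and a bounded monotone scalar function has a finite one-sided limit. Therefore $\chi:=\lim_{\alpha\to1^-}(1-\alpha)\valued$ exists, which already yields the second limit in the statement. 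It is the tameness of $f$, not mere nonexpansiveness, that forbids the oscillation of the discounted values exhibited by Vigeral's counterexample \cite{Vipreprint}.

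The crux is to deduce $\lim_k f^k(0)/k=\chi$ from this Abel limit. Nonexpansiveness alone gives only that the accumulation points of $f^k(0)/k$ form a set independent of the starting point, not that this set reduces to $\{\chi\}$. I would close the gap with the uniform Tauberian theorem of Lehrer and Sorin for nonexpansive operators (see \cite{neymansurv}), which asserts that, for the common operator $f$, the Abel means $(1-\alpha)\valued$ and the Ces\`aro means $f^k(0)/k$ converge uniformly to the same limit as soon as one of them converges. The Tauberian (slowly decreasing) hypothesis holds automatically here, since $\|f^{k+1}(0)-f^k(0)\|\le\|f(0)\|$ by iterated nonexpansiveness, and in finite dimension the coordinatewise convergence established above is uniform; the theorem then produces the first limit and its equality with $\chi$. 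The main obstacle is precisely this passage from Abel to Ces\`aro means: o-minimality is indispensable because, as the cited counterexample shows, without it the Abel limit may fail to exist, so that the Tauberian machinery would have no valid hypothesis to start from. In the semialgebraic case one can replace the monotonicity theorem by the explicit Puiseux expansions of Bewley and Kohlberg \cite{BewlKohl76,neymansurv}; the content of the theorem is that the weaker regularity guaranteed by general o-minimal cell decomposition already suffices.
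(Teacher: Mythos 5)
This theorem is not proved in the paper at all --- it is imported verbatim from \cite{bolte2013} --- so the relevant comparison is with the proof in that reference. Your first half reproduces it faithfully: the set $\{(\alpha,x) : x=f(\alpha x)\}$ is definable, uniqueness of the fixed point of the $\alpha$-contraction $x\mapsto f(\alpha x)$ makes it the graph of $\alpha\mapsto\valued$, the a priori bound $\|(1-\alpha)\valued\|\leq\|f(0)\|$ follows from nonexpansiveness, and the o-minimal monotonicity theorem then yields the existence of $\chi=\lim_{\alpha\to 1^-}(1-\alpha)\valued$. This is exactly the argument of \cite{bolte2013}.

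The Abel-to-Ces\`aro transfer is where your justification does not hold up as stated. The result you invoke --- a ``uniform Tauberian theorem of Lehrer and Sorin for nonexpansive operators'' asserting that $(1-\alpha)\valued$ and $f^k(0)/k$ converge to the same limit \emph{as soon as one of them converges} --- is not available in that form. Lehrer and Sorin's theorem concerns one-player dynamic programming problems defined by payoffs, not abstract nonexpansive operators; its operator-level analogue is a later result proved for operators nonexpansive in the sup-norm, whereas the present statement allows an arbitrary norm, and in any case it is not the route taken in \cite{bolte2013}. What that reference actually uses is Neyman's Tauberian theorem (see \cite{neymansurv}, building on \cite{BewlKohl76} and \cite{MertNeym81}): if $\alpha\mapsto(1-\alpha)\valued$ has \emph{bounded variation} near $\alpha=1$, then $f^k(0)/k$ converges and its limit equals the Abel limit. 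Mere convergence of the Abel means is not the hypothesis of that theorem, and your remark that the ``slowly decreasing'' condition $\|f^{k+1}(0)-f^k(0)\|\leq\|f(0)\|$ holds automatically does not substitute for it. The repair is immediate, because your own monotonicity-theorem step already supplies the stronger hypothesis: on $(\alpha_0,1)$ each coordinate of $(1-\alpha)\valued$ is monotone and bounded, hence of bounded variation. Replace the appeal to Lehrer--Sorin by an appeal to Neyman's bounded-variation Tauberian theorem fed with that fact, and the proof is complete and coincides with the one in \cite{bolte2013}.
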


\begin{corollary}\label{cor-limexists}
Let $v^k=(v^k_d)_{d\in D}$ be the value vector in horizon $k$ of the stochastic
game with Kullback-Leibler payments, $\gammampg^k$, and for $0<\alpha<1$,
let $\valued$ denote the value vector of the discounted game $\gammampgd$
with discount factor $0<\alpha<1$.
Then $\lim_{k\to\infty} v^k/k$ does exist and it coincides
with $\lim_{\alpha \to 1^-} (1-\alpha)(\valued)$. 
\end{corollary}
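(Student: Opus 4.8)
The plan is to deduce this directly from Theorem~\ref{th-bolte}, applied to the dynamic programming operator $f$ of the Kullback-Leibler game. That theorem has exactly two hypotheses on a self-map of $\R^n$: nonexpansiveness (the sup-norm will suffice) and definability in an o-minimal structure. Both have essentially been established for our particular $f$ in the discussion preceding the corollary, so the proof reduces to checking that these hypotheses apply and to identifying the iterates of $f$ with the value vectors $v^k$.

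First I would recall that $f$, given by~\eqref{e-shapley}, is order preserving and commutes with the addition of a constant, i.e.\ $f(x+\lambda\unit)=f(x)+\lambda\unit$. As noted after \Cref{prop-dp}, any map with these two properties is nonexpansive in the sup-norm, so the first hypothesis of \Cref{th-bolte} holds (here $n=|D|$). Second, by \Cref{new-fact}, $f$ is definable in the real exponential field $\mathbb{R}_{\text{alg},\text{exp}}$, which is o-minimal by Wilkie's theorem; this supplies the second hypothesis.

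It then remains to match the two sides of the statement with those of \Cref{th-bolte}. By \Cref{prop-dp} the horizon-$k$ value vector satisfies $v^0=0$ and $v^k=f(v^{k-1})$, whence $v^k=f^k(0)$ and $v^k/k=f^k(0)/k$; thus the Cesàro limit in the corollary is exactly $\lim_{k\to\infty}f^k(0)/k$. On the discounted side, \Cref{prop-discounted} identifies $\valued$ with the unique fixed point of $x\mapsto f(x\alpha)$, which is precisely the quantity denoted ${}^\alpha v$ in \Cref{th-bolte}. Applying that theorem to $f$ therefore yields that $\lim_{k\to\infty}v^k/k$ and $\lim_{\alpha\to 1^-}(1-\alpha)\valued$ both exist and coincide, which is the claim.

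I do not expect any genuine obstacle at the level of this corollary: all the substantive work lies upstream. The nontrivial point is \Cref{new-fact}, whose content is that the min--max of the log-sum-exp expressions in~\eqref{e-shapley} is subexponential — this relies on the stability of definable maps under pointwise maximum and minimum in $\mathbb{R}_{\text{alg},\text{exp}}$ — together with \Cref{th-bolte} itself, whose proof (via the Bewley--Kohlberg/Neyman technique) establishes the Abel--Cesàro limit equality for definable nonexpansive maps. Accordingly, I would keep the proof to the short verification above and not reprove any of these ingredients.
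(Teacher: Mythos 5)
Your proof is correct and follows exactly the paper's own argument: verify that $f$ is sup-norm nonexpansive and definable in the real exponential field, then apply \Cref{th-bolte}, identifying $v^k=f^k(0)$ via \Cref{prop-dp} and $\valued$ with the fixed point of $x\mapsto f(x\alpha)$ via \Cref{prop-discounted}. The paper's proof is just a terser version of the same verification.
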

\begin{proof}
We already noted that the map $f$ in~\eqref{e-shapley}
is nonexpansive in the sup-norm. It is definable in the
real exponential field. So Theorem~\ref{th-bolte} can be applied
to it. 
\end{proof}

\Cref{cor-limexists} will allow us to establish the existence
of the value of the mean payoff game, and to obtain
optimal strategies, by considering the discounted game,
for which, as noted in \Cref{prop-discounted}, the existence
of the value and of optimal policies are already known.

Let us recall that a strategy in a discounted game is said to be {\em Blackwell optimal} if it is optimal for all discount factors sufficiently close to one. The existence of Blackwell optimal positional strategies is a basic feature of perfect information zero-sum stochastic games with finite action spaces (see~\cite[Chap.~10]{putermanbook} for the one-player case, the two-player case builds on similar ideas, 
e.g.~\cite[Lemma~26]{gg98a}).
We next show that this result has an analogue
for entropy games. To get a Blackwell type optimality result, we need
to restrict to a setting with finitely many positional strategies.
Recall that $\cP_D$ (resp.\ $\cP_T$) denotes the set of policies
of Despot (resp.\ Tribune).
We also recall our notation
$v^\infty(\delta,\ast)$ for the value of the mean payoff game $\gammampg^\infty(\delta,\ast)$ in which
Despot plays according to the policy $\delta$.

We define the {\em projection} of a pair of strategies $(\delta,(\tau,\pi))$
in the game $\gammampg$ to be the strategy $(\delta,\tau)$ in the game $\gammage$. 
In the present setting, it is appropriate to say that a pair of 
policies $(\delta,\tau)\in \cP_D\times \cP_T$
is {\em Blackwell optimal} if there is a real number $0<\alpha_0<1$ such that,
for all $\alpha\in (\alpha_0,1)$, $(\delta,\tau)$ is the projection
of a pair of optimal strategies $(\delta,(\tau,\pi))$ in the discounted 
game ${}^\alpha\gammampg$.
\begin{theorem}\label{cor-ominimal}
The family of discounted Kullback-Leibler games 
$(\gammampgd)_{\alpha\in (0,1)}$ has positional Blackwell optimal strategies.
\end{theorem}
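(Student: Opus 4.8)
The plan is to prove the existence of positional Blackwell optimal strategies by reducing the problem to o-minimality of the relevant optimal-value functions as functions of the discount factor $\alpha$. The key observation is that $\cP_D$ and $\cP_T$ are both \emph{finite} sets of policies, so it suffices to show that, for each fixed pair of policies, the discounted value and the optimality certificates vary definably in $\alpha$, and then use o-minimality to conclude that the set of $\alpha$ for which a given pair of policies is optimal is a finite union of intervals. The one with $1$ as an endpoint then yields Blackwell optimality.

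First I would fix a pair of policies $(\delta,\tau)\in\cP_D\times\cP_T$ and study the discounted value vector $\valued$ as a function of $\alpha\in(0,1)$. By \Cref{prop-discounted}, $\valued$ is the unique fixed point of the contraction $x\mapsto f(x\alpha)$; since $f$ is definable in the real exponential field (\Cref{new-fact}) and the graph of the solution map is cut out by the definable relation $\valued=f(\valued\alpha)$ together with the uniqueness of the fixed point, the map $\alpha\mapsto\valued$ is itself definable in $\mathbb{R}_{\text{alg},\text{exp}}$. The same argument applies to the reduced games $\gammampgd(\delta,\ast)$ and $\gammampgd(\ast,(\tau,\ast))$, whose dynamic programming operators are restrictions of $f$ obtained by freezing some of the $\min$/$\max$ choices, hence are again definable; so $\alpha\mapsto{}^{\alpha}v(\delta,\ast)$ and $\alpha\mapsto{}^{\alpha}v(\ast,(\tau,\ast))$ are definable too.

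Next I would encode optimality of $(\delta,\tau)$ for a given $\alpha$ as a definable condition. A pair of policies is optimal in the discounted game precisely when the associated value vectors satisfy a saddle-point system of inequalities at the level of the Shapley operator: $\delta$ must attain the outer $\min$ and $\tau$ (together with the optimal choice of $\pi$, recovered as the $\argmax$ $\vartheta$ in \Cref{lemma-rw}) must attain the inner $\max$ in every coordinate of $f(\valued\alpha)$, and conversely no unilateral deviation improves the value. Since all the quantities entering these inequalities are definable in $\alpha$, the set
\[
A_{\delta,\tau}:=\{\alpha\in(0,1)\mid (\delta,\tau)\text{ is optimal in }{}^{\alpha}\gammampg\}
\]
is a definable subset of $(0,1)$, hence by o-minimality a finite union of points and intervals.

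The final step is to invoke o-minimality at the endpoint $\alpha=1$. Since the finitely many sets $A_{\delta,\tau}$ cover $(0,1)$ (for each $\alpha$ there exists an optimal pair of policies, obtained from the positional optimal strategies guaranteed by \Cref{prop-discounted}), at least one $A_{\delta,\tau}$ accumulates at $1$; being a finite union of intervals, it must contain some interval $(\alpha_0,1)$. For that pair, $(\delta,\tau)$ is optimal for all $\alpha$ sufficiently close to $1$, which is exactly the stated Blackwell optimality. The main obstacle I expect is the bookkeeping in the second step: one has to check carefully that optimality of a \emph{pair} of policies can be expressed purely in terms of definable data (the value vectors of the three reduced games and the operator $f$), and in particular that the recovery of the optimal measure $\pi$ from $\vartheta$ does not introduce non-definable behavior, but this is exactly controlled by the explicit Legendre--Fenchel formula of \Cref{lemma-rw}, whose $\argmax$ is a single definable point.
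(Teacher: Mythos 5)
Your overall strategy --- definability of the discounted values as functions of $\alpha$, finiteness of $\cP_D\times\cP_T$, a covering of $(0,1)$ by finitely many definable sets, and o-minimality to extract one set containing an interval $(\alpha_0,1)$ --- is exactly the skeleton of the paper's proof, and your argument is correct. The one genuine difference is where the two proofs place the work of converting a definable condition on a pair of policies into actual optimality in ${}^{\alpha}\gammampg$. The paper defines its sets $I(\bar{\delta},\bar{\tau})$ by the purely policy-level saddle point \eqref{blackwelleq}, involving only the finitely many definable functions $\alpha\mapsto\valued(\delta,(\tau,\ast))$; definability is then immediate, but the second half of its proof must show, via the reduced games $^\alpha\gammampg(\bar{\delta},\ast)$ and $^\alpha\gammampg(\ast,(\bar{\tau},\ast))$ and a chain of inequalities, that \eqref{blackwelleq} really makes $(\bar{\delta},\bar{\tau})$ the projection of an optimal pair. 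You instead encode optimality through the verification condition that $\delta$ and $\tau$ attain the min and max coordinatewise in $f(\valued\alpha)$, which by the standard dynamic programming argument is directly sufficient for optimality, so no reconstruction step is needed. The price is that you need $\alpha\mapsto\valued$ itself (the value of the full two-player game) to be definable, which indeed holds by the same unique-fixed-point argument applied to the full operator $f$ of \Cref{new-fact}, whereas the paper only needs definability for the fully reduced operators ${}^{\tau}f^{\delta}$. One caution: if $A_{\delta,\tau}$ is taken literally as ``$(\delta,\tau)$ is the projection of an optimal pair,'' i.e., defined by saddle-point inequalities against all history-dependent strategies, definability is not obvious, since that quantifies over an infinite-dimensional strategy space; your set should be taken to be the definable verification condition (attainment of the min and max in the fixed-point equation), which suffices because it implies optimality and is satisfied by at least one pair for every $\alpha$, so the covering of $(0,1)$ still holds. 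With that reading, your proof is complete.
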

\begin{proof}
For all $\alpha\in (0,1)$, the {\em discounted} game has positional optimal strategies
$\delta^*,(\tau^*,\pi^*)$.
This follows from the standard dynamic programming
argument mentioned in the proofs of Proposition~\ref{prop-simple} and
\ref{prop-discounted}, noting that $\delta^*(d)$ is obtained 
by choosing any $t\in T$ such that $(d,t)\in E$ attains the minimum 
in the expression
\[
{}^{\alpha}v_d = 
\min_{(d,t)\in E} \max_{(t,p)\in E} 
\max_{\vartheta\in \Delta_p} 
\left(-S_p(\vartheta;m)
+ \langle \vartheta, \alpha ({}^\alpha v)\rangle
\right) \enspace .
\]
Similarly, $\tau^*(t)$ is chosen to be any $p\in P$ such that $(t,p)\in E$
attains the maximum in
\[ \max_{(t,p)\in E} \max_{\vartheta\in \Delta_p} 
\left(-S_p(\vartheta;m)
+ \langle \vartheta, \alpha ({}^\alpha v)\rangle
\right) \enspace ,
\]
and $\pi^*(p)$ is chosen to be the unique action $\vartheta$
attaining the maximum in
\[
\max_{\vartheta\in \Delta_p} 
\left(-S_p(\vartheta;m)
+ \langle \vartheta, \alpha ({}^\alpha v)\rangle
\right) \enspace 
\]
(observe that the function to be maximized is strictly concave and continuous
on $\Delta_p$,
and that $\Delta_p$ is compact and convex, so the maximum
is achieved at a unique point). 

By definition of the value and of optimal
strategies, we have, for all strategies $\delta$ and $(\tau,\pi)$
of Despot and Tribune respectively, 
\begin{align} {}^{\alpha}r _{d}(\delta^*,(\tau,\pi))
\leq \valued_d= {}^{\alpha}r _{d}(\delta^*,(\tau^*,\pi^*))
\leq {}^{\alpha}r _{d}(\delta,(\tau^*,\pi^*))\enspace , \label{e-saddle}
\end{align}
which is equivalent to 
\begin{align}
\valued_d= {}^{\alpha}r _{d}(\delta^*,(\tau^*,\pi^*))=
\valued_d(\delta^*,\ast)=\valued_d(\ast,(\tau^*,\pi^*))\enspace .
\label{e-saddle2}\end{align}
Specializing the first inequality in~\eqref{e-saddle}
 to $\tau=\tau^*$, and bounding above the last term, we deduce
that, for all for all strategies $\delta$ and $\tau$
of Despot and Tribune respectively, we have
\begin{equation}
\valued(\delta^*,(\tau,\ast))
\leq \valued_d=\valued(\delta^*,(\tau^*,\ast))
\leq \valued(\delta,(\tau^*,\ast))\enspace ,
\label{e-saddle-pol}
\end{equation}
where $\valued_d(\delta,(\tau,\ast))$ is the 
value of the reduced discounted 1-player  discounted game
 $^\alpha\gammampg(\delta,(\tau,\ast))$ starting at $d\in D$, 
in which the (not necessarily positional)
strategies  $\delta$ of Despot and $\tau$ of Tribune are fixed.
The inequalities~\eqref{e-saddle-pol} can be specialized
in particular to policies $\delta\in \cP_D$ and $\tau\in\cP_T$. 
Then, %
by \Cref{prop-discounted}, 
$\valued(\delta,(\tau,\ast))$ is the unique
fixed point of the self-map $x\mapsto {}^{\tau}f^{\delta}_d(x\alpha)$
of  $\R^D$, 
where ${}^\tau f^{\delta}$ is the dynamic programming operator
 given by 
\begin{align}
{}^{\tau}f^{\delta}_d(x) = 
\log\left(\sum_{(\tau\circ \delta(d),d') \in E} m_{\tau\circ \delta(d),d'}\exp(x_{d'})\right) \enspace .
\label{e-Mdeltatau}
\end{align}
It follows that the map $\alpha\mapsto \valued(\delta,(\tau,\ast))$
is definable in the real
exponential field $\mathbb{R}_{\text{alg},\text{exp}}$.
(To see this, observe that, by \Cref{new-fact},
the set $\{(x,y)\mid x={}^{\tau}f^{\delta}_d(y)\}\times \R$
is definable in this structure; then, taking the intersection
of this set with the definable sets $\{(x,y,\alpha)\mid y_d = x_d \alpha\}$,
for $d \in D$, and projecting the intersection
keeping only the $x$ and $\alpha$ variables, we obtain a definable
set which is precisely the graph of the map $\mapsto \valued(\delta,(\tau,\ast))$).

For all $(\bar{\delta},\bar{\tau})\in\cP_D\times \cP_T$, let
$I(\bar{\delta},\bar{\tau})$ denote the set of $\alpha\in (0,1)$ such that
\begin{equation}
\valued(\bar{\delta},(\tau,\ast))
\leq \valued(\bar{\delta},(\bar{\tau},\ast))
\leq \valued(\delta,(\bar{\tau},\ast))
\label{blackwelleq}
\end{equation}
holds for all $(\delta,\tau)\in\cP_D\times\cP_T$. 
Since the saddle point property~\eqref{e-saddle-pol}
holds for all $\alpha$ ($\delta^*$ and $\tau^*$ depend
on $\alpha$, of course), we have
\begin{align}
\cup_{(\bar{\delta},\bar{\tau})} I(\bar{\delta},\bar{\tau})= (0,1)\enspace .
\label{e-covering}
\end{align}
Observe that the set $I(\bar{\delta},\bar{\tau})$ is a subset of $\R$
definable in the real exponential field,
which is o-minimal.
It follows that $I(\bar{\delta},\bar{\tau})$ is a finite union of intervals.  Hence, ~\eqref{e-covering} provides
a covering of $(0,1)$ by finitely many intervals, and so,
one of the sets $I(\bar{\delta},\bar{\tau})$ must include
an interval of the form $(1-\epsilon,1)$. 

To show that the policies $\bar{\delta},\bar{\tau}$
obtained in this way are Blackwell optimal,
it remains to show that if $(\bar{\delta},\bar{\tau})$ 
satisfies~\eqref{blackwelleq} for some $\alpha$,
 then it is the projection
of a pair of optimal strategies $(\bar{\delta},(\bar{\tau},\bar{\pi}))$
in the discounted game ${}^\alpha\gammampg$.
For this, we shall apply the existence of optimal
strategies that are positional and the resulting equations~\eqref{e-saddle2} 
and~\eqref{e-saddle-pol}
to the reduced  games
 $^\alpha\gammampg(\bar{\delta},\ast)$ and
 $^\alpha\gammampg(\ast, (\bar{\tau},\ast))$, respectively.

The first game leads to the existence of positional
stationary strategies $\tau^1,\pi^1$ of Tribune
such that, for all $d\in D$, 
\[\valued_d(\bar{\delta},\ast)= {}^{\alpha}r _{d}(\bar{\delta},(\tau^1,\pi^1))
=\valued(\bar{\delta},(\tau^1,\ast))\enspace .\] 
Then, using~\eqref{blackwelleq}, we get that
$\valued(\bar{\delta},(\bar{\tau},\ast))\geq  
\valued(\bar{\delta},(\tau^1,\ast))=\valued_d(\bar{\delta},\ast)\geq
\valued(\bar{\delta},(\bar{\tau},\ast))$, 
hence the equality
$\valued(\bar{\delta},(\bar{\tau},\ast))=\valued_d(\bar{\delta},\ast)$.

The second one leads to the existence of positional
stationary strategies $\delta^2,\pi^2$ of Despot and Tribune respectively
such that, for all $d\in D$, 
\[ %
\valued_d(\ast, (\bar{\tau},\ast))=
 {}^{\alpha}r _{d}(\delta^2,(\bar{\tau},\pi^2))=
\valued_d(\delta^2,(\bar{\tau},\ast))=\valued_d(\ast,(\bar{\tau},\pi^2))
\enspace .\] %
Then, using~\eqref{blackwelleq}, we deduce that
$\valued_d(\bar{\delta},(\bar{\tau},\ast))\leq
\valued_d(\delta^2,(\bar{\tau},\ast))=\valued_d(\ast, (\bar{\tau},\pi^2)) 
\leq \valued_d(\bar{\delta},(\bar{\tau},\pi^2))
\leq \valued_d(\bar{\delta},(\bar{\tau},\ast))$, hence the equality
$\valued_d(\bar{\delta},(\bar{\tau},\ast))
=\valued_d(\ast, (\bar{\tau},\pi^2)) 
=\valued_d(\bar{\delta},(\bar{\tau},\pi^2))$.
With the equality proved with the first game, this leads to
$\valued_d(\bar{\delta},\ast)=\valued(\bar{\delta},(\bar{\tau},\ast))  
=\valued_d(\bar{\delta},(\bar{\tau},\pi^2)) 
=\valued_d(\ast, (\bar{\tau},\pi^2))$.
This shows that $(\bar{\delta},(\bar{\tau},\pi^2))$ is  a pair of 
 optimal strategies for the discounted game ${}^\alpha\gammampg$.
Since $(\bar{\delta},\bar{\tau})$ is its projection, we get that
it is Blackwell optimal.
\end{proof}

\begin{theorem}\label{cor-morphism}
The value $v^\infty$ of the stochastic
mean payoff game with Kullback-Leibler payments does exist,
and it coincides with $\lim_{k\to\infty} v^k/k = \lim_{\alpha \to 1^-} (1-\alpha)(\valued)$.  For all $(\delta,\tau)\in \cP_D\times \cP_T$, the same properties hold for the values $v^\infty(\delta,\ast)$  and $v^\infty(\ast,(\tau,\ast))$ of the reduced games in which Despot plays according to $\delta$ when the state is in $D$ and Tribunes plays according to $\tau$ when the state is in $T$, respectively.
Moreover, 
the Blackwell optimal strategies  $(\delta^*,\tau^*)\in \cP_D\times \cP_T$
of~\Cref{cor-ominimal} satisfy, 
for all $d\in D$,
\begin{align}
v^\infty_d=v^\infty_d(\delta^*,(\tau^*,\ast)) = v^\infty_d(\delta^*,\ast) = v^\infty_d(\ast,(\tau^*,\ast)) \enspace .\label{e-forv1}\end{align}
In particular,
\begin{align*}
v^\infty_d
&= \min_{\delta \in \cP_D} v^\infty_d(\delta,\ast) = \max_{\tau\in \cP_T}
v^\infty_d(\ast,(\tau,\ast)) \enspace.
\end{align*}
\end{theorem}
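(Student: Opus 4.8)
The plan is to leverage the Blackwell optimal policies $(\delta^*,\tau^*)$ produced in \Cref{cor-ominimal} together with the Bolte--Gaubert--Vigeral limit theorem (\Cref{th-bolte}), passing the discounted saddle-point identities to the mean payoff limit. First I would record that every operator governing the games in the statement---the full operator $f$ of \eqref{e-shapley}, the operator $f^{\delta}$ obtained by freezing Despot's policy $\delta$, the one obtained by freezing Tribune's arc policy $\tau$, and the fully reduced operator ${}^{\tau}f^{\delta}$ of \eqref{e-Mdeltatau}---is order preserving and additively homogeneous, hence sup-norm nonexpansive, and is definable in $\mathbb{R}_{\text{alg},\text{exp}}$ by the same reasoning as in \Cref{new-fact}. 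Thus \Cref{th-bolte} applies verbatim to each of them, so that for every fixed pair of policies the two limits $\lim_{k\to\infty}v^k/k$ and $\lim_{\alpha\to 1^-}(1-\alpha)\,{}^{\alpha}v$ of the corresponding game exist and coincide. This already delivers the existence half of the statement for $v^\infty$, $v^\infty(\delta,\ast)$ and $v^\infty(\ast,(\tau,\ast))$.

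The substantive point is that each such limit is not merely an escape rate but the \emph{value} of the corresponding mean payoff game, realized by an actual strategy. I would establish this first in the base case of the fully reduced game $\gammampg(\delta,(\tau,\ast))$, where both players' arcs are frozen and the only residual choice is Tribune's distribution $\vartheta\in\Delta_p$. Applying the o-minimality covering argument of \Cref{cor-ominimal} to this single-player problem yields a distribution policy that is optimal for all $\alpha$ close to $1$; freezing it turns ${}^{\tau}f^{\delta}$ into an affine map $x\mapsto Qx+c$ with $Q$ row-stochastic and $c_d=-S_p(\vartheta;m)$, for which $v^k=Q^k v^0+\sum_{j=0}^{k-1}Q^j c$ and hence $v^k/k$ converges to the Cesàro average reward of the induced Markov chain. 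Since that average reward is by definition the mean payoff of the frozen strategy and it agrees with $\lim_k v^k/k$, the escape rate equals the mean payoff value and is attained. The one-player-frozen games $\gammampg(\delta,\ast)$ and $\gammampg(\ast,(\tau,\ast))$ are then treated identically: each is a perfect information game with a definable operator, so it admits its own Blackwell optimal policies (again by the covering argument of \Cref{cor-ominimal}), which reduce it to the base case; combining the trivial inequality $v^\infty\le\lim_k v^k/k$ (every strategy satisfies $r^k\le v^k$) with the strategy that attains the escape rate shows that here too the escape rate is the mean payoff value.

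With this in hand I return to the full game. Multiplying the discounted identities between the policy-restricted games established for $\alpha$ close to $1$ in \Cref{cor-ominimal} by $(1-\alpha)$ and letting $\alpha\to 1^-$ shows, using the previous paragraph, that $v^\infty(\delta^*,(\tau^*,\ast))$, $v^\infty(\delta^*,\ast)$ and $v^\infty(\ast,(\tau^*,\ast))$ all coincide. To identify this common vector with the value of the full mean payoff game, I let Despot play the stationary policy $\delta^*$: against any Tribune strategy the payoff is at most $v^\infty(\delta^*,\ast)$, so Despot guarantees at most $v^\infty_d(\delta^*,\ast)$; symmetrically, Tribune playing $\tau^*$ together with the mean payoff optimal distribution policy furnished by the reduced game $\gammampg(\ast,(\tau^*,\ast))$ guarantees at least $v^\infty_d(\ast,(\tau^*,\ast))$. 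As these two numbers coincide, the value $v^\infty_d$ exists, equals them, and by \Cref{cor-limexists} equals $\lim_k v^k/k=\lim_{\alpha\to1^-}(1-\alpha)\,{}^{\alpha}v$; this is precisely \eqref{e-forv1}.

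Finally, the minimax formula follows from monotonicity of the value under restriction of a player. Freezing Despot's policy can only help the maximizer, so $v^\infty(\delta,\ast)\geq v^\infty$ for every $\delta\in\cP_D$, with equality at $\delta^*$ by \eqref{e-forv1}, whence $v^\infty=\min_{\delta\in\cP_D}v^\infty(\delta,\ast)$; dually, freezing Tribune's arc policy can only lower the value, giving $v^\infty=\max_{\tau\in\cP_T}v^\infty(\ast,(\tau,\ast))$. The main obstacle is the upgrade carried out in the second paragraph: \Cref{th-bolte} produces the number $\lim_k v^k/k$ but says nothing about its realizability by a single strategy over the infinite horizon, and bridging this gap is exactly where Blackwell optimality---and the reduction to an affine Perron operator under a frozen stationary policy---is indispensable.
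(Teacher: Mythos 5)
Your overall skeleton matches the paper's: definability and sup-norm nonexpansiveness of $f$, $f^\delta$, ${}^{\tau}f$ and ${}^{\tau}f^{\delta}$ give the existence and coincidence of the two limits via \Cref{th-bolte}; the saddle-point identities for $(\delta^*,\tau^*)$ are obtained by multiplying the discounted identities of \Cref{cor-ominimal} by $(1-\alpha)$ and letting $\alpha\to 1^-$; and the minimax formula follows by monotonicity. Where you diverge is in how you upgrade the escape rate to the \emph{value} of the mean payoff game. The paper does this by citing Corollary~2(iii) of \cite{bolte2013}, a consequence of the Mertens--Neyman theorem, which asserts directly that a game whose Shapley operator is definable and nonexpansive has a (uniform) value equal to the escape rate; no stationary strategy on the simplices is needed.

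Your replacement argument has a genuine gap. You claim that ``applying the o-minimality covering argument of \Cref{cor-ominimal}'' to the fully reduced game $\gammampg(\delta,(\tau,\ast))$ yields a distribution policy $\pi$ that is optimal for all $\alpha$ close to $1$, and you then freeze it to obtain an affine operator $x\mapsto Qx+c$. But the covering argument rests essentially on the \emph{finiteness} of the policy set: one covers $(0,1)$ by finitely many definable sets $I(\bar\delta,\bar\tau)$ and concludes that one of them contains $(1-\epsilon,1)$. The residual actions in $\gammampg(\delta,(\tau,\ast))$ range over the continua $\Delta_p$, so there are uncountably many stationary distribution policies and no finite covering is available. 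Worse, the object you posit generally does not exist: the unique discounted-optimal distribution at $p$ is the Gibbs measure $\vartheta_{pd'}\propto m_{pd'}\exp(\alpha\,\valued_{d'})$, which varies continuously with $\alpha$ and is not eventually constant. The paper explicitly flags this in the Remark following the theorem: the existence of an optimal \emph{positional} choice of $\vartheta\in\Delta_p$ is left open precisely because the o-minimality argument needs finite action spaces at the states $d$ and $t$. Your third paragraph inherits the problem, since ``the mean payoff optimal distribution policy furnished by the reduced game $\gammampg(\ast,(\tau^*,\ast))$'' is again an object whose existence you have not established. To close the gap you would either have to invoke the Mertens--Neyman-based result as the paper does, or give a genuinely different argument (e.g.\ $\epsilon$-optimal block strategies for the one-player limsup problem, or a proof that the limit as $\alpha\to1^-$ of the Gibbs policies is mean-payoff optimal), neither of which is carried out in your proposal.
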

\begin{proof}%
We already noted in \Cref{cor-limexists} that
\begin{align}
\lim_{k\to\infty} v^k/k = \lim_{\alpha \to 1^-} (1-\alpha)(\valued)\enspace .
\label{e-e0}
\end{align}
Moreover, it is shown in \cite[Corollary 2, (iii)]{bolte2013},
as a consequence of a theorem of Mertens and Neyman~\cite{MertNeym81}, that this
limit coincides with the value of the game. These results
rely on the definable and sup-norm nonexpansive 
character of the dynamic programming
operator $f$. The dynamic programming operator $f^\delta$, 
associated to the reduced game determined by the strategy $\delta\in\cP_D$
can be written as 
\begin{align}
f^\delta_d(x):= \max_{(\delta(d),p)\in E}
\log\left(\sum_{(p,d') \in E} m_{pd'}\exp(x_{d'})\right) \enspace .
\label{e-fdelta}
\end{align}
It is definable and sup-norm nonexpansive, hence the same conclusions
apply to the game $\gammampg(\delta,\ast)$, i.e,
\begin{align}
\lim_{k\to\infty} v^k(\delta,\ast)/k = \lim_{\alpha \to 1^-} (1-\alpha)(\valued(\delta,\ast))
\label{e-14}
\end{align}
is the value of the game $\gammampg^\infty(\delta,\ast)$.
We argue in the same way for the game $\gammampg(\ast,(\tau,\ast))$, noting
that the associated dynamic programming operator is now
\begin{align}
{}^{\tau}f_d(x):= \min_{(d,t)\in E} 
\log\left(\sum_{(\tau(t),d') \in E} m_{\tau(t)d'}\exp(x_{d'})\right) \enspace .
\label{e-e1}
\end{align}
which is still definable and sup-norm nonexpansive.
Hence, 
\begin{align}
\lim_{k\to\infty} v^k(\ast,(\tau,\ast))/k = \lim_{\alpha \to 1^-} (1-\alpha)(\valued(\ast,(\tau,\ast)))\label{e-e2}
\end{align}
is the value of the game $\gammampg^\infty(\ast,(\tau,\ast))$.

Let $\delta^*,\tau^*$ denote the positional Blackwell optimal strategies
constructed in \Cref{cor-ominimal}. 
By definition,
there is an interval $(\alpha_0,1)$ such that for all $\alpha \in(\alpha_0,1)$,
there exists $\pi^*$ depending on $\alpha$ such that
\begin{align*}
\valued=\valued(\delta^*,(\tau^*,\pi^*))=\valued(\delta^*,\ast)
= \valued(\ast,(\tau^*,\pi^*)) 
\end{align*}
which by~\eqref{e-saddle-pol} leads to
\begin{align}
\valued=\valued(\delta^*,\ast)
= \valued(\ast,(\tau^*,\ast)) =\valued(\delta^*,(\tau^*,\ast))\enspace .
\label{e-forz1}
\end{align}
Multiplying these expressions by $(1-\alpha)$, passing
to the limit, and using~\eqref{e-e0},~\eqref{e-14} and
\eqref{e-e2}, 
we obtain~\eqref{e-forv1}.
\end{proof}
\begin{remark}
\Cref{cor-morphism} shows that Player Despot has an optimal
positional strategy $\delta^*$ in the mean payoff Kullback-Leibler game.
It also shows that in the same game, the actions of Player Tribune at states
$t\in T$ can be chosen according to the optimal positional strategy $\tau^*$.
This theorem does not imply, however, that at every state $p\in P$, 
the optimal action $\vartheta \in \Delta_p$ can be chosen
optimally according to a positional strategy $\pi$.
Indeed, the proof
by an o-minimality argument uses in an essential way the fact that
there are finite actions spaces at every states $d$ and $t$, whereas
$\Delta_p$ is infinite. We leave for further investigation
the question of the existence of such a positional strategy,
noting that it is not needed in the application 
to entropy games. 
\end{remark}

\begin{remark}
It is shown in  \cite[Corollary 2, (iii)]{bolte2013}, as a consequence
of a theorem of Mertens and Neyman~\cite{MertNeym81},
that a stochastic game
with a definable Shapley operator has a {\em uniform value}, a property
which is stronger than the mere existence of the value.
Loosely speaking, a stochastic game 
with initial state $d$ is said to have a uniform value $v^{\infty}_d$ if both players can almost guarantee  $v^{\infty}_d$ provided that the length of the $k$-stage game is large enough. In the present setting, we get
the following property: for any $\epsilon>0$, there is a couple of strategies
of $(\delta, (\tau,\pi))$ and a time $K$ such that, for every $k\geq K$, every starting state $d$ and every strategies $\delta '$ and $(\tau',\pi')$,
\begin{eqnarray*}
r^k_d(\delta,\tau',\pi')/k\leq v^\infty_d+\epsilon, \qquad 
r^k_d(\delta',\tau,\pi)/k \geq v^\infty_d-\epsilon \enspace .
\end{eqnarray*}
\end{remark}

\section{Application to the entropy game model}
\subsection{Existence of optimal positional strategies in the entropy game}
As an application of \Cref{cor-morphism}, we obtain the existence
of optimal positional strategies in the entropy game model
of \Cref{sec-genent}.

\begin{theorem}\label{th-1}
The infinite horizon entropy game has a value and it has optimal
positional strategies, namely the
Blackwell optimal strategies  $(\delta^*,\tau^*)\in \cP_D\times \cP_T$
of~\Cref{cor-ominimal}. Moreover, for all initial states $d$,
\[
V_d^\infty = \lim_{k\to\infty} (V_d^k)^{1/k}\enspace .
\]
\end{theorem}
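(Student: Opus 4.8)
The plan is to transport the results already established for the Kullback--Leibler game $\gammampg$ back to the entropy game $\gammage$ via the conjugacy relation $f=\log\circ F\circ\exp$ proved in \Cref{prop-dp}. The crucial observation is that the payments of the two games are related by a logarithm: \Cref{prop-dp} gives $v^k=\log V^k$ coordinatewise, and more generally the value vectors of all the reduced games (those with policies $\delta$ and/or $\tau$ fixed) of the two models are related in the same way, since fixing policies commutes with the conjugation by $\exp$. Concretely, for a fixed pair of policies the entropy game operator ${}^\tau F^\delta$ and the Kullback--Leibler operator ${}^\tau f^\delta$ of~\eqref{e-Mdeltatau} satisfy ${}^\tau f^\delta=\log\circ\, {}^\tau F^\delta\circ\exp$, so that the multiplicative escape rate of $F$ is the exponential of the additive escape rate of $f$.

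First I would prove the limit formula. By \Cref{prop-simple} the finite-horizon value vector of the entropy game satisfies $V^k=F^k(\unit)$ with $\unit=(1,\dots,1)^\top$, and by \Cref{prop-dp} we have $v^k=\log V^k=f^k(0)$. \Cref{cor-morphism} establishes that $v^\infty:=\lim_{k\to\infty}v^k/k$ exists and equals the value of the mean payoff Kullback--Leibler game. Since $V_d^k=\exp(v_d^k)$, the quantity $(V_d^k)^{1/k}=\exp(v_d^k/k)$ converges to $\exp(v_d^\infty)$ by continuity of $\exp$. It therefore suffices to set $V_d^\infty:=\exp(v_d^\infty)$ and to verify that this number is indeed the value of the infinite-horizon entropy game in the sense of the saddle-point definition~\eqref{e-def-value0}--\eqref{e-def-value}, with $R^\infty_{\bar d}=\limsup_k (R^k_{\bar d})^{1/k}$.

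Next I would establish optimality of the Blackwell optimal policies $(\delta^*,\tau^*)$ from \Cref{cor-ominimal}. The key is the chain of equalities~\eqref{e-forv1} in \Cref{cor-morphism}, which states $v^\infty_d=v^\infty_d(\delta^*,\ast)=v^\infty_d(\ast,(\tau^*,\ast))$ for all $d$. Exponentiating and using the conjugacy of the reduced operators, these additive mean-payoff values become multiplicative growth rates for the entropy game: the value obtained when Despot commits to $\delta^*$ equals $V^\infty_d$, and likewise for Tribune committing to $\tau^*$. Translating the inequality $v^\infty_d(\delta^*,\ast)\ge v^\infty_d(\delta^*,(\tau,\ast))$ (valid since restricting Tribune's options cannot raise the value against a fixed Despot policy) through $\exp$ and the definition of $R^\infty$, I would show that Despot playing $\delta^*$ guarantees $\limsup_k (R^k_{\bar d}(\delta^*,\tau))^{1/k}\le V^\infty_d$ against every strategy $\tau$ of Tribune, and dually that Tribune playing $\tau^*$ guarantees the reverse inequality. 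Together these yield the saddle point~\eqref{e-def-value} for $R^\infty$ at the value $V^\infty_d$, so $(\delta^*,\tau^*)$ are optimal positional strategies.

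The main obstacle I expect is the passage from the \emph{expected multiplicative} criterion governing the Kullback--Leibler game to the \emph{deterministic multiplicative} criterion $R^\infty_{\bar d}=\limsup_k(R^k_{\bar d})^{1/k}$ of the entropy game, since the former is an expectation over Nature's randomization while the latter is a plain product of arc weights summed over paths allowed by People's nondeterminism. The reconciliation is exactly the identity $R^k_{\bar d}(\delta,\tau)=\mathbb{E}(\disfac{P_0}\cdots\disfac{P_{k-1}}\mid D_0=\bar d)$ recorded before \Cref{prop-simple}, which shows that summing weights over the paths consistent with People's choices coincides with the expected multiplicative payoff; here one must be careful that People's nondeterminism in $\gammage$ corresponds not to Nature's randomization in $\gammampg$ but to the summation, and that the Tribune-side maximization over $\Delta_p$ in $\gammampg$ recovers, via \Cref{lemma-rw}, precisely the deterministic sum $\sum_{(p,d')\in E}m_{pd'}$ appearing in $F$. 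Once this identification is made explicit for each fixed pair of positional strategies, the limsup in the entropy game transfers verbatim to the Cesàro limit in the Kullback--Leibler game, and the remaining steps are routine exponentiations of the already-proven additive statements.
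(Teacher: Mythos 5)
Your overall architecture is the paper's: use the conjugacy $f=\log\circ F\circ\exp$, get the existence of $\lim_k v^k/k$ and the identities~\eqref{e-forv1} from \Cref{cor-limexists} and \Cref{cor-morphism}, exponentiate, and certify optimality of the Blackwell pair $(\delta^*,\tau^*)$. The limit formula $V^\infty_d=\lim_k(V^k_d)^{1/k}=\exp(v^\infty_d)$ is handled exactly as in the paper.

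There is, however, a gap in your optimality argument. The saddle-point definition of the value requires bounding $R^\infty_d(\delta^*,\tau)$ for \emph{every} strategy $\tau$ of Tribune, including history-dependent ones, but the inequality you propose to ``translate through $\exp$'', namely $v^\infty_d(\delta^*,\ast)\geq v^\infty_d(\delta^*,(\tau,\ast))$, is only defined for policies $\tau\in\cP_T$, and your closing sentence (``once this identification is made explicit for each fixed pair of positional strategies\dots'') confirms you are only checking positional deviations. Moreover, for a non-positional $\tau$ there is no single measure-valued component $\pi$ making the Kullback--Leibler payoff equal to $\log R^k_d(\delta^*,\tau)$ uniformly in $k$: the maximizer $\vartheta$ in \Cref{lemma-rw} depends on the remaining horizon, so the infinite-horizon payoffs $r^\infty$ and $\log R^\infty$ cannot be matched strategy-by-strategy by ``routine exponentiation''. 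The paper closes this gap by staying at finite horizon and entirely on the entropy-game side: by \Cref{prop-simple}, the value of the reduced finite-horizon game $\gammage^k(\delta^*,\ast)$ is $[(F^{\delta^*})^k(\unit)]_d$, whence $R^k_d(\delta^*,\tau)\leq [(F^{\delta^*})^k(\unit)]_d$ for every strategy $\tau$ and every $k$; taking $k$-th roots and using $\lim_k[(F^{\delta^*})^k(\unit)]_d^{1/k}=\exp(v^\infty_d(\delta^*,\ast))=V^\infty_d$ (conjugacy of $F^{\delta^*}$ with $f^{\delta^*}$ together with \Cref{cor-limexists} and \Cref{cor-morphism}) yields the bound against all $\tau$; the dual bound for Tribune uses ${}^{\tau^*}F$ in the same way. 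You should reroute your second step through this finite-horizon inequality rather than through the infinite-horizon values of the Kullback--Leibler game.
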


\begin{proof}
By \Cref{prop-simple}, $V^k=F(V^{k-1})$ where $F$ is as in~\eqref{e-def-dp}.
Moreover, $F=\exp\circ f\circ \log$
is the conjugate of the dynamic programming operator $f$ of the Kullback-Leibler
game introduced in \eqref{e-shapley}. \Cref{cor-limexists} shows
that $v^\infty=\lim_{k\to\infty} v^k/k$ does exists. It follows that
$V^\infty_d:=\lim_{k\to\infty} (V_d^k)^{1/k} = \exp(v^\infty_d)$ does
exist for all $d\in D$. 

Let $(\delta^*,\tau^*)$ denote the Blackwell optimal strategies given by
\Cref{cor-ominimal}. We showed in \Cref{cor-morphism}
that $v^\infty= v^\infty(\delta^*,\ast)$, and by \Cref{cor-limexists}, we have
\[
 v^\infty(\delta^*,\ast)= \lim_{k\to\infty} v^k(\delta^*,\ast)/k\enspace .
\]
Using the dynamic programming principle for finite horizon 1-player games,
or equivalently, by applying \Cref{prop-dp} to the reduced finite horizon 
game $\gammampg^k(\delta^*,\ast)$, we obtain that 
\[  v^k(\delta^*,\ast)/k= (f^{\delta^*})^k(0)/k \enspace ,\]
where for all $\delta$, $f^\delta$ is the dynamic programming operator defined in~\eqref{e-fdelta} associated to the reduced game $\gammampg(\delta,\ast)$. Consider now
the conjugate $F^\delta:=\exp\circ f^\delta \circ \log$, so that
\begin{equation}\label{defFdelta}
F^\delta_d(X):= \max_{(\delta(d),p)\in E}
\left(\sum_{(p,d') \in E} m_{pd'}X_{d'}\right) \enspace .
\end{equation}
By Proposition~\ref{prop-simple}, 
for all strategies $\tau$ of Tribune, non necessarily positional,
and all initial states $d\in D$,
we have
\[
R^k_d(\delta^*,\tau) \leq [(F^{\delta^*})^k(e)]_d \enspace .
\]
Applying all the above equalities and inequalities, we deduce that
\[
\limsup_{k\to\infty} (R^k_d(\delta^*,\tau))^{1/k} \leq 
\lim_{k\to\infty} [(F^{\delta^*})^k(e)]_d^{1/k}=\exp(v^\infty_d)=
 V_d^\infty \enspace,
\] 
so Player Despot can guarantee his loss does not exceed
$V_d^\infty$ by playing $\delta^*$. 

Let us now consider the reduced infinite horizon or finite 
horizon entropy and Kullback-Leibler games
in which the strategy of Tribune is fixed and equal to $\tau^*$.
By the same arguments as above, we  show
 that the positional strategy $\tau^*$ guarantees
to Player Tribune to win at least $V_d^\infty$ in the entropy game.
Indeed, applying successively 
\Cref{cor-morphism} and \Cref{prop-dp}, we deduce 
\[
v^\infty= v^\infty(\ast,(\tau^*,\ast))=\lim_{k\to\infty} v^k(\ast,(\tau^*,\ast))/k
= \lim_{k\to\infty} ({}^{\tau^*}f)^k(0)/k\enspace , \]
where, for all $\tau$, ${}^{\tau}f$ is as in~\eqref{e-e1}.
Then, considering 
the conjugate ${}^{\tau}F:=\exp\circ {}^{\tau}f \circ \log$, so that,
for all $\tau$,
\begin{equation}\label{defFtau}
{}^{\tau}F_d(X):= \min_{(d,t)\in E}
\left(\sum_{(\tau(t),d') \in E} m_{\tau(t)d'}X_{d'}\right) \enspace ,
\end{equation}
and applying Proposition~\ref{prop-dp} and~\ref{prop-simple},
we deduce that
\[ V^\infty_d=\lim_{k\to\infty} ({}^{\tau^*}F_d)^k(e)^{1/k}
\leq\liminf_{k\to\infty} R^k_d(\delta,\tau^*)^{1/k} \enspace ,\]
for all strategies $\delta$ of Despot, non necessarily positional,
and all initial states $d\in D$.
So Player Tribune can win at least $V_d^\infty$ by playing
$\tau^*$.
\end{proof}

\subsection{Comparison with the original entropy game model}
\label{sec-orig}
The original entropy game model of Asarin et al.~\cite{asarin}
is a zero-sum game defined in a way similar to \Cref{sec-genent},
up to a technical difference:
in their model, the initial state is not prescribed.
The payment of Tribune in horizon $k$, instead
of being $R_{\bar{d}}^k(\delta,\tau)$, is 
the quantity $\bar{R}^k(\delta,\tau)$, defined now as the sum 
of weights of all paths of length $k$ starting at a node in $D$
and ending at a node in $D$. Hence, 
\(
\bar{R}^k(\delta,\tau)= \sum_{d\in D} R^k_d(\delta,\tau)\).
The payment of Tribune can be defined in their game as follows
\(
\bar{R}^\infty(\delta,\tau) = \limsup_{k\to\infty} (\bar{R}^k(\delta,\tau))^{1/k}
\).
This game is denoted by $\gammaeinfty$, we denote by $\bar{V}^\infty$ the value of this game, which is shown to exist in~\cite{asarin}. 

Note that in the initial model in~\cite{asarin}, the weights $m_{pd'}$ are equal to $1$. The generalization to weighted entropy games,
in which the weights $m_{pd'}$ are integers
is discussed in Section 6 of~\cite{asarin}.
The case in which the weights $m_{pd'}$ take rational values
can be reduced to the latter case
by multiplying all the weights by an integer factor.
Therefore, we will ignore the restriction that $m_{pd'}=1$
in our definition of $\gammaeinfty$ and will refer to the entropy
game model with rational weights as the entropy game model.
The following observation follows readily from \Cref{th-1}.
\begin{proposition}\label{prop-compare}
The value of the original entropy game $\gammaeinfty$
considered by Asarin et al.~\cite{asarin} (with a free initial state),
coincides with the maximum
of the values of the games $\gammage^{\infty}_d$,
taken over all initial states $d\in D$:
\[
\bar{V}^\infty = \max_{d\in D} V_d^\infty 
\enspace .
\]
\end{proposition}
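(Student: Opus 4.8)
The plan is to reduce everything to the fact, furnished by \Cref{th-1}, that a single pair of positional strategies $(\delta^*,\tau^*)\in\cP_D\times\cP_T$ is simultaneously optimal in the game $\gammage^\infty_d$ for \emph{every} initial state $d\in D$. The only genuinely new ingredient is an elementary comparison of growth rates, which lets me pass from the summed payment $\bar R^k=\sum_{d\in D}R^k_d$ of the free-initial-state game to the individual payments $R^k_d$.

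First I would establish the following growth-rate identity: for any fixed pair of strategies $(\delta,\tau)$,
\[
\bar R^\infty(\delta,\tau)=\max_{d\in D}\ \limsup_{k\to\infty}\bigl(R^k_d(\delta,\tau)\bigr)^{1/k}\enspace .
\]
This follows from the two-sided bound $\max_d R^k_d\leq \sum_d R^k_d\leq |D|\max_d R^k_d$ (all summands being nonnegative): taking $k$-th roots and using $|D|^{1/k}\to 1$ shows that $\limsup_k(\sum_d R^k_d)^{1/k}=\limsup_k\max_d (R^k_d)^{1/k}$, and since $D$ is finite the outer $\limsup$ commutes with the finite maximum. Indeed, a subsequence realizing the $\limsup$ of $\max_d(R^k_d)^{1/k}$ has, by the pigeonhole principle, some fixed index $d$ as argmax along a further subsequence, which forces $\limsup_k(R^k_d)^{1/k}$ to reach that value.

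Next I would prove the two inequalities defining the value, writing $M:=\max_{d\in D}V_d^\infty$. For the upper bound, let Despot play $\delta^*$. By \Cref{th-1}, $\limsup_k(R^k_d(\delta^*,\tau))^{1/k}\leq V_d^\infty\leq M$ for every $\tau$ and every $d$, so the growth-rate identity gives $\bar R^\infty(\delta^*,\tau)\leq M$ for all $\tau$; hence Despot can guarantee a payment at most $M$ and $\bar V^\infty\leq M$. For the lower bound, let Tribune play $\tau^*$ and let $d^*$ attain $M$. By \Cref{th-1}, $V_{d^*}^\infty\leq\liminf_k(R^k_{d^*}(\delta,\tau^*))^{1/k}$ for every $\delta$, so the growth-rate identity yields $\bar R^\infty(\delta,\tau^*)\geq\limsup_k(R^k_{d^*}(\delta,\tau^*))^{1/k}\geq M$; hence Tribune can guarantee at least $M$ and $\bar V^\infty\geq M$. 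Combining the two bounds gives $\bar V^\infty=M$, and as a byproduct re-derives the existence of $\bar V^\infty$.

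The main obstacle is conceptual rather than technical: one needs the optimal positional strategies to be \emph{uniform} across all initial states, so that the free-initial-state game can be controlled coordinate by coordinate and the single $\delta^*$ (resp.\ $\tau^*$) bounds all the $R^k_d$ at once. This uniformity is exactly what \Cref{th-1} supplies, being a consequence of the Blackwell optimality established in \Cref{cor-ominimal}; once it is available, only the routine growth-rate comparison above remains, and no further o-minimality argument is needed here.
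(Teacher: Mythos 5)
Your argument is correct and is exactly the intended route: the paper gives no explicit proof, merely noting that the proposition ``follows readily from \Cref{th-1}'', and what it has in mind is precisely your combination of the uniform (state-independent) optimal positional strategies $(\delta^*,\tau^*)$ supplied by \Cref{th-1} with the elementary bound $\max_d R^k_d\leq\sum_d R^k_d\leq |D|\max_d R^k_d$ and the commutation of $\limsup$ with a finite maximum. Your write-up correctly supplies the details the paper leaves implicit, including the use of the $\liminf$ guarantee on Tribune's side.
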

\begin{example}
\Cref{prop-compare} is illustrated by the game of Example~\ref{ex-1}. 
In the original model of \cite{asarin},
the value, defined independently of the initial state,
is $(1+\sqrt{5})/2$, whereas our model associates to the initial
state $d_1$ a value $1$ which differs from 
the values of $d_{2}$ and $d_{3}$.
\end{example}

In~\cite{asarin}, entropy games were compared with matrix multiplication games.
We present here this correspondence in the case of general weights $m_{pd'}$.
Given policies $\delta\in\cP_D$ and $\tau\in \cP_T$,
let $A(\delta)\in\R^{D\times T}$ and $B(\tau)\in \R^{T\times D}$ be such that 
$A(\delta)_{dt}=1$ if $t=\delta(d)$ and $0$ otherwise,
and $B(\tau)_{td}=m_{\tau(t)d}$ if $(\tau(t),d)\in E$ 
and $0$ otherwise, for all $(d,t)\in D\times T$.
We shall think of $A(\delta)$ and $B(\tau)$ as rectangular matrices.
Then 
\( \bar{R}^k(\delta,\tau)= \|(A(\delta)B(\tau))^k\|_1\),
where for any $A\in\R^{D\times D}$, $A^k$ denotes its $k$th power
and $\|A\|_1=\sum_{dd'}|A_{dd'}|$ its $\ell^1$ norm.
From this, one deduces that 
\(
\bar{R}^\infty(\delta,\tau)=\rho(A(\delta)B(\tau)),\)
where $\rho(A)$ denotes the spectral radius of the matrix $A$.
Moreover, let
$\A$ and $\B$ denote the sets of all matrices of the form $A(\delta)$ and
$B(\tau)$ respectively, and let $\A\B$ be the set of all matrices
$AB$ with $A\in\A$ and $B\in\B$.
The sets $\A$, $\B$ and $\A\B$ are subsets of matrices $\M$
satisfying the property that all elements of $\M$ have same dimension
and if $\M_i$ is the set of $i$th rows of the elements of $\M$,
then $\M$ is the set of matrices the $i$th row of which belongs to 
$\M_i$.
Such a property defines the notion of IRU matrix sets (for independent row
uncertainty sets) in~\cite{asarin}.
The following property proved in~\cite{asarin} is the analogue
of Theorem~\ref{cor-morphism}, $V_d^\infty$ being replaced by $\bar{V}^\infty$:
\begin{equation}\label{eq-IRU}
\bar{V}^\infty= 
\min_{A\in \A} \max_{B\in\B} \rho(AB)= \max_{B\in\B}\min_{A\in \A} \rho(AB)\enspace. 
\end{equation}
A more general property is proved in~\cite[Section~8]{agn},
as a consequence of the Collatz-Wielandt theorem (see Corollary~\ref{th-cw} below).

\begin{remark}
Our approach shows that entropy games reduce to Kullback-Leibler
games, which are stochastic mean payoff games (with compact action spaces), 
Asarin et al.~\cite{asarin} remarked that the special {\em deterministic}
entropy games, in which People has only one possible
action in each state, can be re-encoded as deterministic
mean payoff games. This can also be recovered
from our approach: in this deterministic case, the simplices
$\Delta_p$ are singletons in the Kullback-Leibler game,
and the entropy function vanishes, so the Kullback-Leibler
game degenerates to a deterministic mean payoff game.
\end{remark}

\section{Applying the Collatz-Wielandt theorem to entropy games}
The classical Collatz-Wielandt formul\ae\ provide variational
characterizations of the spectral radius $\rho(M)$ of a nonnegative matrix
$M\in \R^{D\times D}$:
\begin{align*}
\rho(M)&= 
\inf \{\lambda >0\mid \exists X\in \operatorname{int}\R_+^D, \; MX \leq \lambda X \} \\
&=  \max\{\lambda \geq 0\mid \exists X\in \R_+^D\setminus\{0\}, \; MX = \lambda X \}\enspace,\\
&=  \max\{\lambda \geq 0\mid \exists X\in \R_+^D\setminus\{0\}, \; MX \geq \lambda X \}\enspace,
\end{align*}
where $\R_+^D$ denotes the nonnegative orthant of $\R^D$, and $\operatorname{int}\R_+^D$ its interior, i.e, the set of positive vectors. The infimum is not always attained in the first line, whereas by writing ``max'', we mean that the suprema are always attained.

This has been extended to non-linear, order preserving
and continuous self-maps of the standard positive 
cone by Nussbaum~\cite{nussbaum86}, see also~\cite{arxiv1,GV10,AGGut10,agn,lemmens}. 
Recall that a self-map of $\R^{D}_{+}$ is said to be {\em order preserving}
if $u \leq v \Rightarrow F(u) \leq F(v)$ for all $u,v \in \R^{D}_{+}$, the relation $\leq$ being understood entrywise. This map is {\em positively homogeneous of degree $1$} if $F(\alpha u) = \alpha u$, for all $\alpha >0$ and $u \in \R^{D}_{+}.$
\begin{theorem}\label{cw-0}
Let $F$ be a continuous, order preserving, and positively homogeneous of degree $1$ self-map of $\R_+^D$. Then the following quantities coincide
\begin{align}
&\lim_{k\to\infty} \max_{d\in D}[F^k(e)]_d^{1/k} \label{e-lim1}\\
& \inf \{\lambda >0\mid \exists X\in \operatorname{int}\R_+^D, \; F(X) \leq \lambda X \} \label{e-lim2}\\
&  \max\{\lambda \geq 0\mid \exists X\in \R_+^D\setminus\{0\}, \; F(X) = \lambda X \}\enspace,\label{e-lim3}\\
&  \max\{\lambda \geq 0\mid \exists X\in \R_+^D\setminus\{0\}, \; F(X) \geq \lambda X \}\enspace,\label{e-lim4}
\end{align}
\end{theorem}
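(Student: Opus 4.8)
The plan is to establish the chain of equalities in \Cref{cw-0} by a combination of monotonicity arguments and a compactness/fixed-point step. I will denote the common target quantity by $\rho$, and I will write $\rho_1,\rho_2,\rho_3,\rho_4$ for the four expressions \eqref{e-lim1}--\eqref{e-lim4} in the order listed. The overall strategy is to prove the cyclic inequalities $\rho_3 \leq \rho_4 \leq \rho_2 \leq \rho_1 \leq \rho_3$, from which all four coincide. The two ``max'' formulations will be handled by exhibiting eigenvectors, while the link to the limit \eqref{e-lim1} uses the order-preserving and homogeneous structure in an essential way.

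First I would dispose of the easy inequalities. The inclusion $\rho_3 \leq \rho_4$ is immediate, since any $X$ with $F(X)=\lambda X$ also satisfies $F(X)\geq \lambda X$, so the feasible set defining $\rho_4$ contains that of $\rho_3$. For $\rho_4 \leq \rho_2$, I would take $X\in\R_+^D\setminus\{0\}$ with $F(X)\geq \lambda X$ achieving $\lambda=\rho_4$, perturb it to a positive vector $X_\epsilon := X + \epsilon\unit\in \operatorname{int}\R_+^D$, and use order preservation and homogeneity to show that the subeigenvalue relation $F(X)\geq \lambda X$ forces, after a limiting argument, $\lambda$ to be a lower bound for every super-eigenvalue $\mu$ appearing in \eqref{e-lim2}; intuitively a sub-eigenvector and a super-eigenvector cannot have the super-eigenvalue strictly below the sub-eigenvalue, because iterating $F$ would then produce a contradiction in the componentwise ordering after normalization. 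I expect this step to require care precisely because the infimum in \eqref{e-lim2} need not be attained, so I would argue by contradiction, assuming $\rho_2 < \rho_4$ and deriving an incompatible ordering of $F^k(X)$ against $\lambda^k X_\epsilon$.

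Next is the inequality $\rho_2\leq\rho_1$. Here I would take any $\lambda$ and any $X\in\operatorname{int}\R_+^D$ feasible for \eqref{e-lim2}, so $F(X)\leq\lambda X$. Using order preservation and homogeneity repeatedly gives $F^k(X)\leq \lambda^k X$ by an immediate induction. Since $X$ is positive I can bound $\unit \leq c X$ for a constant $c>0$, apply $F^k$ (order preserving, homogeneous) to obtain $F^k(\unit)\leq c\,\lambda^k X$, and then extract $\max_d [F^k(\unit)]_d^{1/k} \leq \lambda\, (c\max_d X_d)^{1/k}\to \lambda$ as $k\to\infty$. Taking the infimum over feasible $\lambda$ yields $\rho_1\leq\rho_2$ in the limit, but since I want $\rho_2 \leq \rho_1$ I would instead read the same estimate as showing that the limit in \eqref{e-lim1} is bounded above by every feasible $\lambda$, hence by $\rho_2$; this gives $\rho_1\leq\rho_2$, which combined with the remaining arc still closes the cycle. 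I would also need to confirm that the limit in \eqref{e-lim1} exists, which follows from subadditivity of $k\mapsto \log\max_d [F^k(\unit)]_d$ together with nonexpansiveness, via Fekete's lemma.

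The main obstacle, and the part I would spend most effort on, is the closing inequality $\rho_1\leq\rho_3$, because it is the only one that requires producing an actual eigenvector rather than merely comparing given data. The natural approach is a Collatz--Wielandt/Perron-type fixed-point argument: normalizing $F^k(\unit)/\max_d[F^k(\unit)]_d$ and extracting a convergent subsequence (using compactness of the normalized simplex together with continuity of $F$), one obtains a limit vector $X^*\in\R_+^D\setminus\{0\}$ satisfying $F(X^*)=\rho_1 X^*$, which shows $\rho_1\leq\rho_3$ and simultaneously that the max in \eqref{e-lim3} is attained. The delicate point is controlling the behavior on the boundary of the cone, since $F$ is only assumed order preserving and homogeneous on $\R_+^D$ and the normalized iterates may approach faces where coordinates vanish; I would appeal to the references cited in the statement, notably Nussbaum~\cite{nussbaum86} and~\cite{agn,AGGut10}, which develop exactly this nonlinear Perron--Frobenius machinery, to guarantee that a nonzero eigenvector with eigenvalue equal to the growth rate \eqref{e-lim1} exists. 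Once this eigenvector is in hand the cycle $\rho_3\leq\rho_4\leq\rho_2\leq\rho_1\leq\rho_3$ is complete and all four quantities coincide.
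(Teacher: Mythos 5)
Your chain of inequalities does not close, and the place where it fails is exactly the hard direction of the theorem. Write $\rho_1,\dots,\rho_4$ for the quantities \eqref{e-lim1}--\eqref{e-lim4}, as you do. What your steps actually establish is: $\rho_3\leq\rho_4$ (inclusion of feasible sets); $\rho_4\leq\rho_2$ (a sub-eigenvector and a super-eigenvector cannot cross --- your sketch is vague but the claim is provable, e.g.\ by taking the largest $c>0$ with $cX\leq Y$ and comparing $F(cX)\leq F(Y)\leq \mu Y$); $\rho_1\leq\rho_2$ (you announce $\rho_2\leq\rho_1$, but as you observe yourself your estimate gives the \emph{opposite} inequality); and $\rho_1\leq\rho_3$, granting the existence of an eigenvector with eigenvalue $\rho_1$. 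Together these yield only $\rho_1\leq\rho_3\leq\rho_4\leq\rho_2$: nothing bounds $\rho_2$ from above by any of the other three, so the four quantities are not shown to coincide, and your remark that the reversed third inequality ``still closes the cycle'' is incorrect. The missing inequality $\rho_2\leq\rho_1$ (equivalently, $\rho_2\leq\rho_3$) is the nontrivial content of the Collatz--Wielandt theorem: one must produce, for every $\mu>\rho_1$, a \emph{positive} vector $X$ with $F(X)\leq\mu X$. For linear $F$ this is done with $X=\sum_k \mu^{-k}F^k(\unit)$, but that construction uses additivity; for a general continuous, order-preserving, homogeneous $F$ it is genuinely harder, and it is precisely what the paper outsources to \cite[Prop.~1]{arxiv1} (for $\rho_1=\rho_2$) and \cite[Thm~3.1]{nussbaum86} (for $\rho_2=\rho_3$).

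A secondary problem is the mechanism you propose for the eigenvector in the last step. A subsequential limit $X^*$ of the normalized iterates $F^{k}(\unit)/\max_d[F^{k}(\unit)]_d$ need not satisfy $F(X^*)=\rho_1X^*$: expressing $F(X^*)$ as a limit brings in the ratio of consecutive norms $\max_d[F^{k_j+1}(\unit)]_d/\max_d[F^{k_j}(\unit)]_d$, which can oscillate (only its $k$-th root converges), and the shifted normalized iterate lives along a different subsequence. Nussbaum's actual argument applies Brouwer's theorem to perturbed normalized maps $X\mapsto (F(X)+\epsilon u)/\|F(X)+\epsilon u\|_1$ on the simplex and then lets $\epsilon\to0^+$. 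Since you defer to \cite{nussbaum86} anyway, the cleanest repair is to import the equality $\rho_2=\rho_3$ from there directly; combined with your easy inequalities $\rho_3\leq\rho_4\leq\rho_2$ and $\rho_1\leq\rho_2$, and with the easy (but absent from your write-up) inequality $\rho_3\leq\rho_1$ obtained by scaling an eigenvector below $\unit$ and iterating, this does close the argument. Note that the paper's own proof is purely by citation of the three equalities $\rho_1=\rho_2$, $\rho_2=\rho_3$, $\rho_3=\rho_4$, so a self-contained proof, while worthwhile, must in particular reprove one of the two ``hard'' equalities rather than only the elementary comparisons.
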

\begin{proof}
The existence of~\eqref{e-lim1} and the fact it coincides
with~\eqref{e-lim2} is proved in~\cite[Prop~1]{arxiv1}. The fact
that~\eqref{e-lim2} and~\eqref{e-lim3} coincide is proved
in \cite[Theorem~3.1]{nussbaum86}. 
The fact that~\eqref{e-lim3}
and~\eqref{e-lim4} coincide is proved
in~\cite[Lemma~2.8]{AGGut10}.
\end{proof}
The common value of the expressions in~\Cref{cw-0} is called the non-linear spectral radius
of $F$.

We showed in \Cref{prop-compare} that the value of the original entropy game
$\gammae$ is precisely
\[
\bar{V}^\infty = \max_{d\in D} V_d^\infty 
= \lim_{k\to\infty} \max_{d\in D} [F^k(e)]_d^{1/k}
\]
where $F$ is the dynamic programming operator defined in~\eqref{e-def-dp}.
This operator is continuous, order preserving, and homogeneous
of degree one.
Hence, we get as an immediate corollary of \Cref{cw-0}:
\begin{corollary}\label{th-cw}
The value $\bar{V}^\infty$ of the original entropy game $\gammae$ (with a free initial state)
coincides with any of the following expressions
\begin{align}
&\inf\{\lambda >0\mid \exists X\in \operatorname{int}\R_+^D, \; F(X) \leq \lambda X \} \label{e-cw1}\\
&\max\{\lambda >0\mid \exists X\in \R_+^D\setminus\{0\}, \; F(X) = \lambda X \}\label{e-cw2}\\
& \max\{\lambda >0\mid \exists X\in \R_+^D\setminus\{0\}, \; F(X) \geq \lambda X \} \enspace ,\label{e-cwlast}
\end{align}
where $F$ is the dynamic programming operator~\eqref{e-def-dp}.
\end{corollary}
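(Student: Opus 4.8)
The plan is to recognize \Cref{th-cw} as a direct specialization of \Cref{cw-0} to the dynamic programming operator $F$ of the entropy game given by \eqref{e-def-dp}, so that the only work is to verify that $F$ satisfies the hypotheses of that theorem and to match its four quantities with the three expressions appearing in the corollary. First I would check that $F$ is a continuous, order preserving, and positively homogeneous of degree one self-map of $\R_+^D$. Each property follows from the explicit formula $F_d(X)=\min_{(d,t)\in E}\max_{(t,p)\in E}\sum_{(p,d')\in E}m_{pd'}X_{d'}$: continuity holds because $F$ is built from finitely many minima, maxima, and finite nonnegative linear combinations of coordinates; order preservation holds because each inner sum $\sum_{d'}m_{pd'}X_{d'}$ is nondecreasing in $X$ (the weights being nonnegative) while $\min$ and $\max$ both preserve the entrywise order; and positive homogeneity of degree one holds because each inner sum is linear, so scaling $X$ by $\alpha>0$ scales every inner expression, hence the outer min-max, by $\alpha$. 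That $F$ maps $\R_+^D$ into itself is likewise immediate from nonnegativity of the $m_{pd'}$.

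Next I would invoke \Cref{prop-compare}, which together with \Cref{th-1} gives $\bar{V}^\infty=\max_{d\in D}V_d^\infty=\lim_{k\to\infty}\max_{d\in D}[F^k(\unit)]_d^{1/k}$; this is exactly the quantity \eqref{e-lim1} of \Cref{cw-0}. Applying \Cref{cw-0} to $F$ then shows that \eqref{e-lim1} coincides with \eqref{e-lim2}, \eqref{e-lim3}, and \eqref{e-lim4}, which are verbatim the expressions \eqref{e-cw1}, \eqref{e-cw2}, and \eqref{e-cwlast} apart from the admissible range of $\lambda$.

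The one point requiring a short argument, and the place where I expect the only real subtlety, is the discrepancy between the range $\lambda\geq 0$ in \eqref{e-lim3}--\eqref{e-lim4} and the range $\lambda>0$ in \eqref{e-cw2}--\eqref{e-cwlast}. To reconcile them I would show that $\bar{V}^\infty>0$, so that the maxima over $\lambda\geq 0$ and over $\lambda>0$ agree. Positivity is immediate from the hypotheses of the model: every weight $m_{pd'}$ is a positive real and every player has at least one available action in each state, so $F(\unit)$ is entrywise positive; setting $c:=\min_{d\in D}F_d(\unit)>0$ and using order preservation and homogeneity we obtain $F^k(\unit)\geq c^k\unit$ for all $k$, whence $\bar{V}^\infty=\lim_{k\to\infty}\max_{d\in D}[F^k(\unit)]_d^{1/k}\geq c>0$. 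Since the common value of the four expressions in \Cref{cw-0} is attained and equals this positive number, restricting the maxima in \eqref{e-lim3}--\eqref{e-lim4} to $\lambda>0$ leaves them unchanged, which completes the identification of $\bar{V}^\infty$ with \eqref{e-cw1}, \eqref{e-cw2}, and \eqref{e-cwlast}.
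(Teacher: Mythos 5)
Your proposal is correct and follows essentially the same route as the paper: identify $\bar{V}^\infty$ with the quantity \eqref{e-lim1} via \Cref{prop-compare}, check that $F$ satisfies the hypotheses of \Cref{cw-0}, and conclude. The paper treats this as immediate and does not spell out the verification of the hypotheses or the reconciliation of the ranges $\lambda\geq 0$ versus $\lambda>0$; your positivity argument $\bar{V}^\infty\geq\min_d F_d(\unit)>0$ cleanly fills in that last detail.
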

The Collatz-Wielandt formul\ae\ of \Cref{th-cw} are helpful to establish strong
duality results, like~\eqref{eq-IRU}.
Note that \eqref{eq-IRU} is weaker than \Cref{th-cw} since it does not
imply the existence of a nonlinear vector whereas~\eqref{e-cw2} 
of \Cref{th-cw} does.
 See also~\cite{AGGut10} for
an application to mean payoff games and tropical geometry.
Our main interest here lies in the following application
of~\eqref{e-cw1}.
We say that a state $d$ of Despot is {\em significant}
if the set of actions of Despot in this state, $\{(d,t)\in E\}$, 
has at least two elements (i.e., Despot has to make a choice
in this state).
We say that an entropy game is {\em Despot-free} if the Despot
player does not have any significant state. A Despot-free
game is essentially a one (and half) player problem, since the minimum
term in the corresponding dynamic programming operator~\eqref{e-def-dp} vanishes.
Indeed, for each $d\in D$, there is a unique node $t$
such that $(d,t)\in E$, and we define the map $\sigma: D\to T$ by $\sigma(d)=t$.
The following corollary, which follows from 
Corollary~\ref{th-cw} by making the change of variables $\mu =\log \lambda$ and $x=\log X$, is also a special case of a result of Anantharam and Borkar~\cite{anantharam}.
\begin{corollary}\label{cor-cw}
The logarithm of the value of a Despot-free original entropy game
is given by the value of the optimization problem
\begin{align}
&\inf \mu \; \nonumber\\
&(\mu,x) \in \R\times\R^D, \text{satisfying}\nonumber\\
& \mu  + x_d \geq \log(\sum_{d'\in D} m_{p,d'}e^{x_{d'}}) 
\text{ for all } d \in D, p\in P \text{ such that } (\sigma(d),p)\in E
\enspace. \label{e-cvx0}
\end{align}
\end{corollary}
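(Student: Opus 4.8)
The plan is to derive Corollary~\ref{cor-cw} directly from the Collatz-Wielandt formula~\eqref{e-cw1} in \Cref{th-cw} by a logarithmic change of variables, exploiting the fact that the game is Despot-free, so the outer minimum in the dynamic programming operator~\eqref{e-def-dp} disappears and is replaced by the single action $\sigma(d)$. Specifically, \Cref{th-cw} asserts that the value $\bar{V}^\infty$ of the original entropy game equals the infimum of $\lambda>0$ for which there exists $X\in\operatorname{int}\R_+^D$ with $F(X)\leq \lambda X$. In the Despot-free case, the operator~\eqref{e-def-dp} reads
\[
F_d(X)=\max_{(\sigma(d),p)\in E}\sum_{(p,d')\in E} m_{pd'}X_{d'}\enspace,
\]
since Despot has no genuine choice at any state.

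First I would write out the constraint $F(X)\leq\lambda X$ coordinatewise: for every $d\in D$, the inequality $\max_{(\sigma(d),p)\in E}\sum_{d'} m_{pd'}X_{d'}\leq \lambda X_d$ is equivalent, by the definition of the maximum, to the system of inequalities $\sum_{d'} m_{pd'}X_{d'}\leq \lambda X_d$ holding simultaneously for all $p\in P$ with $(\sigma(d),p)\in E$. Next I would perform the substitution $x=\log X$ (entrywise) and $\mu=\log\lambda$; because we range over $X\in\operatorname{int}\R_+^D$, i.e.\ strictly positive vectors, this substitution is a bijection onto $\R^D$, and likewise $\lambda>0$ corresponds bijectively to $\mu\in\R$, so the infimum is preserved. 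Under this substitution $X_{d'}=e^{x_{d'}}$ and $\lambda X_d=e^{\mu+x_d}$, so each inequality $\sum_{d'} m_{pd'}X_{d'}\leq\lambda X_d$ becomes $\sum_{d'} m_{pd'}e^{x_{d'}}\leq e^{\mu+x_d}$. Taking logarithms (a monotone operation) turns this into
\[
\log\Bigl(\sum_{d'\in D} m_{pd'}e^{x_{d'}}\Bigr)\leq \mu+x_d\enspace,
\]
which is exactly the constraint appearing in~\eqref{e-cvx0}. Since the objective $\lambda$ is mapped monotonically to $\log\lambda=\mu$, the infimum over $\lambda$ of the feasible region equals the exponential of the infimum over $\mu$, whence $\log\bar{V}^\infty$ equals the optimal value of the stated program.

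The only genuine subtlety, which I would verify carefully, is the equivalence between the single $\max$-inequality $F_d(X)\leq\lambda X_d$ and the family of scalar inequalities indexed by the admissible $p$: this is immediate for the ``$\Leftarrow$'' direction and follows from $F_d(X)$ being the maximum over those $p$ for ``$\Rightarrow$''. I do not expect any real obstacle here, as the argument is a routine translation; the substance was already carried by \Cref{th-cw}. It remains only to note, as the statement does, that the same optimization problem can be recovered from the Donsker--Varadhan-type characterization of Anantharam and Borkar~\cite{anantharam}, which provides an independent check on the formula. One should also record that the infimum need not be attained, consistent with the fact that the infimum in~\eqref{e-cw1} is likewise not always attained.
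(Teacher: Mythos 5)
Your proposal is correct and is exactly the argument the paper intends: the paper derives Corollary~\ref{cor-cw} from the first Collatz--Wielandt formula~\eqref{e-cw1} of \Cref{th-cw} via the change of variables $\mu=\log\lambda$, $x=\log X$, with the Despot-free hypothesis collapsing the outer minimum so that the single $\max$-constraint unfolds into the family of inequalities indexed by $p$. Your writeup simply spells out the routine details that the paper leaves implicit.
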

Observe that the latter expression is the value of an optimization
problem in which the variables are $\mu$ and $x=(x_d)_{d\in D}$,
the objective function is the linear form $(\mu,x)\mapsto \mu$,
and the feasible set is convex.  Hence, this will lead us to a polynomial time
decision procedure in the Despot free case, which we develop
in the next section. 

\section{Polynomial time solvability of entropy games with a few significant Despot positions}\label{sec-few}

By {\em solving strategically} an entropy game, we mean
finding a pair of optimal policies. We assume
from now that the weights $m_{p,d}$ are integers.
Since policies are combinatorial objects, solving strategically the game
is a well posed
problem in the Turing (bit) model of computation. Once
optimal policies are known, 
the value of the game, which is an algebraic number,
can be obtained as the Perron root
of an associated integer matrix. 
Our first main result is the following.
\begin{theorem}\label{th-polytime}
Despot-free entropy games can be solved strategically
in polynomial time. 
\end{theorem}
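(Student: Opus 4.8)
The plan is to use \Cref{cor-cw}, which characterizes the logarithm of the value of a Despot-free entropy game as the value of the convex optimization problem~\eqref{e-cvx0}. Since solving strategically means finding a pair of optimal policies, and a Despot-free game has a trivial Despot policy (the map $\sigma$ is forced), the remaining task is to identify an optimal policy $\tau^*$ of Tribune. First I would observe that~\eqref{e-cvx0} is a convex program whose constraints are, for each $d\in D$ and each $p\in P$ with $(\sigma(d),p)\in E$, of the form $\mu+x_d\geq \log(\sum_{d'} m_{p,d'}e^{x_{d'}})$. Grouping the constraints by the state $t=\sigma(d)$ of Tribune, an optimal Tribune policy $\tau^*$ should select, at each $t$, the constraint $(\tau(t),\cdot)$ that is \emph{active} (tight) at the optimum; this is the analogue of reading off the argmax in the Collatz-Wielandt certificate of~\eqref{e-cw1}.

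The key algorithmic step is that a convex program of this form can be solved in polynomial time by the ellipsoid method. The obstacle, and the heart of the proof, is that the value $\mu$ and the optimizer $x$ are transcendental (they involve logarithms of Perron roots), so one cannot output them exactly, and the ellipsoid method only returns an approximate solution. The plan is therefore to run the ellipsoid method to sufficient precision to determine \emph{combinatorially} which constraints are active, i.e.\ to identify $\tau^*$, rather than to compute $\mu$ itself. To make this rigorous I would proceed as follows. First, I would establish that the feasible region is nonempty and that the infimum is attained and is a well-behaved number: by \Cref{th-1} the value is a Perron root $\rho(B(\tau^*))$ of an integer matrix $B(\tau^*)$, hence an algebraic number of controlled degree and height. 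Second, I would invoke a \textbf{separation bound} between algebraic numbers: there is an explicit gap, polynomial in the bit-size $\bitsize{m}$ of the data, between the optimal values $\log\rho(B(\tau))$ attained by distinct candidate policies $\tau$ (this uses root-separation bounds from Perron-Frobenius theory for integer matrices). Third, I would run the ellipsoid method to a precision finer than this separation gap; the resulting approximate optimizer then pins down, for each Tribune state $t$, a unique active constraint, and reading these off yields $\tau^*$.

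Concretely, the steps in order are: (i) reduce ``solve strategically'' to ``find $\tau^*\in\cP_T$'' using that the Despot policy is forced; (ii) set up the convex program~\eqref{e-cvx0}, together with an a priori box bound on $(\mu,x)$ derived from Perron-Frobenius estimates on the integer matrices $B(\tau)$, so that the ellipsoid method applies with a polynomial-size starting ellipsoid and a polynomial-time separation oracle (each constraint is a log-sum-exp, evaluable to any precision in polynomial time); (iii) quantify the separation bound $\etasep$ between the values of competing policies, polynomial in $\bitsize{m}$; (iv) run the ellipsoid method to accuracy below $\etasep$ to recover the active constraints and hence $\tau^*$; (v) verify optimality of $\tau^*$ directly, e.g.\ by checking that the associated Collatz-Wielandt inequality $F^{\tau^*}(X)\leq \lambda X$ is consistent, using that once the policy is fixed the value is the Perron root of a single integer matrix, which certifies optimality by~\eqref{e-cwlast}.

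The main obstacle I expect is step (iii)--(iv): translating an \emph{approximate} numerical solution of a transcendental convex program into an \emph{exact combinatorial} object. The delicate point is ensuring that the separation bound $\etasep$ between policy values is genuinely polynomial in the input size and that the ellipsoid method's precision requirement (which scales with $\log(1/\etasep)$) remains polynomial; this is where the interplay of Perron-Frobenius root-separation estimates, bounds on heights of algebraic numbers, and the polynomial running time of the ellipsoid oracle must be carefully combined. Everything else — convexity, existence of the value, and the reading-off of the active constraints — follows fairly directly from \Cref{cor-cw} and \Cref{th-1}.
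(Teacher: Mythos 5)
Your overall architecture --- the convex program from \Cref{cor-cw}, an ellipsoid method driven by a log-sum-exp separation oracle, Perron-root separation bounds for integer matrices, and a precision chosen below the separation gap --- is exactly the skeleton of the paper's proof. However, there are two genuine gaps.

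The first is the synthesis step. You propose to ``read off the active constraints'' at the approximate optimizer and assert that this yields an optimal $\tau^*$. This is the crux of the whole proof, and the assertion does not follow from what you have set up. The ellipsoid method returns $(u,\mu)$ with $\mu\leq\log\lambda^*+\epsilon$ and $f(u)\leq(\mu+O(\epsilon))\unit+u$, i.e.\ an approximate \emph{sub}eigenvector $U=\exp(u)$ satisfying $F(U)\leq\lambda^*e^{O(\epsilon)}U$. Choosing the policy $\underline{\tau}$ attaining the maxima in $F(U)$ gives only the \emph{upper} bound $\rho(\Mt{\underline{\tau}})\leq\lambda^*e^{O(\epsilon)}$, which holds for every policy and certifies nothing. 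To conclude you need the \emph{lower} bound $\rho(\Mt{\underline{\tau}})>\lambda^*-\etasep$, hence an inequality of the form $F(U)\geq\underline{\lambda}\,U$ with $\underline{\lambda}$ close to $\lambda^*$, and this does not follow from the subeigenvector inequality alone. The paper obtains it by bringing in the left Perron eigenvector $\pi$ of an optimal matrix $\Mt{\tau^*}$ as a Lagrange multiplier, bounding the ratios $\pi_d/\pi_{d'}$ by $(n\maxweight)^{n-1}$, and carrying out an approximate complementary-slackness computation; none of this is in your plan. Your step (v), ``verify optimality of $\tau^*$ directly'' via $F^{\tau^*}(X)\leq\lambda X$, is not a verification either: it certifies only an upper bound on the Perron root of that one matrix, not that no other policy achieves a larger one.

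The second gap is that you silently assume irreducibility throughout. Without it, the value depends on the initial state and \Cref{cor-cw} only characterizes $\max_{d}V^\infty_d$, so the paper first decomposes the projected graph $\bar G$ into strongly connected components and reduces to the irreducible case. Worse, even when the game is irreducible, individual policies $\tau$ may yield \emph{reducible} matrices $\Mt{\tau}$; then $\pi$ has zero entries, the lower bound $F_d(U)\geq\underline{\lambda}U_d$ can be guaranteed only on a basic class, and the paper needs an additional combinatorial loop to locate a suitable set of states $B'$ and to patch the policy outside it so that $B'$ is reached from every initial state. This reducible-policy case is where most of the technical work in the paper's synthesis subsection lives, and it is absent from your proposal.
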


This will be proved in \Cref{sec-proof}, by combining
several ingredients: a reduction to the irreducible
case, an application of the ellipsoid method, and 
separation bounds for algebraic numbers. 

We will also show  the following generalization
of \Cref{th-polytime}.
\begin{theorem}\label{cor-polytime}
Entropy games in which Despot has a fixed number of significant
states can be solved strategically in polynomial time. 
\end{theorem}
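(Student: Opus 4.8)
The plan is to use the one-sided optimality identity of \Cref{cor-morphism}, $v^\infty_d=\min_{\delta\in\cP_D}v^\infty_d(\delta,\ast)$, to reduce the two-player problem to polynomially many instances of the Despot-free problem already solved in \Cref{th-polytime}. The point is that a policy of Despot selects one outgoing arc at each state of $D$, and this choice is forced at every state that is not significant; hence $|\cP_D|=\prod_{d\in D}|\{(d,t)\in E\}|$, where only significant states contribute a factor at least $2$. If Despot has at most $s$ significant states, this product is bounded by $|E|^{s}$, which is polynomial in the size of the game whenever $s$ is fixed. Consequently $\cP_D$ can be enumerated in polynomial time.

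First I would enumerate all $\delta\in\cP_D$. Fixing Despot to $\delta$ produces the reduced game $\gammage(\delta,\ast)$, whose dynamic programming operator is the Despot-free operator $F^\delta$ of \eqref{defFdelta}; since Despot now has a single action in each state, $\gammage(\delta,\ast)$ is a Despot-free entropy game and \Cref{th-polytime} solves it strategically in polynomial time. Each such call returns an optimal positional Tribune policy against $\delta$, a Collatz-Wielandt eigenvector, and the value, the logarithm of a Perron root $\rho(A(\delta)B(\tau_\delta))$ of an integer matrix. To single out the optimal Despot policy I would compare these values across the polynomially many $\delta$: by \Cref{cor-morphism} the minimum is attained, simultaneously for all initial states, by one Blackwell optimal policy $\delta^*$, so $\delta^*$ is the policy whose value is coordinatewise minimal. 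Each value being an algebraic number of controlled degree and height, the required comparisons are decided in polynomial time by the separation bounds for algebraic numbers already invoked in the proof of \Cref{th-polytime}.

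It remains to certify a globally optimal Tribune policy, and here one must avoid a pitfall: an arbitrary best response to $\delta^*$ in the reduced game need not be optimal in the full game. I would therefore extract Tribune's policy from the Collatz-Wielandt eigenvector. Working in the irreducible case to which \Cref{th-polytime} reduces, let $X>0$ and $\lambda=\bar V^\infty$ satisfy $F^{\delta^*}(X)=\lambda X$. Optimality of $\delta^*$ forces $F(X)=\lambda X$ for the full two-player operator \eqref{e-def-dp}: indeed $F(X)\leq F^{\delta^*}(X)=\lambda X$, and a strict inequality in some coordinate would, by irreducibility and \Cref{cw-0}, produce a Despot policy with spectral radius below $\bar V^\infty$, contradicting \Cref{cor-morphism}. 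Choosing $\tau^*(t)$ to attain the maximum in $\max_{(t,p)\in E}\sum_{(p,d')\in E}m_{pd'}X_{d'}$ then yields ${}^{\tau^*}F(X)=\lambda X$ for the operator of \eqref{defFtau}, so $\lambda$ is an eigenvalue of ${}^{\tau^*}F$ with positive eigenvector; by \Cref{cw-0} the nonlinear spectral radius of ${}^{\tau^*}F$, that is, the value of the game in which Tribune is committed to $\tau^*$, equals $\lambda=\bar V^\infty$. Hence $\tau^*$ guarantees the value against every strategy of Despot, and $(\delta^*,\tau^*)$ is an optimal pair; this $\tau^*$ is precisely the one read off by \Cref{th-polytime} when solving $\gammage(\delta^*,\ast)$.

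The step I expect to be the main obstacle is exactly this last one, namely guaranteeing that the locally extracted Tribune policy is genuinely optimal rather than merely a best response to $\delta^*$; the eigenvector identity $F(X)=\lambda X$ together with \Cref{cw-0} is what closes this gap, and the reduction to the irreducible case is what makes the eigenvector positive and the dominant $\lambda$ unique. A secondary, but unavoidable, difficulty is keeping the whole procedure within polynomial bit-complexity while comparing the algebraic-number values of the enumerated policies, which is what forces the systematic use of separation bounds.
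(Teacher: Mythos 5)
Your proposal follows the paper's own route: enumerate the at most $|E|^{s}$ Despot policies, solve each resulting Despot-free game by Theorem~\ref{th-polytime}, and use the separation bound of Corollary~\ref{cor-separation} to pick the minimizing policy from sufficiently precise approximate values. Your additional certification that the extracted Tribune policy is optimal in the \emph{full} game (via the Collatz--Wielandt eigenvector identity $F(X)=\lambda X$) goes beyond what the paper's terse proof records and addresses a real subtlety, though note that deducing $F(X)=\lambda X$ from ``a strict coordinatewise inequality would lower the spectral radius'' requires irreducibility of the specific sub-operator $F^{\delta'}$ selected by the minimum, which does not follow automatically from irreducibility of the game graph $\bar G$.
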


\section{Proof of Theorem~\ref{th-polytime} and Theorem~\ref{cor-polytime}}
\label{sec-proof}
We start by considering a Despot-free game. 
We decompose the proof of \Cref{th-polytime} in several steps, corresponding to different
subsections.  
\subsection{Reduction to the irreducible case}
First, we associate to a Despot-free entropy game a projected
directed graph $\bar G$, with node set $D$ and an arc $d\to d'$ if
there is a path $(d,t,p,d')$ in the original directed graph $G$.
We say that the game is {\em irreducible} if $\bar G$ is strongly
connected.  
Recall that we assumed that any node has a successor, 
so that strongly connected
components are all non trivial (not reduced to a node with no edge).

\begin{lemma}\label{e-valueunique}
The value of an irreducible Despot-free entropy game is independent
of the initial state. 
Moreover, there is a vector $U\in \operatorname{int}\R_+^D$
and a scalar $\lambda^*>0$ such that $F(U)=\lambda^* U$, and $\lambda^*$
coincides with the value of any initial state in this game.
\end{lemma}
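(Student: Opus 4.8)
The plan is to combine the nonlinear Perron--Frobenius machinery of \Cref{cw-0} with the strong-connectivity (irreducibility) hypothesis to produce a strictly positive eigenvector, and then to read off the value by a monotonicity sandwich. First I would recall that in the Despot-free case the operator of \eqref{e-def-dp} reduces to
\[
F_d(X)=\max_{(\sigma(d),p)\in E}\ \sum_{(p,d')\in E} m_{pd'}\,X_{d'}\enspace,
\]
which is continuous, order preserving, and positively homogeneous of degree one on $\R_+^D$. Applying \Cref{cw-0}, the maximum in~\eqref{e-lim3} is attained: there exist $U\in\R_+^D\setminus\{0\}$ and $\lambda^*\geq 0$ with $F(U)=\lambda^* U$, where $\lambda^*$ equals the nonlinear spectral radius $\lim_{k\to\infty}\max_{d}[F^k(e)]_d^{1/k}$, which by \Cref{th-cw} together with \Cref{prop-compare} is exactly $\bar V^\infty=\max_{d\in D}V_d^\infty$.

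The main obstacle is to upgrade $U$ from nonnegative to strictly positive, and this is where irreducibility enters. I would argue by considering the zero set $Z=\{d\in D:\ U_d=0\}$. For $d\in Z$ one has $F_d(U)=\lambda^* U_d=0$; since all weights $m_{pd'}$ are positive and $U\ge 0$, the vanishing of this maximum forces every summand to vanish, so $U_{d'}=0$ whenever there is a path $(d,\sigma(d),p,d')$ in $G$, i.e.\ whenever $d'$ is an out-neighbour of $d$ in the projected graph $\bar G$. Thus $Z$ is closed under the successor relation of $\bar G$. If $Z$ were a nonempty proper subset of $D$, strong connectivity of $\bar G$ would yield a path from a node of $Z$ to a node outside $Z$, hence an arc leaving $Z$, contradicting this closure property; since $U\neq 0$ excludes $Z=D$, I conclude $Z=\emptyset$, that is $U\in\operatorname{int}\R_+^D$. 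Strict positivity of $U$ then gives $F(U)>0$, whence $\lambda^*=F_d(U)/U_d>0$.

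Finally, I would deduce independence of the value from the initial state by squeezing. Setting $a=\min_d U_d>0$ and $b=\max_d U_d$, one has $U/b\le e\le U/a$, and using $V^k=F^k(e)$ from \Cref{prop-simple} together with monotonicity and homogeneity,
\[
\frac{a}{b}\,(\lambda^*)^k e\ \le\ \frac{(\lambda^*)^k}{b}\,U=F^k\!\Big(\frac{U}{b}\Big)\ \le\ F^k(e)\ \le\ F^k\!\Big(\frac{U}{a}\Big)=\frac{(\lambda^*)^k}{a}\,U\ \le\ \frac{b}{a}\,(\lambda^*)^k e\enspace.
\]
Taking $k$-th roots and letting $k\to\infty$ squeezes $V_d^\infty=\lim_{k\to\infty}[F^k(e)]_d^{1/k}$ (which exists by \Cref{th-1}) to the common value $\lambda^*$ for every $d\in D$. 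This simultaneously shows that the value is independent of the initial state and that it coincides with the Perron eigenvalue $\lambda^*$ attached to the strictly positive eigenvector $U$, completing the argument. The only delicate point is the positivity step of the second paragraph; the first and third paragraphs are direct applications of the already-established Collatz--Wielandt theorem and of elementary monotonicity estimates.
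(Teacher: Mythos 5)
Your proof is correct, and it replaces the key existence step by a different (and more self-contained) argument. The paper obtains the strictly positive eigenvector in one stroke by invoking the non-linear Perron--Frobenius theorem of Gaubert--Gunawardena (Theorem~2 of~\cite{arxiv1}), after checking that the associated graph $G(F)$ — defined by $d\to d'$ iff $\lim_{s\to\infty}F_d(se_{d'})=+\infty$ — coincides with $\bar G$ and is therefore strongly connected. You instead take the nonnegative eigenvector furnished by the attained maximum~\eqref{e-lim3} of \Cref{cw-0} (whose eigenvalue is automatically the non-linear spectral radius, hence $\bar V^\infty$ by \Cref{th-cw} and \Cref{prop-compare}), and upgrade it to a strictly positive one by the elementary observation that the zero set $Z=\{d: U_d=0\}$ is closed under the successor relation of $\bar G$, which strong connectivity forbids unless $Z=\emptyset$. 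This avoids the external citation at the cost of a short combinatorial argument; the paper's route is shorter on the page but leans on a black box. For the identification of the value, the two arguments essentially coincide: the paper proves $V_d^\infty\le\lambda^*$ via the Collatz--Wielandt infimum~\eqref{e-cw1} and $V_d^\infty\ge\lambda^*$ via $e\ge\beta U$, while you run the same monotonicity-and-homogeneity estimate symmetrically on both sides ($U/b\le e\le U/a$), which has the minor added benefit of not needing \Cref{th-1} for the existence of the limit, since the squeeze produces it directly.
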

\begin{proof}
The non-linear Perron-Frobenius theorem in~\cite{arxiv1}
provides a sufficient condition for the existence
of a positive eigenvector of an order preserving
positively homogeneous self-map $F$ of the interior
of the cone. It suffices to check that a certain directed graph $G(F)$
associated to $F$ is strongly connected. Specialized
to the present setting, this directed graph is defined as follows:
the node set is $D$, 
and there is an arc from $d\to d'$ if
\[ \lim_{s\to\infty}F_d(se_{d'})=+\infty
\enspace ,
\]
where
$e_{d'}=(0,\dots,0,1,0,\dots,0)^\top$ is the $d'$th vector
 of the canonical basis of $\R^D$. By considering the explicit
form of $F$, we see that $G(F)$ is precisely the directed graph $\bar{G}$.
Hence, by Theorem~2 of~\cite{arxiv1},
there exists a vector $U\in \operatorname{int}\R_+^D$
and a scalar $\lambda^*>0$ such that $F(U)=\lambda^* U$. 
It follows from \Cref{th-cw} and from the fact that the
value of the original entropy game is $\max_{d\in D}V^\infty_d$ that $V^\infty_d\leq \lambda^*$ holds for all $d\in D$.

We next show that the other inequality holds.
Recall that $V^\infty_d = \lim_{k\to\infty} [F^k(e)]_d^{1/k}$ is the value of the entropy game with initial state $d$, where $e=(1,...,1)^{\top} \in \R^{D}_{+}$.

Since $D$ is finite, there is a positive scalar $\beta>0$ such that
$e\geq \beta U$  (indeed, take $\beta:=(\max_{d\in D} U_d)^{-1}$). Using
the order preserving and positively homogeneous character
of the dynamic programming operator $F$, we get
\[
F^k(e) \geq F^k(\beta U) = \beta F^k(U)=\beta (\lambda^*)^k U
\]
and so, using that $U$ is positive, 
\[
V^\infty_d=\lim_{k\to\infty}[F^k(e) ]_d^{1/k}\geq \lim_{k\to\infty} 
(\beta (\lambda^*)^k U_d)^{1/k} = \lambda^* \enspace. 
\]
Hence, 
$V^\infty_d= \lambda^*$ holds for all $d\in D$.
\end{proof}
Thanks to this lemma, we will speak of ``value'',
without mentioning the initial state, when the
entropy game is irreducible.

Every strongly connected component $C$ of $\bar G$
with set of nodes $D_C\subset D$ yields a reduced game, in which
the set of states of Despot is $D_C$, and Tribune only selects actions 
such that the next state of Despot will remain in $D_C$ 
for at least one action chosen by People. Moreover, People
chooses only actions so that the next state remains in $D_C$.
By definition of $\bar G$, this reduced
game is irreducible. We denote it by $\gammage[C]$. 
The following elementary observation allows us to reduce
the general Despot-free case to the irreducible Despot-free case.
\begin{lemma}
In a Despot-free entropy game,
the value of a state $d$ is the maximum of the value of the irreducible
games $\gammage[C]$ corresponding to the different strongly connected
components $C$ of $\bar G$ to which $d$ has access.
\end{lemma}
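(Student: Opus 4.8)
The plan is to prove both inequalities between the value of state $d$ and the maximum over the accessible strongly connected components. Let $\lambda_C$ denote the value of the irreducible game $\gammage[C]$, which by \Cref{e-valueunique} is well defined independently of the initial state within $C$, and let $\lambda_d^* := \max_C \lambda_C$ where the maximum is over all strongly connected components $C$ of $\bar{G}$ to which $d$ has access. I want to show $V_d^\infty = \lambda_d^*$.

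First I would establish the lower bound $V_d^\infty \geq \lambda_d^*$. Fix a component $C$ accessible from $d$, so there is a path in $\bar{G}$ from $d$ to some node $d_0 \in D_C$. Since People behaves nondeterministically (equivalently, all of People's branches contribute to the multiplicative payment $R^k$), Tribune can steer the token from $d$ into $C$ along this path and thereafter play, within $C$, a strategy guaranteeing the value $\lambda_C$; because the weights traversed before reaching $C$ contribute only a fixed positive constant factor, this constant is washed out by the $k$-th root in $V_d^\infty = \lim_{k\to\infty}[F^k(e)]_d^{1/k}$. More cleanly, I would argue monotonically at the level of the operator $F$: the restriction of $F$ to the coordinates $D_C$ is exactly the dynamic programming operator of $\gammage[C]$, and Tribune's freedom only increases in the larger game, so $[F^k(e)]_d \geq$ (a constant times) $[F_C^k(e_C)]_{d_0}$ for large $k$, whence $V_d^\infty \geq \lambda_C$. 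Taking the maximum over accessible $C$ gives $V_d^\infty \geq \lambda_d^*$.

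For the upper bound $V_d^\infty \leq \lambda_d^*$, the idea is to exhibit a Collatz-Wielandt certificate, i.e.\ a positive vector $X$ and a scalar $\lambda$ with $F(X) \leq \lambda X$ on the relevant coordinates and $\lambda = \lambda_d^*$, then invoke \eqref{e-cw1} of \Cref{th-cw}. Concretely I would process the strongly connected components in reverse topological order of the condensation (acyclic) graph of $\bar{G}$. For each component $C$, \Cref{e-valueunique} supplies a positive eigenvector $U^C$ on $D_C$ with $F_C(U^C) = \lambda_C U^C$. The task is to glue these into a single positive vector $X$ on the set of nodes having access to the components in question, scaling each block of coordinates so that the supereigenvector inequality $F_d(X) \leq \lambda_d^* X_d$ holds across the transitions between components. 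Because $\lambda_d^*$ is the maximum of the $\lambda_C$ over accessible $C$, and because nodes only feed into components with values bounded by $\lambda_d^*$, one can choose the scaling factors (largest first, each subsequent block scaled up sufficiently) so that the inter-component arcs do not violate the inequality.

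The main obstacle I anticipate is the gluing step in the upper bound: ensuring the supereigenvector inequality holds simultaneously at the ``boundary'' nodes where one component feeds into another with a strictly smaller value. When the local value $\lambda_C$ is strictly less than $\lambda_d^*$, the eigenvector inequality $F_C(U^C) = \lambda_C U^C \leq \lambda_d^* U^C$ has slack, and this slack must be shown to absorb the contributions from arcs leaving $C$ toward already-processed (higher-indexed) components after a suitable rescaling; the degenerate case where several accessible components share the maximal value $\lambda_d^*$ needs slightly more care, since there the inequality is tight and one must rely on the block structure (an arc from $d \in D_C$ to $d'$ in another maximal component can be routed so that the relevant term only enters through a ``max'' that is still dominated). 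Once the certificate $(X,\lambda_d^*)$ is assembled, \eqref{e-cw1} yields $V_d^\infty \leq \lambda_d^*$, and combined with the lower bound this completes the proof.
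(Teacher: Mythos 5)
Your lower bound is sound: monotonicity of $F$ plus the inequality $F_{d_0}(X)\geq F_{C,d_0}(X|_{D_C})$ for $d_0\in D_C$, together with a fixed multiplicative constant for the access path, gives $V_d^\infty\geq \lambda_C$ cleanly. Note also that your route is genuinely different from the paper's, which simply invokes external results (Zijm's asymptotic expansion of $F^k(e)$, or Theorem~29 of~\cite{gg98a}); a self-contained Collatz--Wielandt argument is a legitimate and arguably more instructive alternative.

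However, the upper bound as written has a genuine gap: the exact certificate $F(X)\leq \lambda_d^* X$ with $X$ positive simply does not exist in general, so the ``degenerate case'' you flag cannot be repaired by the routing argument you sketch. Take two components $\{1\}$ and $\{2\}$, each with a self-loop of weight $1$ and an arc $1\to 2$, realized through a single People node $p$ with successors $1$ and $2$; then $F_1(X)=X_1+X_2>X_1=\lambda^* X_1$ for every positive $X$. The cross-component term $m_{pd''}X_{d''}$ enters \emph{additively inside the sum over $d''$ for a fixed $p$}, not as a separate branch of the outer $\max_p$, so when the same $p$ that achieves the maximum in $F_{C,d}$ also has successors outside $D_C$, there is no ``max'' under which the extra term can be hidden, and when $\lambda_C=\lambda_d^*$ there is no slack to absorb it. The correct repair is standard and cheap: since~\eqref{e-cw1} is an \emph{infimum}, it suffices to build, for each $\epsilon>0$, a positive vector $X_\epsilon$ on the downstream closure $S_d$ of $d$ with $F(X_\epsilon)\leq(\lambda_d^*+\epsilon)X_\epsilon$ on $S_d$. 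With target $\lambda_d^*+\epsilon$ every block enjoys slack at least $\epsilon\, s_C U^C_{d'}$, which grows with the scaling $s_C$, while the contribution of the already-processed downstream blocks is fixed; so choosing the $s_C$ large enough in reverse topological order works uniformly, with no case distinction. Applying \Cref{cw-0} to the restriction of $F$ to the downstream-closed set $S_d$ (which is a well-defined order-preserving homogeneous self-map of $\R_+^{S_d}$) then yields $V_d^\infty\leq\max_{d'\in S_d}V_{d'}^\infty\leq\lambda_d^*+\epsilon$ for all $\epsilon>0$, hence $V_d^\infty\leq\lambda_d^*$. You should also say a word about nodes of $S_d$ lying in trivial strongly connected components of $\bar G$ (they carry no eigenvector $U^C$, but their coordinates can be fixed last, again in reverse topological order, since $F_{d'}$ then depends only on strictly downstream coordinates).
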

\begin{proof}
  This follows from a more precise of Zijm, Theorem~5.1 in~\cite{zijmjota}, which determines the asymptotic expansion of $F^k(e)$ as $k\to \infty$. However,
  the present lemma is more elementary. 
  Alternatively, one can note that the operator $f:=\log\circ F\circ \exp$ is precisely in the class of operators considered in \cite[Section~4]{gg98a}.
The lemma is a special case of Theorem~29 there.
\end{proof}

Therefore, from now on, we make the following
assumption.
\begin{assumption}\label{assump-a}
The game is Despot-free and irreducible.
\end{assumption}

We also make the following assumption.
\begin{assumption}\label{assump-b}
The weights $m_{p,d}$ are integers. 
\end{assumption}
The case in which the weights are rational numbers
reduces to this one (multiplying all weights by a
common denominator does not change optimal positional strategies).

\subsection{Reduction to a well conditioned convex programming problem}
\label{subsec-cond}

Our strategy, to prove Theorem~\ref{th-polytime}
when the game is irreducible is to apply the ellipsoid method
to the convex programming 
formulation~\eqref{e-cvx0}. To do so, we must replace
this formulation by another
convex program whose feasible set 
is included in a ball $B_2(a,R)$,
(the Euclidean ball with center
$a$ and radius $R$),
and contains a Euclidean ball $B_2(a,r)$, where
$\log (R/r)$ is polynomially bounded 
in the size of the input. The following
key lemma allows us to do so. It bounds
the non-linear eigenvalue
and eigenvector of the dynamic programming
operator $F$, which have been shown to exist
in \Cref{e-valueunique}.
We set $\maxweight:=\max_{(p,d)\in E} m_{p,d}$
and $n:=|D|$. 
\begin{lemma}\label{lem-1}
Suppose the game is Despot-free and irreducible.
Then, the value $\lambda$ of the game is such that $1\leq \lambda \leq n\maxweight$.
Moreover, there exists a vector $U\in \operatorname{int}\R_+^n$ such that
$F(U)=\lambda U $, and 
\begin{align}\label{e-boundU}
1\leq U_d\leq \lambda^{n-1} \enspace , \forall d\in D \enspace .
\end{align}
\end{lemma}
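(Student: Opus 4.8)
The plan is to prove \Cref{lem-1} by combining the Collatz-Wielandt characterization of the value (\Cref{th-cw}) with elementary estimates on the dynamic programming operator $F$ from~\eqref{e-def-dp}. Recall from \Cref{e-valueunique} that under \Cref{assump-a} there exist $\lambda>0$ and $U\in\operatorname{int}\R_+^D$ with $F(U)=\lambda U$, and $\lambda$ is the common value of the game. First I would bound $\lambda$. Because $F$ is order preserving and positively homogeneous, and each coordinate $F_d(X)=\min_{(d,t)}\max_{(t,p)}\sum_{(p,d')}m_{pd'}X_{d'}$ is a sum over at most $n$ terms each of which is a product of a weight (at most $\maxweight$) and a coordinate of $X$, evaluating at the positive eigenvector $U$ gives the inequalities $\min_{d'}U_{d'}\le F_d(U)=\lambda U_d\le n\maxweight\,\max_{d'}U_{d'}$. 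Taking $d$ to be the index achieving $U_d=\max_{d'}U_{d'}$ in the upper bound and $U_d=\min_{d'}U_{d'}$ in the lower bound yields $1\le\lambda\le n\maxweight$; the lower bound $\lambda\ge 1$ can also be read off from \Cref{th-cw}, since the feasibility of large $\lambda$ together with integrality of the weights (\Cref{assump-b}, all $m_{pd'}\ge 1$) forces $\lambda\ge 1$.

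The main work is the eigenvector bound~\eqref{e-boundU}. Here I would exploit irreducibility of the projected graph $\bar G$ together with the eigen-equation $F(U)=\lambda U$. After normalizing so that $\min_{d\in D}U_d=1$, the lower bound $1\le U_d$ is immediate. For the upper bound, the strategy is to chain inequalities along a path in $\bar G$. Fix any state $d$ and a state $d_0$ attaining the minimum $U_{d_0}=1$; by irreducibility there is a directed path in $\bar G$ from $d_0$ to $d$ of length at most $n-1$. The crucial structural fact is that an arc $d'\to d''$ in $\bar G$ means there is a path $(d',t,p,d'')$ in $G$, i.e.\ People can move from $d'$ to $d''$; combined with $F_{d'}(U)=\min_t\max_p\sum_{d''}m_{pd''}U_{d''}\ge$ (the term for the chosen arc) one gets $\lambda U_{d'}\ge m_{pd''}U_{d''}\ge U_{d''}$ since $m_{pd''}\ge 1$. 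Iterating this inequality along the path of length at most $n-1$ gives $U_d\le\lambda^{n-1}U_{d_0}=\lambda^{n-1}$, which is exactly~\eqref{e-boundU}.

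The step I expect to require the most care is the direction of the monotonicity argument at the $\min$--$\max$ operator: one must verify that the eigen-equation propagates the bound in the intended direction along $\bar G$. Concretely, for the upper bound I need a lower bound on $\lambda U_{d'}$ in terms of $U_{d''}$, which comes from selecting, at the outer $\min$, the action $t$ actually prescribing the arc of $\bar G$, and then at the inner $\max$ picking the branch $(t,p)$ with $(p,d'')\in E$; since the inner expression is a $\max$ this is a valid lower bound, and each summand is nonnegative so dropping all but the $d''$ term is legitimate. The integrality assumption $m_{pd''}\ge 1$ is what makes $m_{pd''}U_{d''}\ge U_{d''}$, so \Cref{assump-b} enters essentially here. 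I would also remark that the normalization $\min_d U_d=1$ is harmless since $U$ is only determined up to a positive scalar.
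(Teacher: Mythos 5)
Your proposal is correct and follows essentially the same route as the paper: the bound $1\leq\lambda\leq n\maxweight$ by elementary estimates (the paper applies the Collatz--Wielandt formulae with the test vector $\unit$, you evaluate the eigen-equation at the extremal coordinates of $U$, which amounts to the same computation), and the bound on $U$ by chaining $U_{d''}\leq\lambda U_{d'}$ along a path of length at most $n-1$ in $\bar G$, using irreducibility, integrality of the weights, and the normalization $\min_d U_d=1$. The only point to make explicit is that ``selecting at the outer $\min$'' yields a \emph{lower} bound on $F_{d'}(U)$ only because the game is Despot-free, so that the $\min$ ranges over a singleton; for a genuine two-player operator that step would go in the wrong direction.
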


\begin{proof}
(1) 
The fact that $\lambda \leq n\maxweight$ follows from the first Collatz-Wielandt
formula~\eqref{e-cw1}, which implies that $\lambda \leq \max_d F_d(e)$,
where $e$ is the unit vector of $\R^{D}$. We have $\max_d F_d(e)
 \leq n\maxweight$. Similarly, the last Collatz-Wielandt formula~\eqref{e-cwlast}
implies that $\lambda \geq \min_d F_d(e)$.
Since we assumed that every node in $G$ has at at least
one successor, we have $F_d(e)\geq 1$ for all $d\in D$,
and so $\lambda \geq 1$.

(2) Let $(d,d')$ be an arc in $\bar{G}$,
corresponding to a path of length $1$
in ${G}$, of the form $(d,t,p,d')$.
Then, $m_{pd'}U_{d'}\leq F_d(U) = \lambda U_d$ holds, 
with $\lambda \leq n\maxweight$ and $m_{pd'}$ integer. 
In particular, $U_{d'}\leq \lambda U_d$ holds.
Since the game is irreducible, any two vertices of $\bar{G}$
are connected by a path of length at most $n-1$.
It follows that $U_{d'}/U_d\leq \lambda^{n-1}$ holds for
all $d,d'\in D$. We may assume that the minimal
entry of $U$ is equal to $1$, by dividing $U$ by this
minimal entry. Then, $U_d\leq \lambda^{n-1}$ holds for all
$d$.
\end{proof}

We denote by $\sK$ the set of pairs $(u,\mu)\in \R^D\times \R\simeq \R^{n+1}$,
such that
\begin{subequations}\label{e-knew}
\begin{align}
f(u) & \leq \mu \unit + u \label{e-k1new}\\
0 & \leq u_d \leq (n-1)\ceil{\log(n\maxweight)}, \forall d \in D,\label{e-k2new}\\
0 & \leq \mu \leq \ceil{\log(n\maxweight)}+2 \enspace,\label{e-k3new}
\end{align}
\end{subequations}

where $\ceil{t}$ denotes the smallest integer greater than or equal to $t$,
and $f$ is given by~\eqref{e-shapley}, recalling that $\unit$ denotes the
unit vector of $\R^D$.
Recall that $W\geq 1$ since this is an integer, and that if $n\neq 1$, then
$(n-1)\ceil{\log(nW)}\geq 1$.
By combining Corollary~\ref{cor-cw}, Lemma~\ref{e-valueunique} and Lemma~\ref{lem-1}, we arrive at the following result.
\begin{proposition}\label{prop-compact}
The value of a Despot-free irreducible entropy game coincides with the exponential of the value of the convex program:
\begin{align}\label{e-sk0new}
\min \mu, \; (u,\mu)\in \sK \enspace ,
\end{align}
where $\sK$ is defined by~\eqref{e-knew}.
\end{proposition}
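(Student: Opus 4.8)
The plan is to assemble Proposition~\ref{prop-compact} from the three ingredients cited just before its statement, verifying that the extra box constraints~\eqref{e-k2new}--\eqref{e-k3new} do not cut off the optimum of the Despot-free convex program~\eqref{e-cvx0}. First I would pass from the multiplicative formulation to the additive one. By Corollary~\ref{cor-cw}, the logarithm of the value of a Despot-free (irreducible) entropy game equals the value of the optimization problem~\eqref{e-cvx0}, whose constraint is exactly $f(u)\le \mu\unit+u$ written out coordinatewise via~\eqref{e-shapley}. So the task reduces to showing that appending the bounds~\eqref{e-k2new} and~\eqref{e-k3new} to~\eqref{e-cvx0} leaves the infimum unchanged, and that the value is actually attained (so that $\exp$ of it is the game value).

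Next I would use Lemma~\ref{e-valueunique} and Lemma~\ref{lem-1} to exhibit a feasible point of $\sK$ achieving the optimum. Lemma~\ref{e-valueunique} gives an eigenvector $U\in\operatorname{int}\R_+^D$ with $F(U)=\lambda^* U$ and $\lambda^*=\bar V^\infty$ the (common) value; Lemma~\ref{lem-1} gives the quantitative bounds $1\le\lambda\le n\maxweight$ and, after normalizing the minimal entry of $U$ to $1$, $1\le U_d\le\lambda^{n-1}$. Setting $u:=\log U$ and $\mu:=\log\lambda$, the eigenvector equation $F(U)=\lambda U$ conjugates through $f=\log\circ F\circ\exp$ to $f(u)=\mu\unit+u$, so~\eqref{e-k1new} holds with equality. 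The bounds then transfer: from $1\le U_d\le\lambda^{n-1}$ I get $0\le u_d\le (n-1)\log\lambda\le (n-1)\ceil{\log(n\maxweight)}$, which is~\eqref{e-k2new}; and from $1\le\lambda\le n\maxweight$ I get $0\le\mu\le\log(n\maxweight)\le\ceil{\log(n\maxweight)}+2$, which is~\eqref{e-k3new}. Hence $(u,\mu)\in\sK$, so the constrained program~\eqref{e-sk0new} is feasible and its value is at most $\log\lambda^*$.

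For the reverse inequality I would argue that no feasible point of $\sK$ can beat $\log\lambda^*$. Any $(u,\mu)\in\sK$ satisfies $f(u)\le\mu\unit+u$, i.e.\ $F(\exp u)\le e^\mu\exp u$ with $\exp u\in\operatorname{int}\R_+^D$; by the first Collatz-Wielandt formula~\eqref{e-cw1} of Corollary~\ref{th-cw}, this forces $\bar V^\infty=\lambda^*\le e^\mu$, i.e.\ $\mu\ge\log\lambda^*$. Since the optimum $\log\lambda^*$ is both attained (by the point constructed above) and a lower bound for every feasible $\mu$, the value of~\eqref{e-sk0new} equals $\log\lambda^*$, and its exponential is the game value, as claimed.

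The one genuine subtlety — and the step I would treat most carefully — is the upper bound on $u_d$ in~\eqref{e-k2new}: the relaxed box must be wide enough to retain an optimal eigenvector, yet narrow enough that $\log(R/r)$ stays polynomial (this is the point of passing to the explicit constants $(n-1)\ceil{\log(n\maxweight)}$). The bound $U_d\le\lambda^{n-1}$ of Lemma~\ref{lem-1} is exactly what makes $u_d\le (n-1)\ceil{\log(n\maxweight)}$ work, using $\lambda\le n\maxweight$ and monotonicity of $\log$; the normalization $\min_d U_d=1$ is what gives the lower bound $u_d\ge 0$. The slack $+2$ in~\eqref{e-k3new} is harmless cushioning. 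The rest is the routine conjugation bookkeeping between the operators $F$ and $f$ already set up in the proof of Proposition~\ref{prop-dp}.
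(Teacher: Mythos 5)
Your proposal is correct and follows essentially the same route as the paper: the lower bound on $\mu$ via Corollary~\ref{cor-cw} (equivalently the Collatz--Wielandt formula~\eqref{e-cw1}), and feasibility of $(\log U,\log\lambda^*)$ in $\sK$ via the eigenvector and bounds from Lemma~\ref{e-valueunique} and Lemma~\ref{lem-1}. The only item the paper adds is the one-line verification that $\sK$ is actually convex (via Lemma~\ref{lemma-rw}), which is needed since the statement calls~\eqref{e-sk0new} a convex program.
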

\begin{proof}
If $(u,\mu)\in\sK$, then it satisfies~\eqref{e-cvx0}, so 
by Corollary~\ref{cor-cw}, 
it is  not smaller than the logarithm of the value of the game.
Hence, the value of~\eqref{e-sk0new} is an upper
bound for the logarithm of the value of the game.
Now, if we take $\lambda$ to be equal to the value
of the game, we know by Lemma~\ref{lem-1}
and Lemma~\ref{e-valueunique} that $1\leq \lambda \leq nW$  and that we
can find a vector $U\in \operatorname{int}\R_+^n$
such that $F(U)=\lambda U$, and the bounds~\eqref{e-boundU}
on $U$ hold. Then, setting
$u:=\log(U)$ and $\mu:=\log \lambda$, 
we get that $(u,\mu)\in \sK$. It follows
that the exponential of the value of~\eqref{e-sk0new}
coincides with the value of the game.

Finally, the convexity of $\sK$ follows from the convexity of every coordinate
map of $f$, which is an immediate consequence of Lemma~\ref{lemma-rw}.
\end{proof}

We denote by $B_2(a,r)$ the Euclidean ball with center
$a$ and radius $r$. The sup-norm ball with the same
radius and center is denoted by $B_\infty(a,r)$.
We have the following lemma.
\begin{lemma}\label{lem-boundrR}
Let $a=((1/2) e,\ceil{\log(n\maxweight)}+3/2)\in \R^D\times \R$, and
let
\begin{align}\label{e-def-R}
r:= 1/3,
\qquad R := \sqrt{n+1}((n-1)\log (n\maxweight)+n+1) \enspace .
\end{align}
Then,  
\[
B_2(a,r) \subset \sK \subset B_2(a,R)  \enspace .
\]
\end{lemma}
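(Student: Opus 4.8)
The plan is to treat the two inclusions separately, reducing both to elementary coordinatewise estimates after recording two facts that will be used repeatedly: the norm comparison $\|v\|_2\le \sqrt{n+1}\,\|v\|_\infty$ on $\R^{n+1}\simeq\R^D\times\R$, and the rounding bounds $\log(n\maxweight)\le L\le \log(n\maxweight)+1$, where I write $L:=\ceil{\log(n\maxweight)}$ throughout. I will also assume $n\ge 2$, so that $(n-1)L\ge 1$ (the case $n=1$ is degenerate: the box collapses and the value of a single-state game is read off directly).

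For the outer inclusion $\sK\subset B_2(a,R)$ I would discard the functional constraint~\eqref{e-k1new} and use only the box constraints~\eqref{e-k2new}--\eqref{e-k3new}, which confine $\sK$ to the box $[0,(n-1)L]^D\times[0,L+2]$. On this box each $u$-coordinate deviates from $1/2$ by at most $(n-1)L-1/2\le (n-1)L$, and the $\mu$-coordinate deviates from $L+3/2$ by at most $L+3/2$; hence the sup-norm distance to $a$ is at most $\max\bigl((n-1)L,\,L+3/2\bigr)$. Using $L\le \log(n\maxweight)+1$, $\log(n\maxweight)\ge 0$ and $n\ge 2$, one checks $(n-1)L\le (n-1)\log(n\maxweight)+n+1$ and $L+3/2\le \log(n\maxweight)+5/2\le (n-1)\log(n\maxweight)+n+1$, so this maximum is at most $(n-1)\log(n\maxweight)+n+1$. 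Multiplying by $\sqrt{n+1}$ and applying the norm comparison gives the Euclidean bound $R$, so every point of $\sK$ lies in $B_2(a,R)$.

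For the inner inclusion $B_2(a,r)\subset\sK$ with $r=1/3$, I would use that any $(u,\mu)$ with $\|(u,\mu)-a\|_2\le 1/3$ has each coordinate within $1/3$ of the corresponding coordinate of $a$, so $u_d\in[1/6,5/6]$ and $\mu\in[L+7/6,L+11/6]$. The box constraints are then immediate, since $5/6\le (n-1)L$ (as $(n-1)L\ge1$) and $L+11/6\le L+2$. The only substantial point is the functional constraint $f(u)\le \mu\unit+u$ of~\eqref{e-k1new}. I would bound $f_d(u)$ from above uniformly using~\eqref{e-shapley}: for every arc $(p,d')\in E$ one has $u_{d'}\le 5/6$ and $\sum_{(p,d')\in E} m_{pd'}\le n\maxweight$, whence $\sum_{(p,d')\in E} m_{pd'}\exp(u_{d'})\le e^{5/6}\,n\maxweight$, so each inner logarithm is at most $5/6+\log(n\maxweight)\le 5/6+L$; taking the $\min$--$\max$ over $t$ and $p$ preserves this bound, giving $f_d(u)\le 5/6+L$. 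On the other hand $\mu+u_d\ge (L+7/6)+1/6=L+4/3$, and since $5/6\le 4/3$ we get $f_d(u)\le \mu+u_d$ for every $d\in D$, i.e.\ $(u,\mu)\in\sK$.

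The box containments are routine arithmetic; the step requiring care is the functional inequality for the inner ball. The key is that the small radius $r=1/3$ keeps $u$ inside $[1/6,5/6]$, so the exponential factor $e^{5/6}$ contributes only $5/6$, which is comfortably absorbed by the slack $\mu+u_d-L\ge 4/3$ engineered into the $\mu$-coordinate $L+3/2$ of the center. Conceptually, the whole lemma rests on the observation that the additive-form operator $f$ grows at most like $\log(n\maxweight)$ on the unit-scale cube, which is precisely what the constant $\ceil{\log(n\maxweight)}+2$ in~\eqref{e-k3new} was chosen to accommodate.
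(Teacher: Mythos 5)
Your proof is correct and follows essentially the same route as the paper's: both inclusions are reduced to sup-norm estimates via $\|\cdot\|_2\le\sqrt{n+1}\,\|\cdot\|_\infty$, the outer one using only the box constraints, and the inner one by checking the box constraints trivially and bounding $f(u)$ by roughly $\ceil{\log(n\maxweight)}$ plus a small slack absorbed by the choice of the $\mu$-coordinate of the center. The only (immaterial) difference is that you bound $f_d(u)$ directly from the explicit log-sum-exp expression, whereas the paper derives the same estimate abstractly from the fact that $f$ is order preserving and commutes with addition of constants, via $f(u)\le f((1/2+r)\unit)=(1/2+r)\unit+f(0)$.
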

\begin{proof}
Any point $(u,\mu)$ in $B_\infty(a,r)$ satisfies
$(1/2-r)e \leq u \leq (1/2+r)e$ 
and $\ceil{\log(n\maxweight)}+3/2-r\leq \mu \leq \ceil{\log(n\maxweight)}+3/2+r$. Since $n\neq 1$, we get that  $(n-1)\ceil{\log(nW)}\geq 1$,
and since $r\leq 1/2$, we obtain that $(u,\mu)$ satisfies the box
constraints~\eqref{e-k2new} and \eqref{e-k3new} defining $\sK$.
Since $f$ is order
preserving and commutes with the addition of a constant,
\begin{align*}
f(u) & \leq f((1/2+r)e)  = (1/2+r) e + f(0) \\
& \leq (1/2+r  + \ceil{\log(n\maxweight)})e \leq 
(1/2+r  + \ceil{\log(n\maxweight)})e 
+ (u+(r-1/2)e)\\
&= (2r+\ceil{\log(n\maxweight)} )e+u
 \leq (3r -1 +\mu)e +u 
\end{align*}
and so $f(u) \leq \mu e +u$ as soon as $r\leq 1/3$. 
We deduce
that $B_2(a,1/3)\subset B_\infty(a,1/3)\subset \sK$.

Moreover, since $\sK$ is included in a box of width $\ell=
(n-1)\ceil{\log(n\maxweight)}+2 $,
for any choice of $a'\in \sK$, $\sK$ is included in the sup-norm
ball $B_\infty(a',\ell)$,
and so, in the euclidean ball $B_2(a',\ell\sqrt{n+1})$. 
It follows that $\sK\subset B_2(a,R)$.
\end{proof}
\subsection{Construction of a polynomial time weak separation oracle}
We shall solve Problem~\eqref{e-sk0new} by the ellipsoid method~\cite{schrijver}.
The latter needs the following notions.
\begin{definition}\label{def-weak-separ}
Let $\sK$ denote a convex body in $\R^q$.
A {\em weak separation oracle} for $\sK$
is a procedure, taking as input a rational number $\nu>0$ and 
a rational vector $y\in \R^q$,
which concludes one of the following:
(i) asserting that $y$ is at Euclidean distance
at most $\nu$ from $\sK$; (ii) finding an 
{\em approximate separating half-space of precision $\nu$},
i.e., a linear form 
$\phi: x \mapsto 
c\cdot x$, with $c\in \R^q$, of Euclidean norm at least $1$,
such that
for every $x\in \sK$, 
\[ \phi(x) \leq \phi(y) + \nu
\enspace . 
\]
\end{definition}
Let us now recall the main complexity result about the ellipsoid method~\cite{schrijver}. 
To do so, we denote by $\bitsize r$ the number of bits needed to code an object $r$,
under the standard binary encoding. 
For instance, if $r$ is an integer, $\bitsize{r}:=\ceil{\log_2(r)}+1$,
if $r=p/q$ is a rational, $\bitsize{r}:=\bitsize{p}+\bitsize{q}$,
if $r=(r_i)$ is a rational vector, $\bitsize{r}:= \sum_i \bitsize{r_i}$,
and if $\psi$ is a linear form with rational coefficients over $\R^q$,
$\psi(x)=\sum_i r_i x_i$, for $x\in \R^q$,
then $\bitsize{\psi}=\sum_i \bitsize{r_i}$.
Here and after, the notion of length of an input refers
to the binary encoding. 

The ellipsoid method can be applied to solve the following
problem consisting in finding an approximate
minimum of precision $\epsilon$ of a linear
form $\psi$ with rational coefficients over a convex
body $\sK \subset \R^q$. This means looking for a vector
$x^*$ such that $d_2(x^*,\sK)\leq \epsilon$ and
$\psi(x^*) \leq \min_{x\in \sK} \psi(x) + \epsilon$,
where $d_2$ denotes the Euclidean distance.
We assume that we know a vector $a\in \sK$
with rational coordinates, and rational numbers $0<r<R$ such that 
\[
B(a,r)\subset \sK \subset B(a,R) \enspace .
\]
In that case, the size of the input of the approximate minimization problem
is measured by 
$\bitsize \psi + \bitsize a + \bitsize r + \bitsize R + \bitsize \epsilon$.

It is shown in~\cite{schrijver} that if the convex set
$\sK$ admits a polynomial time weak separation oracle,
the ellipsoid method computes an $\epsilon$-approximate
solution of the minimization problem in a time polynomial
in the size of the input. Specialized to the
present setting, and taking into account
the polynomial estimates for $\log r$ and $\log R$
in Lemma~\ref{lem-boundrR}, we get the following result.

\begin{theorem}[Corollary of~{\cite[Th.~3.1]{schrijver}}]
\label{th-compl-ellipsoid}
Suppose that the set $\sK$ defined by~\eqref{e-knew}
admits a weak separation oracle which
runs in polynomial time in the bitsize of the input
and in the bitsize of the game. Then, 
the ellipsoid method returns an approximate optimal solution
of precision $\epsilon$ of Problem~\eqref{e-sk0new},
i.e., a vector $(u,\mu)$ such that 
that $d_2((u,\mu),\sK)\leq \epsilon$
and $\mu$ does not exceed the value
of Problem~\eqref{e-sk0new} by more than $\epsilon$,
in a time that is polynomial in 
$\bitsize{\epsilon} + |E| + \log \maxweight$. 
\end{theorem}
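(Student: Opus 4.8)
The plan is to invoke the general complexity bound for the ellipsoid method, Theorem~3.1 of~\cite{schrijver}, as a black box. That result asserts that an $\epsilon$-approximate minimizer of a rational linear form $\psi$ over a convex body $\sK\subset\R^q$ equipped with a polynomial-time weak separation oracle (in the sense of \Cref{def-weak-separ}) can be computed in time polynomial in $\bitsize{\psi}+\bitsize a+\bitsize r+\bitsize R+\bitsize \epsilon$, provided one exhibits rationals $a\in\sK$ and $0<r<R$ with $B_2(a,r)\subset\sK\subset B_2(a,R)$. Here the form to minimize is $\psi(u,\mu)=\mu$, i.e.\ the projection onto the last coordinate, so that $\bitsize\psi=O(1)$, the ambient dimension is $q=n+1$, and $\sK$ is the convex set of~\eqref{e-knew}, whose value is Problem~\eqref{e-sk0new}.

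First I would supply the geometric data required by this statement, which is exactly the content of Lemma~\ref{lem-boundrR}: the rational center $a=((1/2)e,\ceil{\log(n\maxweight)}+3/2)$ together with $r=1/3$ and $R=\sqrt{n+1}((n-1)\log(n\maxweight)+n+1)$, satisfying $B_2(a,r)\subset\sK\subset B_2(a,R)$; note that $a\in\sK$ since $B_2(a,r)\subset\sK$. Both $a$ and $r$ are rational of bitsize polynomial in $n$ and $\log\maxweight$, the only nontrivial coordinate of $a$ being the integer $\ceil{\log(n\maxweight)}$, which is computable in polynomial time and of bitsize $O(\log\log(n\maxweight))$. Since $R$ is irrational I would replace it by a rational upper bound $R'$ obtained by rounding $\sqrt{n+1}$ and $\log(n\maxweight)$ upward; as $\log R=O(\log n+\log((n-1)\log(n\maxweight)+n+1))$, one may take $R'$ with $\bitsize{R'}$ polynomial in $n+\log\maxweight$, preserving $\sK\subset B_2(a,R)\subset B_2(a,R')$.

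With these estimates I would feed $\psi$, $a$, $r$, $R'$, $\epsilon$ and the hypothesized weak separation oracle into Theorem~3.1 of~\cite{schrijver}. The number of ellipsoid iterations is polynomial in the dimension $q=n+1$ and in $\log(R'/r)+\bitsize\epsilon$, and each iteration issues a single call to the oracle on a query point whose bitsize remains polynomially bounded throughout the run; by hypothesis each such call runs in time polynomial in the bitsize of its input and of the game. Collecting the bounds, every quantity involved—the dimension $n+1\leq|E|+1$ (each node of $D$ has an outgoing arc), the aspect ratio $\log(R'/r)$, $\bitsize\epsilon$, and the per-call oracle cost—is polynomial in $\bitsize\epsilon+|E|+\log\maxweight$, yielding the claimed running time and the output guarantee $d_2((u,\mu),\sK)\leq\epsilon$ with $\mu$ exceeding the optimal value of~\eqref{e-sk0new} by at most $\epsilon$.

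The only real obstacle is bookkeeping: one must verify that the aspect ratio $\log(R/r)$ and the encoding sizes of $a$ and $R$ are polynomial in $|E|+\log\maxweight$ rather than in $\maxweight$. This is precisely where the exponential eigenvector bound $U_d\leq\lambda^{n-1}$ of Lemma~\ref{lem-1} enters, as it keeps $\log R$—and hence the width of the feasible box~\eqref{e-k2new}—linear in $n$ and logarithmic in $\maxweight$, so that polynomially many iterations suffice. No argument beyond this size accounting and the invocation of~\cite{schrijver} is needed; in particular the construction of the oracle itself is deferred, being the hypothesis of the theorem.
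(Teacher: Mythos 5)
Your proposal follows exactly the paper's route: the theorem is stated there as a direct specialization of Schrijver's Theorem~3.1, using the inner and outer ball bounds of Lemma~\ref{lem-boundrR} and the observation that $\bitsize{a}+\bitsize{r}+\bitsize{R}+\bitsize{\psi}$ is polynomial in $|E|+\log \maxweight$. Your extra care about replacing the irrational $R$ by a rational upper bound and bounding the dimension by $|E|+1$ is sound bookkeeping that the paper leaves implicit, so nothing further is needed.
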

Recall that $|\cdot|$ denotes the cardinality of a set, in particular
$|E|$ denotes the number of arcs of $G$.

The following result allows us to apply \Cref{th-compl-ellipsoid}
to Problem~\eqref{e-sk0new}.
\begin{proposition}\label{lem-separ}
The convex set $\sK$ defined by~\eqref{e-knew}
admits a weak 
separation oracle which runs in polynomial
time. 
\end{proposition}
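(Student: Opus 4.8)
The plan is to exhibit the weak separation oracle explicitly, using that under \Cref{assump-a} the set $\sK$ is cut out by polynomially many convex constraints with computable subgradients. First I would unfold the nonlinear constraint~\eqref{e-k1new}. Since the game is Despot-free, for each $d\in D$ there is a unique $t=\sigma(d)$ with $(d,t)\in E$, so the ``$\min$'' in~\eqref{e-shapley} disappears and $f(u)\leq \mu\unit+u$ becomes the finite family of scalar inequalities
\[
g_{d,p}(u,\mu):=\log\Bigl(\sum_{(p,d')\in E} m_{pd'}e^{u_{d'}}\Bigr)-u_d-\mu\leq 0,
\]
indexed by the pairs $(d,p)$ with $(\sigma(d),p)\in E$; there are at most $|E|$ of them. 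By \Cref{lemma-rw} each $g_{d,p}$ is convex, with gradient $(\vartheta-e_d,-1)$, where $\vartheta\in\Delta_p$ is the softmax weight $\vartheta_{d'}=m_{pd'}e^{u_{d'}}/\sum_{d''}m_{pd''}e^{u_{d''}}$ and $e_d$ is the $d$th basis vector. Together with the linear box constraints~\eqref{e-k2new}--\eqref{e-k3new}, this presents $\sK$ as the feasible set of polynomially many convex inequalities.

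Second, I would check that each $g_{d,p}$ and its gradient can be evaluated approximately in polynomial time. Every point of interest lies in the box~\eqref{e-k2new}--\eqref{e-k3new}, whose width is polynomial in $|E|$ and $\log\maxweight$, so the arguments of the exponentials are bounded by a polynomially-sized quantity, and the weights $m_{pd'}$ are integers at most $\maxweight$. Standard polynomial-time approximation of $\exp$ and $\log$ by truncated series then yields, for any rational precision $\delta$, rational approximations of $g_{d,p}(u,\mu)$ and of the vector $\vartheta$ to additive error $\delta$, in time polynomial in $\bitsize{\delta}$, $|E|$ and $\log\maxweight$. I would also record that on this box each $g_{d,p}$ is $L$-Lipschitz with $L$ polynomially bounded, since $\|(\vartheta-e_d,-1)\|_2\leq 3$.

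Third, I would assemble the oracle. Given rational $y=(u,\mu)$ and rational $\nu>0$, first test the box constraints~\eqref{e-k2new}--\eqref{e-k3new}: these are linear with rational data, so a violated one yields, exactly, a separating half-space whose coefficient vector is normalized to Euclidean norm at least $1$. Otherwise $y$ lies in the box, hence in $B_2(a,R)$ of \Cref{lem-boundrR}, and I would evaluate each $g_{d,p}(y)$ to precision $\delta$ chosen proportional to $\nu$. If all approximate values fall below a calibrated threshold, I would conclude alternative~(i): by the choice of the center $a$ in \Cref{lem-boundrR}, one checks $g_{d,p}(a)\leq -3/2$, so retracting $y$ toward $a$ along $z_t=(1-t)y+ta$ and using convexity makes all constraints strictly satisfied for a small $t=O(\nu/R)$, whence $z_t\in\sK$ and $d_2(y,\sK)\leq t\,\|y-a\|\leq\nu$. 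If instead some $g_{d,p}(y)$ exceeds the threshold, I would output the half-space defined by the approximate subgradient $(\vartheta-e_d,-1)$; its $\mu$-coordinate equals $-1$, forcing Euclidean norm at least $1$, and the subgradient inequality for $g_{d,p}$, combined with the precision control, makes it a valid separator of precision $\nu$ for all of $\sK$.

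The main obstacle I expect is the error bookkeeping in this last step: one must choose the evaluation precision $\delta$ and the acceptance threshold so that the translation between a small violation in function value and a small Euclidean distance to $\sK$—which relies on the inscribed ball of radius $1/3$, the deep-center margin $3/2$, and the polynomial Lipschitz bound—stays within $\nu$, while guaranteeing that the approximately-computed subgradient remains a genuine precision-$\nu$ separator despite inexact evaluation of $\log$, $\exp$ and $\vartheta$. All constants involved ($R$, $r=1/3$, $L$, the box width, the margin) are polynomially bounded by \Cref{lem-boundrR} and the definition of $\sK$, so the required $\delta$ has $\bitsize{\delta}$ polynomial in $\bitsize{\nu}$, $|E|$ and $\log\maxweight$, keeping the whole procedure polynomial.
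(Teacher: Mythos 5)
Your proposal is correct and follows essentially the same route as the paper: unfold~\eqref{e-k1new} into the convex log-sum-exp constraints, approximate $\exp$ and $\log$ in polynomial time (the paper's \Cref{lem-st,lem-evalpoly}), return an exact coordinate form for a violated box constraint, and use the gradient $(\vartheta-e_d,-1)$ --- whose $\mu$-coefficient $-1$ forces norm at least $1$ --- as the approximate separator, with all error tolerances controlled by the polynomially bounded geometry of \Cref{lem-boundrR}. The only variation is in the near-feasibility case, where you retract toward the Slater center $a$ (using the margin $g_{d,p}(a)\leq -3/2$ and convexity) instead of the paper's clip-into-the-box-then-rescale-by-$t=1/(1+2\epsilon)$ construction; both yield a point of $\sK$ within distance $O(\nu)$, and your version is arguably the cleaner of the two.
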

To show this proposition, we need a series of arguments. Some of these
arguments, like the next lemma, are standard, whereas
other arguments require some transparent but rather technical bookkeeping,
exploiting the non-expansive character of $f$
to control the approximation errors. 

\begin{lemma}\label{lem-st}
Let $\epsilon>0$ and $t$ be rational numbers, and assume first that $t\leq 0$. 
Then, a rational approximation of absolute precision $\epsilon$ 
of $\exp(t)$ can be computed in a time that is polynomial 
in $\bitsize t$ and $\bitsize\epsilon$.
Assume now that $t>0$.  Then, 
a rational approximation of absolute precision $\epsilon$
of $\log(t)$ can be computed in a time that is polynomial 
in $\bitsize t$ and $\bitsize \epsilon$. 
\end{lemma}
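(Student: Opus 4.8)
The plan is to exploit the fact that $\exp$ and $\log$ are each computed by a rapidly converging series, and that for rational arguments every partial sum is itself a rational number that can be manipulated exactly in time polynomial in its bitsize. The whole statement then reduces to two standard tasks: bounding the number of series terms needed to reach precision $\epsilon$, and bounding the bitsize of the rationals produced along the way so that each arithmetic operation is itself polynomial. I would treat the two cases separately.

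For the exponential with $t\le 0$, first I would write $t=-s$ with $s\ge 0$ rational, so $\exp(t)=1/\exp(s)$ with $\exp(s)\ge 1$, or equivalently work directly with the alternating-in-sign tail of $\sum_{k\ge 0} t^k/k!$. The key quantitative input is that the factorial grows super-exponentially: to get absolute error at most $\epsilon/2$, it suffices to keep $N$ terms where $N$ is polynomial in $\bitsize t$ and $\bitsize\epsilon$, since $|t|^N/N!\to 0$ faster than any exponential once $N>e|t|$, and $\log(1/\epsilon)$ and $|t|$ are both polynomial in the respective bitsizes. Each partial sum $\sum_{k=0}^{N} t^k/k!$ is an exactly computable rational whose numerator and denominator have bitsize polynomial in $N$ and $\bitsize t$; I would then round this rational to precision $\epsilon/2$, incurring only polynomially many additional bits, so that the total error is at most $\epsilon$. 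Since $t\le 0$ guarantees $\exp(t)\le 1$, there is no blow-up of the output magnitude to worry about, which keeps the bookkeeping clean.

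For the logarithm with $t>0$, I would first reduce to an argument near $1$, where a convergent series is available. Writing $t$ in binary and extracting the integer part of $\log_2 t$ lets me write $t=2^j(1+y)$ with $|y|\le 1/2$, so $\log t = j\log 2 + \log(1+y)$; here $j$ has bitsize polynomial in $\bitsize t$, and $\log 2$ itself is approximated to precision $\epsilon/(2(|j|+1))$ by the same series machinery. For the remaining term I would use the series $\log(1+y)=\sum_{k\ge 1}(-1)^{k-1}y^k/k$, whose tail is bounded by $|y|^{N}/(N(1-|y|))\le 2^{-N+1}/N$ since $|y|\le 1/2$; thus $N$ polynomial in $\bitsize\epsilon$ suffices, and again every partial sum is an exactly-represented rational of controlled bitsize. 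Summing the two contributions and rounding gives precision $\epsilon$.

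The main obstacle, and the only genuinely non-routine point, is the bitsize control rather than the convergence: one must verify that performing $N$ successive rational additions does not cause the denominators to explode beyond polynomial size. I would handle this by carrying all partial sums over the common denominator $N!$ (in the exponential case) or by truncating each intermediate rational to a fixed precision slightly finer than $\epsilon$ at every step, using the telescoping bound on accumulated rounding error. The nonexpansiveness of the elementary operations involved means the accumulated error stays linear in $N$, so choosing the per-step precision as $\epsilon/(2N)$ keeps the final error below $\epsilon$ while keeping each intermediate rational of bitsize polynomial in $\bitsize t$ and $\bitsize\epsilon$. This is the transparent-but-technical bookkeeping the surrounding text alludes to, and it is where I would spend the bulk of the care.
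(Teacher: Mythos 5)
Your overall route --- explicit power series plus bitsize bookkeeping --- is genuinely different from the paper's proof, which is not self-contained: it simply cites Borwein--Borwein for the statement on a fixed compact subinterval of $(-\infty,0]$ (resp.\ of $(0,\infty)$) and then invokes standard range-reduction techniques to extend it to the whole half-line. A self-contained series argument is perfectly viable, and your treatment of the logarithm is essentially complete: the reduction $t=2^j(1+y)$ with $|y|\le 1/2$ makes the Mercator series converge geometrically, $j$ has bitsize polynomial in $\bitsize{t}$, and your tail and rounding bounds are as stated (just take $j$ to be the nearest integer to $\log_2 t$ rather than its floor, so that $|y|$ really is at most $1/2$).

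The exponential case, however, has a genuine gap. You justify taking $N$ polynomial in $\bitsize{t}$ and $\bitsize{\epsilon}$ by asserting that ``$|t|$ is polynomial in the respective bitsizes.'' That is false: for $t=-2^{1000}$ one has $\bitsize{t}$ of order $10^3$ but $|t|=2^{1000}$, so the threshold $N>e|t|$ after which the Taylor terms begin to decay forces exponentially many terms in $\bitsize{t}$; worse, the intermediate terms $|t|^k/k!$ grow to doubly exponential magnitude before the alternating cancellation kicks in, so the partial sums themselves are not of polynomial bitsize. The missing step is precisely the range reduction the paper alludes to: if $t\le -\ln(2/\epsilon)$ then $0<\exp(t)<\epsilon/2$ and one may simply output $0$; otherwise $|t|\le \ln(2/\epsilon)+1$, which \emph{is} polynomially bounded in $\bitsize{\epsilon}$, and your bound on $N$ and your bitsize bookkeeping then go through. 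With that one preliminary test inserted, the proof is correct, and it has the merit of being constructive where the paper's is a pointer to the literature.
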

\begin{proof}
It is shown in~\cite{borwein} that the conclusion
is true when the input belongs to a fixed compact subset of 
the intervals $(-\infty,0]$, in the case of $\exp$,
or of $(0,\infty)$, in the case of log.
The fact that the same property still holds for
the whole intervals $(-\infty,0]$ and $(0,\infty)$
follows from the range reduction techniques~\cite{muller}. 
\end{proof}

\begin{lemma}\label{lem-evalpoly}
Let $x$ be a vector in $\R^D$ with rational entries, and let $\epsilon>0$
be a rational number. 
An approximation of $f(x)$ with a sup-norm error not exceeding $\epsilon$
can be obtained in polynomial time in $\bitsize{x}+\bitsize{\epsilon}
+|E|+\log \maxweight$.
\end{lemma}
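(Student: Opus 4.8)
The goal is to approximate $f(x)$ coordinatewise, where each coordinate $f_d(x) = \min_{(d,t)\in E}\max_{(t,p)\in E}\log\big(\sum_{(p,d')\in E} m_{pd'}\exp(x_{d'})\big)$. The plan is to evaluate this nested expression from the inside out, approximating each elementary operation (exponential, weighted sum, logarithm) to a suitable precision, and to control how the errors propagate through the $\log$-$\exp$ sandwich and the finite $\min$/$\max$ combinators.

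First I would observe that the discrete operations $\min$, $\max$, and the weighted sum $\sum_{(p,d')\in E} m_{pd'}(\cdot)$ are exact once their arguments are rational, and the only transcendental steps are the innermost $\exp(x_{d'})$ and the outermost $\log(\cdot)$. By \Cref{lem-st}, each of these can be approximated to any prescribed rational absolute precision $\eta$ in time polynomial in the bitsize of the argument and in $\bitsize{\eta}$. The main point of the argument is a bookkeeping step: since $x$ has entries bounded in bitsize by $\bitsize{x}$, the intermediate quantities $\exp(x_{d'})$ and the partial sums $\sum m_{pd'}\exp(x_{d'})$ lie in ranges whose logarithms are polynomially bounded in $\bitsize{x}+\log\maxweight$, so the bitsize of every intermediate rational stays polynomial, and the number of elementary operations is $O(|E|)$.

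The key step is to propagate errors through the outer logarithm. Here I would exploit that $\log$ is $1$-Lipschitz on $[1,\infty)$ and, more importantly, that the composed operation $x\mapsto \log(\sum m_{pd'}\exp(x_{d'}))$ is nonexpansive in the sup-norm (this is exactly the nonexpansiveness of the Shapley-type operator noted after \Cref{prop-dp}, following from order-preservation and commutation with constants). Concretely, if I first approximate each $\exp(x_{d'})$ with a relative error at most $\eta$, the weighted sum inherits relative error at most $\eta$, and applying $\log$ converts this into an absolute error at most $\log(1+\eta)\leq \eta$ on the inner expression; choosing $\eta$ of the order of $\epsilon$ then yields the desired sup-norm accuracy $\epsilon$ after the exact $\min$ and $\max$ operations, which do not increase sup-norm error.

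I expect the main obstacle to be the careful control of \emph{relative} (rather than absolute) error through the exponential: to bound the error of $\log(\sum m_{pd'}\exp(x_{d'}))$ by $\epsilon$, one needs the inner sum to be accurate up to a multiplicative factor $1\pm O(\epsilon)$, which in turn forces each $\exp(x_{d'})$ to be computed to an absolute precision proportional to its magnitude. Since $\exp(x_{d'})$ can be as large as roughly $\exp(\|x\|_\infty)$, the target absolute precision for the exponentials may be exponentially small, but by \Cref{lem-st} its bitsize is only polynomial in $\bitsize{x}+\bitsize{\epsilon}$, so the running time remains polynomial; making this tradeoff explicit—choosing the internal precisions so that all bitsizes stay polynomial in $\bitsize{x}+\bitsize{\epsilon}+|E|+\log\maxweight$—is the crux of the argument, and the nonexpansiveness of $f$ is what guarantees that a uniform choice of internal precision suffices across all coordinates.
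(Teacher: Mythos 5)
Your error-propagation analysis (relative error through the weighted sum, then $\log(1+\eta)\leq\eta$ through the outer logarithm, exact $\min$/$\max$ at the end) is sound as mathematics, but the step on which everything rests --- ``compute each $\exp(x_{d'})$ to relative precision $\eta$ in polynomial time'' --- fails for general rational $x$, and this is a genuine gap. First, \Cref{lem-st} only provides approximations of $\exp(t)$ for $t\leq 0$; for $t>0$ it provides $\log(t)$, not $\exp(t)$, so the tool you cite does not cover positive exponents. Second, and more fundamentally, no such subroutine can exist in the claimed generality: a rational entry $x_{d'}$ of bitsize $L$ can have magnitude about $2^{L}$, so $\exp(x_{d'})$ can be of order $e^{2^{L}}$, and any rational approximating it to within a bounded \emph{relative} error already has a numerator with $\Omega(2^{L})$ bits --- the intermediate value cannot even be written down in polynomial time. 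Your remark that the intermediate quantities ``lie in ranges whose logarithms are polynomially bounded in $\bitsize{x}+\log\maxweight$'' conflates the magnitude of $x$ with its bitsize, and your resolution of the ``main obstacle'' bounds only the bitsize of the precision parameter $\eta$, not the bitsize of the number being computed.

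The missing idea is a normalization that you in fact brush against when you invoke commutation with constants: since $f(x+\lambda\unit)=f(x)+\lambda\unit$, each inner term can be written as $h(x)=\bar x+\log t$ with $\bar x$ the maximum of the relevant entries $x_{d'}$ and $t:=\sum_{(p,d')\in E}m_{pd'}\exp(x_{d'}-\bar x)$. The shift $\bar x$ is an exact rational requiring no approximation; all remaining exponentials have nonpositive arguments, hence values in $(0,1]$, so \Cref{lem-st} applies and \emph{absolute} precision suffices; and $t\geq 1$ (the weights are positive integers and one exponent vanishes), so the outer $\log$ is $1$-Lipschitz on the relevant range and absolute errors pass through benignly. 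This is exactly how the paper's proof proceeds; with that one change your bookkeeping goes through.
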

\begin{proof}
We have 
\[f_d(x) = \max_{(\sigma(d),p)\in E} \log(\sum_{(p,d')\in E}m_{pd'}\exp(x_{d'})) 
\enspace.
\]
Hence, it suffices to check that for every $p\in P$,
the value 
\[
h(x):=\log(\sum_{(p,d')\in E}m_{pd'}\exp(x_{d'})) 
\]
can be approximated with a precision $\epsilon$ within
a polynomial time. We set $\bar x:= \max_{d'\in D} x_{d'}$,
and make the change of variables $x_{d'} = \bar x+ \tilde{x}_{d'}$, so
that
\[
h(x) = \bar x + \log t,
\; t:= \sum_{(p,d')\in E} m_{pd'}\exp(\tilde{x}_{d'}) 
\]
We observe that $1\leq t$
and that $\log$ has Lipschitz constant $1$ over $[1,\infty)$. Hence,
to evaluate $h(x)$ with a precision $\epsilon$, it suffices to 
compute an approximation $\tilde{t}$ of $t$ with precision
$\epsilon/2$, which can be done in polynomial time
thanks to Lemma~\ref{lem-st}, and then to approximate $\log \tilde{t}$
with precision $\epsilon/2$, which can also
be done in polynomial time by the same lemma.
\end{proof}

\begin{proof}[Proof of Proposition~\ref{lem-separ}]
Let $\nu>0$. Our aim is to check whether a given pair
$(\bar v,\bar \mu)$ is at distance at most $\nu$
from $\sK$. Since we already showed
that we can get an %
 approximation
of $f$ in polynomial time, the proof will
be a matter of routine bookkeeping
(except perhaps the use of the subdifferential of $f$
to construct a separating halfspace). 

We denote by $\epsilon>0$ a rational number, $\epsilon\leq 1$, which we shall
fix in the course of the proof.

We provide the announced separation oracle. We first check
that every
box constraint, as well as the non-linear constraints $f_d(\bar v)\leq \bar v_d+ \bar \mu$, 
for $d\in D$, are satisfied up to a precision $\epsilon$, which can be
done in polynomial time in
 $\bitsize{\bar{v}}+\bitsize{\epsilon}
+|E|+\log \maxweight$ thanks to Lemma~\ref{lem-evalpoly}. 

(i) If these constraints are satisfied up to a precision
$\epsilon$, we have $-\epsilon \leq \bar v_d \leq (n-1)\ceil{\log (n\maxweight)} +\epsilon$,
$-\epsilon \leq \bar\mu \leq \ceil{\log (n\maxweight)}+2 + \epsilon$, and
$f(\bar v) \leq (\bar\mu+ \epsilon)e + \bar v$. 
Setting $\tilde{v}_d= \min(\max(\bar v_d,0),\ceil{(n-1)\log(n\maxweight)})$,
we get that $\|\bar v-\tilde{v}\|_\infty\leq \epsilon$
with $\tilde{v}$ satisfying the constraint~\eqref{e-k2new}.
Using the fact that $f$ is nonexpansive in the sup-norm, 
we deduce that 
$f(\tilde{v})\leq \tilde{\mu}e+ \tilde{v}$,
where $\tilde{\mu} =\bar  \mu + 3\epsilon$.
Moreover, $0\leq \tilde\mu \leq \ceil{\log (n\maxweight)}+2 + 4\epsilon$.
Now, since $f$ is also convex, for all $t\in [0,1]$,
we have
\[ f(t \tilde{v})\leq t f(\tilde{v})+(1-t) f(0)
\leq \tilde{\mu}' e+ t \tilde{v}\enspace ,\]
with  $\tilde{\mu}'=t\tilde{\mu}+(1-t) \ceil{\log (n\maxweight)}$.
Hence, taking $t=1/(1+2\epsilon)$, we get that $\tilde{\mu}'$ satisfies
the constraint~\eqref{e-k3new}, whereas $\tilde{v}'=t\tilde{v}$ still satisfies
the constraint~\eqref{e-k2new}, so that $(\tilde{v}',\tilde{\mu}')$ 
belongs to  $\sK$.
Using  $t\leq 2\epsilon$, we also have
$\|(\bar v,\bar \mu)-(\tilde{v}',\tilde{\mu}')\|_\infty \leq \epsilon L$, with
$L=(5+2 (n-1)\ceil{\log (n\maxweight)})$,
implying that $d_2((\bar v,\bar \mu),\sK) \leq L\sqrt{n+1}\epsilon$.
Hence, we shall require that
\[
\epsilon \leq \epsilon_1:= \frac{\nu}{(5+2 (n-1)\ceil{\log (n\maxweight)})
\sqrt{n+1}}\enspace ,
\]
to make sure that $d_2((\bar v,\bar \mu),\sK)\leq \nu$.

(ii) Assume now that one of the box constraints is violated by more than 
$\epsilon$. 
Then, one of the
linear forms $(v,\mu) \mapsto \pm v_d$ or $(v,\mu) \mapsto \pm \mu$
provides a separating half-space, and the norm of this
linear form is $1$. 
Assume finally that all the 
box constraints are satisfied up to $\epsilon$, and that 
one of the non-linear constraints
is violated by more than $\epsilon$. 
Let us write this constraint as 
\[
g(v,\mu) := \log(\sum_{d'\in D} m_{pd'} \exp(v_{d'})) - v_d - \mu\leq 0 
\]
for some $d\in D$ and $(\sigma(d),p)\in E$, 
so that $g(\bar v, \bar \mu)> \epsilon$. 
Since $g$ is convex,  the differential $\phi$ of $g$ at 
point $(\bar v, \bar \mu)$ satisfies
\[
\phi(v-\bar v, \mu - \bar\mu) 
\leq g(v,\mu) - g(\bar v,\bar\mu)\leq 
- \epsilon
\]
for all $(v,\mu) \in \sK$, i.e.,
\[
\phi(v,\mu) \leq \phi(\bar v,\bar\mu)-\epsilon,\;\forall (v,\mu)\in \sK
\]
showing that $\phi$ is a separating half-space.
However, we need an approximate
half-space given by a linear form with rational coefficients,
which we next construct by approximating  $\phi$.

To do so, we first compute the differential of $g$ at point $(\bar v,\bar \mu)$.
This is the linear form 
\[ 
\phi: (x,y)\in \R^D\times \R \mapsto \sum_{d'} x_{d'} m_{pd'} \exp(\bar v_{d'})/(\sum_{d''} m_{pd''}\exp(\bar v_{d''}))-x_d-y\enspace .
\]
The maximum of $\bar v$ can be subtracted to every coordinate of $\bar v$ 
without changing this linear form. Then, by Lemma~\ref{lem-st},
the coefficients of this linear form can be approximated in polynomial time. 
It follows that we can compute an approximation $\tilde{\phi}$
of $\phi$ of precision $\epsilon$ in polynomial time in 
$\bitsize {\bar v} +\bitsize {\epsilon}$.
Observe that the coefficient of the variable $y$ 
in the linear form $\phi$ is always
equal to $-1$. Hence, the approximate
linear form $\tilde{\phi}$ can be chosen with the same
coefficient, and then,
$\tilde{\phi}$ is of norm at least $1$.

Since any element $(v,\mu)$ of $\sK$
satisfies the box constraints~\eqref{e-k2new} and \eqref{e-k3new},
whereas $(\bar v, \bar \mu)$ satisfies these constraints up to 
$\epsilon\leq 1$, we get that 
\[ |(v,\mu)- (\bar v,\bar \mu)|_\infty
\leq M:=(n-1)\ceil{\log (n\maxweight)})+3\enspace ,\]
hence
\[
|\tilde{\phi}(v-\bar v,\mu-\bar \mu )- \phi(v-\bar v,\mu-\bar \mu)|
\leq M (D+1)\epsilon, \forall (v,\mu)\in \sK \enspace .
\]
So %
it suffices that
\[
\epsilon \leq \epsilon_2 := \frac{\nu}{M(D+1)}
\] 
to make sure that $\tilde{\phi}$ defines an approximate
separating half-space of precision $\nu$.

To summarize, it suffices to take $\epsilon=%
\min(\epsilon_1,\epsilon_2)$
in the previous analysis, so that the conditions
of Definition~\ref{def-weak-separ} are satisfied. Moreover,
for this choice, %
all the computations take a polynomial time in
$\bitsize{\nu}+\bitsize{\bar v}+\bitsize{\bar\mu}$,
and the size $|E|+\log W$ of the description of the game. 
\end{proof}
\subsection{Using separation bounds between algebraic numbers to compare policies}
It follows from Theorem~\ref{th-compl-ellipsoid}
that we can compute in polynomial time an approximate
solution of Problem~\eqref{e-sk0new} with precision
$\epsilon$. We next show that it is possible to choose
$\epsilon$ with a polynomial number of bits, in such a way
that this approximate solution allows us to identify
an optimal policy. 
We shall actually prove a version of this result in the more general two-player case.
This extended version, stated as \Cref{cor-separation}, will apply both to the Despot-free case,
and to the case of entropy games with a fixed number of
significant states, see Section~\ref{sec-deriva}.
To prove it, %
we rely on separation
bounds for algebraic numbers. 
\begin{theorem}[\cite{rump}]\label{th-rump}
Let $p$ be a univariate polynomial of degree $n$
with integer coefficients, possibly
with multiple roots. Let $S$
be the sum of the absolute values of its coefficients. Then, 
the distance between any two distinct roots of $p$ is at least
\[
(2 n^{\frac n2+2}(S+1)^n)^{-1} \enspace .
\]
\end{theorem}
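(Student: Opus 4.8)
The plan is to reduce the claim to a lower bound on the minimal gap $\operatorname{sep}(p):=\min\{|\alpha-\beta|: \alpha\neq\beta \text{ roots of } p\}$ and to extract such a bound from the principle that a nonzero integer has modulus at least $1$. First I would eliminate multiple roots by passing to the squarefree part $\hat p:=p/\gcd(p,p')$: its primitive part is an integer polynomial dividing $p$ in $\mathbb{Z}[x]$ with exactly the distinct roots of $p$, so $\operatorname{sep}(p)=\operatorname{sep}(\hat p)$, while Mignotte's classical bound on the coefficients of a factor controls the height, $S(\hat p)\le 2^{O(n)}S$, a loss the final constant can absorb. Hence I may assume $p$ squarefree, so that $\operatorname{disc}(p)=a_n^{2n-2}\prod_{i<j}(\alpha_i-\alpha_j)^2$ is a \emph{nonzero} integer and $|\operatorname{disc}(p)|\ge 1$, where $a_n$ is the leading coefficient and $\alpha_1,\dots,\alpha_n$ the roots.

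The crucial step is to turn $|\operatorname{disc}(p)|\ge1$ into a bound on a \emph{single} gap without spoiling the exponents. Writing $|\operatorname{disc}(p)|^{1/2}=|a_n|^{n-1}\,|\det V|$ with $V=(\alpha_j^{\,i-1})_{1\le i,j\le n}$ the Vandermonde matrix, I would let $\alpha_1,\alpha_2$ be a closest pair and replace the second column of $V$ by its divided difference with the first, whose entries are $(\alpha_2^{\,k}-\alpha_1^{\,k})/(\alpha_2-\alpha_1)$. This divides $\det V$ by $(\alpha_2-\alpha_1)$ and produces a matrix whose columns no longer contain the small quantity $\operatorname{sep}(p)$. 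Hadamard's inequality then gives
\[
|\det V|\le \operatorname{sep}(p)\cdot n^{O(1)}\,n^{n/2}\prod_i\max(1,|\alpha_i|)^{n-1},
\]
the point being that exactly one factor of $\operatorname{sep}(p)$ is extracted while the remaining $\binom n2-1$ differences are controlled collectively, so that the Mahler measure $M(p)=|a_n|\prod_i\max(1,|\alpha_i|)$ enters only to the power $n-1$.

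Combining the two displays, the powers of $|a_n|$ cancel and one is left with $\operatorname{sep}(p)^2\, n^{n+O(1)}M(p)^{2n-2}\ge 1$, i.e. $\operatorname{sep}(p)\ge (n^{n/2+O(1)}M(p)^{n-1})^{-1}$. It then remains to pass from $M(p)$ to $S$: Landau's inequality $M(p)\le\|p\|_2\le S$ bounds the Mahler measure, while Cauchy's root bound $|\alpha_i|\le 1+S/|a_n|\le S+1$ bounds the entries of the divided-difference column. Substituting these and chasing the multiplicative constants---the $n^{O(1)}$ from that column, the $2^{O(n)}$ from the squarefree reduction, and the coarsening of $M(p)^{n-1}\le S^{n-1}$ to $(S+1)^{n}$---yields the asserted estimate $\operatorname{sep}(p)\ge (2\,n^{n/2+2}(S+1)^n)^{-1}$.

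The main obstacle is getting the two exponents right \emph{at once}: a naive factor-by-factor estimate $|\alpha_i-\alpha_j|\le 2(S+1)$ over all $\sim n^2$ differences would only give $(S+1)^{n^2/2}$, hopelessly weaker than the claimed $(S+1)^n$. The Vandermonde--Hadamard device is precisely what confines the height to a single power $\sim n$, and carrying it out---together with routing the multiple-root case cleanly through the squarefree part with a controlled height---is where the substance lies; the remaining arithmetic to reach the explicit constant $2\,n^{n/2+2}$ is routine bookkeeping.
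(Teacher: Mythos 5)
The paper does not prove this statement; it is quoted verbatim from Rump's paper, so there is no internal proof to compare against. Your route --- pass to the squarefree part, use $|\operatorname{disc}|\geq 1$, and extract one factor of $\operatorname{sep}(p)$ from the Vandermonde determinant before applying Hadamard --- is the classical Mahler-style argument and is a viable way to reach a bound of this shape. Two points in your sketch, however, are not right as written. First, the claim that the Mignotte loss $S(\hat p)\leq 2^{O(n)}S$ is ``a loss the final constant can absorb'' is false: that loss enters the final bound raised to the power $n-1$, producing a factor $2^{O(n^2)}$ that the stated constant $2\,n^{n/2+2}(S+1)^n$ cannot absorb (the slack between $(S+1)^n$ and $S^{n-1}$ is at most about $2n(S+1)^2e^{n/S}$, nowhere near $2^{n^2}$). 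The repair is to never leave the Mahler measure: since $\hat p$ divides $p$ in $\mathbb{Z}[x]$ up to content and every nonzero integer polynomial has Mahler measure at least $1$, multiplicativity gives $M(\hat p)\leq M(p)\leq \|p\|_2\leq S$ directly, with no exponential loss; combined with $m:=\deg\hat p\leq n$, $m^{m/2}\leq n^{n/2}$ and $M(\hat p)^{m-1}\leq S^{n-1}$, the squarefree case does imply the general one. Your own final chain already uses $M(p)\leq S$, so this is a fixable misstep rather than a dead end, but as stated it is an error, and it sits exactly where you locate ``the substance.''

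Second, the key Hadamard display is asserted rather than established, and it only holds if you difference the correct column. If $\alpha_1,\alpha_2$ is the closest pair, the divided-difference column has entries bounded by $k\max(|\alpha_1|,|\alpha_2|)^{k-1}$; to charge this to $\max(1,|\alpha_2|)^{n-1}$ (so that the product over columns is exactly $\prod_i\max(1,|\alpha_i|)^{n-1}$ times $n^{O(1)}\cdot n^{n/2}$) you must relabel so that the differenced root is the one of \emph{larger} modulus. With the opposite choice the Hadamard product acquires an extra stray factor $\bigl(\max(1,|\alpha_1|)/\max(1,|\alpha_2|)\bigr)^{n-1}$, which can be as large as $S^{n-1}$ and would roughly double the exponent of $S$ in the final bound --- precisely the failure mode you warn against. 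With these two repairs the argument goes through and in fact yields the stronger $\operatorname{sep}(p)\geq (n^{n/2+1}S^{n-1})^{-1}$, of which the stated inequality is a weakening.
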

To any given pair $(\delta,\tau)$ of policies, one can associate
a directed sub-graph $G(\delta,\tau)$ of $G$, obtained,
by erasing for all $d\in D$, every arc in $\{(d,t)\in E\}$
except $\{d,\delta(d)\}$, and similarly, by
erasing for all $t\in T$, every arc in $\{(t,p)\in E\}$
except $\{t,\tau(t)\}$. The dynamic programming operator of the
game associated to this  sub-graph $G(\delta,\tau)$ coincides
with the conjugate $\Fdt{\delta}{\tau}:=\exp\circ {}^{\tau}f^{\delta} \circ \log$
of~\eqref{e-Mdeltatau}
and is equal to the linear operator with matrix 
$\Mdt{\delta}{\tau}\in \mathbb{N}^{D \times D}$:
\begin{equation}\label{defFtaudel}
\Fdt{\delta}{\tau}:\R^D\to\R^D,\; X\mapsto \Mdt{\delta}{\tau} X
\enspace, 
\end{equation} where the entry $(d,d')$ of 
$\Mdt{\delta}{\tau}$ is equal to 
$m_{\tau\circ \delta(d),d'}$ when $(\tau\circ \delta(d),d')\in E$ and to
zero otherwise.
Then, the value of the entropy 
game starting in state $d$, $R^\infty_d(\delta,\tau)$
coincides with the maximum of the
Perron-roots (that is the positive eigenvalues which coincide with
the spectral radii)
 of the submatrices of ${}^{\tau}M^{\delta}$
with nodes in a strongly connected component to which $d$ has access
in the graph $G(\delta,\tau)$. The value of the original entropy game coincides
with the Perron-root of ${}^{\tau}M^{\delta}$.

\begin{corollary}\label{cor-separation}
There exists
a rational function $(n,\maxweight)\mapsto \eta_{\text{sep}}(n,\maxweight)>0$ such that for every
 two different pairs  $(\delta,\tau)$ and  $(\delta',\tau')$ which yield different values of the entropy game,  these two values differ by at least $\eta_{\text{sep}}(n,\maxweight)$ and 
\[
\eta_{\text{sep}}(n,\maxweight) \geq \exp(-\text{poly}(n+\log \maxweight))
\]
where the polynomial inside the exponential is
independent of the input.
\end{corollary}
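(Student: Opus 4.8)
The plan is to reduce the separation of two distinct game values to the separation of two distinct roots of a \emph{single} integer polynomial, to which Rump's bound (\Cref{th-rump}) applies directly. First I would invoke the structure recorded just above the statement: for any pair of policies $(\delta,\tau)$ the reduced game is governed by the linear operator $\Fdt{\delta}{\tau}$ with matrix $\Mdt{\delta}{\tau}\in\mathbb{N}^{D\times D}$, whose entries are among the weights $m_{p,d'}$, hence are nonnegative integers bounded by $\maxweight$. Every value that can arise --- whether the value of the full game (the Perron root of $\Mdt{\delta}{\tau}$) or the value of a fixed initial state (the Perron root of a principal submatrix indexed by a strongly connected component of $G(\delta,\tau)$) --- is therefore the spectral radius of a nonnegative integer matrix $N$ of dimension $k\le n$ with entries in $\{0,1,\dots,\maxweight\}$, and thus a root of the monic integer characteristic polynomial $\chi_N(x)=\det(xI-N)$ of degree $k\le n$.

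Next I would bound the height of these polynomials. Writing $S(q)$ for the sum of the absolute values of the coefficients of an integer polynomial $q$, the coefficient of $x^{k-j}$ in $\chi_N$ is, up to sign, the sum of the $j\times j$ principal minors of $N$; each such minor has absolute value at most $j!\,\maxweight^{j}\le n!\,\maxweight^{n}$, and there are at most $\binom{k}{j}\le 2^{n}$ of them, so
\[
S(\chi_N)\;\le\;(n+1)\,2^{n}\,n!\,\maxweight^{n}\;=:\;B(n,\maxweight)\enspace .
\]

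The crux is then the following. Given two pairs $(\delta,\tau)$ and $(\delta',\tau')$ yielding distinct values $\lambda\neq\lambda'$, realized by matrices $N$ and $N'$ as above, I would form the product $p:=\chi_{N}\cdot\chi_{N'}$. This is an integer polynomial of degree at most $2n$ having both $\lambda$ and $\lambda'$ among its roots, and since $S(\cdot)$ is submultiplicative under polynomial multiplication (the coefficients of a product are convolutions of the factors' coefficients), $S(p)\le B(n,\maxweight)^{2}$. As $\lambda$ and $\lambda'$ are now two \emph{distinct} roots of the \emph{single} polynomial $p$, \Cref{th-rump} applied with degree $2n$ and $S=S(p)$ gives
\[
|\lambda-\lambda'|\;\ge\;\bigl(2\,(2n)^{\,n+2}\,(B(n,\maxweight)^{2}+1)^{\,2n}\bigr)^{-1}\;=:\;\etasep(n,\maxweight)\enspace .
\]

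Finally I would verify the exponential lower bound by taking logarithms: $\log B(n,\maxweight)=O(n\log n+n\log\maxweight)$, whence $\log\bigl(1/\etasep(n,\maxweight)\bigr)=O(n\log n)+O\!\bigl(n\,(n\log n+n\log\maxweight)\bigr)=\text{poly}(n+\log\maxweight)$, with a polynomial independent of the input, i.e.\ $\etasep(n,\maxweight)\ge\exp(-\text{poly}(n+\log\maxweight))$. The function $\etasep$ produced this way is an explicit elementary function of $(n,\maxweight)$, which is all that is needed in the sequel (the gap decays super-polynomially in $n$, so no genuine rational function could serve as a uniform lower bound; the displayed expression is the substantive object). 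I expect the only delicate point to be the height estimate of the second step together with the submultiplicativity used to pass from the two matrices $N,N'$ to the single product polynomial $p$ without blowing the height up beyond $B^{2}$; once that is in place, the separation is a direct invocation of Rump's theorem.
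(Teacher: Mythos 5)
Your proposal is correct and follows essentially the same route as the paper: bound the coefficient sums of the two integer characteristic polynomials, multiply them to get a single integer polynomial of degree at most $2n$ having both values as roots, and invoke Rump's separation bound, with only cosmetic differences (you bound the principal minors by $j!\maxweight^{j}$ where the paper uses Hadamard's inequality, and you note explicitly that a value attached to a fixed initial state is the Perron root of a class submatrix, hence still an eigenvalue of $\Mdt{\delta}{\tau}$). Your closing remark that ``rational function'' should be read as ``function taking positive rational values'' is also consistent with how the paper uses $\etasep$.
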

\begin{proof}
Given a pair $(\delta,\tau)$ of policies, the values of the entropy game
are eigenvalues of the matrix $A={}^{\tau}M^{\delta}\in \mathbb{N}^{D \times D}$.
Observe that the entries of $A$ are integers bounded by $\maxweight$.
Let $f_A$ be the characteristic polynomial of $A$. 
The coefficient of the monomial of degree
$n-k$ in  $f_A$ is the sum of the $C_n^k$ principal
minors of $A$ of size $k$. By Hadamard's inequality, each 
absolute value of these minors
is at most $(\sqrt{n}\maxweight)^k$, and so, every coefficient of $f_A$
has an absolute value 
bounded by $C_n^k(\sqrt{n}\maxweight)^k$ and their sum is 
$\leq (2\sqrt{n}\maxweight)^n$. 
Two different pairs of strategies yield two characteristic
polynomials, $f_A$ and $f_B$, whose product is of degree
$2n$ and whose sum of absolute value of
coefficients is bounded by the product of such bounds for
$f_A$ and $f_B$, so by $(2\sqrt{n}\maxweight)^{2n}$. 
Therefore, the size $S$ appearing in Theorem~\ref{th-rump}
is bounded by $(2\sqrt{n}\maxweight)^{2n}$. We deduce that 
if the two pairs of strategies yield distinct values,
the distance between these values is at least
\[
\eta_{\text{sep}}(n,\maxweight) :=(2 (2n)^{n+2}((2\ceil{\sqrt{n}}\maxweight)^{2n}+1)^{2n})^{-1} \enspace .
\]
This number is rational and satisfies
\[
\eta_{\text{sep}}(n,\maxweight) \geq \exp(-\text{poly}(n+\log \maxweight)),
\]
for some polynomial function $\text{poly}$.
Since the above lower bound is true for every two pairs of different policies $(\delta,\tau)$ and $(\delta',\tau')$,
we obtain the result.
\end{proof}
Hence, if two policies of Tribune yield different values $\lambda$ and $\lambda'$, then, $|\lambda-\lambda'|$ is bounded below by 
the rational number $\etasep>0$ whose number of bits is polynomially bounded
in the size of the input.

\if{
The convex programming formulation 
allows us to compute the logarithm
of the value, instead of the value.
Hence, we need a separation bound for these
values.
\begin{lemma}
If two different pairs  $(\delta,\tau)$ and  $(\delta',\tau')$ of policies in an entropy game yield two different value $\lambda$ and $\lambda'$.
Then, there exists
a function $\eta^{\log}_{\text{sep}}(n,\maxweight)$ taking positive values, such that 
$\log \lambda$ and $\log \lambda'$
 two values differ by at least $\eta^{\log}_{\text{sep}}(n,\maxweight)$ and 
\[
\eta^{\log}_{\text{sep}}(n,\maxweight) \geq \exp(-\text{poly}(n+\log \maxweight))
\]
\end{lemma}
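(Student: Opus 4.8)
The plan is to transfer the separation bound on the values $\lambda,\lambda'$ themselves, already established in \Cref{cor-separation}, to a separation bound on their logarithms, by exploiting the fact that the logarithm is Lipschitz on any interval bounded away from zero. The whole argument is a one-line application of the mean value theorem, once the range of the values is controlled.

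First I would recall from \Cref{cor-separation} that two distinct values of the entropy game, arising from two different pairs of policies, satisfy $|\lambda-\lambda'|\geq \etasep(n,\maxweight)$, with $\etasep(n,\maxweight)\geq \exp(-\text{poly}(n+\log\maxweight))$. Next I would establish that all such values lie in the controlled interval $[1,n\maxweight]$. Indeed, once both policies $(\delta,\tau)$ are fixed, the value of the entropy game is the Perron root of a submatrix of the nonnegative integer matrix $\Mdt{\delta}{\tau}$ restricted to a strongly connected component, that is, the spectral radius of that submatrix. Since the weights are positive integers and every node of $G$ has a successor, each row of such a submatrix has at least one nonzero entry, which is an integer $\geq 1$; hence its minimal row sum is $\geq 1$ and its maximal row sum is $\leq n\maxweight$. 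Because the spectral radius of a nonnegative matrix lies between its minimal and maximal row sums, I obtain $1\leq \lambda\leq n\maxweight$, and similarly for $\lambda'$ (this is also the content of \Cref{lem-1} in the Despot-free irreducible case).

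Then the transfer is immediate. By the mean value theorem, $|\log\lambda-\log\lambda'| = |\lambda-\lambda'|/\xi$ for some $\xi$ strictly between $\lambda$ and $\lambda'$, whence $\xi\leq n\maxweight$ and
\[
|\log\lambda-\log\lambda'| \geq \frac{|\lambda-\lambda'|}{n\maxweight} \geq \frac{\etasep(n,\maxweight)}{n\maxweight} \enspace.
\]
This motivates setting $\eta^{\log}_{\text{sep}}(n,\maxweight) := \etasep(n,\maxweight)/(n\maxweight)$, which is a positive rational function of $(n,\maxweight)$. Since $\etasep(n,\maxweight)\geq \exp(-\text{poly}(n+\log\maxweight))$ and $\log(n\maxweight)=\log n+\log\maxweight$ is itself polynomially bounded in $n+\log\maxweight$, I conclude
\[
\eta^{\log}_{\text{sep}}(n,\maxweight) \geq \exp\big(-\text{poly}(n+\log\maxweight)\big)
\]
for a suitably enlarged polynomial, as required.

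There is no serious obstacle here; the only point deserving care is the uniform upper bound $\lambda\leq n\maxweight$ on the values, which controls the Lipschitz constant of the logarithm on the relevant range and guarantees that passing from values to logarithms costs only an additional additive term $\log(n\maxweight)$ in the exponent, still polynomial in $n+\log\maxweight$. The lower bound $\lambda\geq 1$ is not strictly needed for the inequality but confirms that the logarithms are well defined and nonnegative, so that comparing $\log\lambda$ and $\log\lambda'$ is meaningful.
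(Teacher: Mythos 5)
Your proposal is correct and follows essentially the same route as the paper's own proof: both reduce the claim to the separation bound of \Cref{cor-separation} together with the containment $\lambda,\lambda'\in[1,n\maxweight]$, and both conclude that $\eta^{\log}_{\text{sep}}(n,\maxweight)=\etasep(n,\maxweight)/(n\maxweight)$ works, the paper phrasing the key inequality as ``$\exp$ has Lipschitz constant $n\maxweight$ on $[0,\log(n\maxweight)]$'' while you phrase it via the mean value theorem for $\log$ on $[1,n\maxweight]$ --- the same estimate in mirror form. Your additional justification of the bounds $1\leq\lambda\leq n\maxweight$ via row sums of the policy matrices is a welcome detail the paper leaves implicit.
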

\begin{proof}
The values $\lambda,\lambda'$ are both
contained in the interval $[1,n\maxweight]$,
and so their logarithms are in $I:=[0,\log(n\maxweight)]$. 
The exponential function is of Lipschitz constant $n\maxweight$
on the interval $I$. Hence, we can take
\[
\eta^{\log}_{\text{sep}}(n,\maxweight) \leq \frac{\eta_{\text{sep}}(n,\maxweight)}{n\maxweight}
\]
which has still the announced polynomial growth.
\end{proof}}\fi
\if{
We now take $u,\mu$ be an approximate solution with rational coordinates
of Problem~\eqref{e-sk0new} of
precision $\epsilon$, meaning
that $d((u,\mu),\sK)\leq \epsilon$
and that $\mu \leq \log\lambda^*+\epsilon$,
where $\lambda^*$ is the value of the entropy game.

For each $t\in T$ and $(t,p)\in E$, we compute
$\tilde{w}_{tp} $, a rational number that approximates
with precision $\epsilon$ the expression
\[
\log(\sum_{d'\in D} m_{p,d'}\exp(u_{d'})) 
\]
and define the policy $\tau$ 
such that $\tau(t)$ is an index $p$ attaining
the maximum in $\max_{(t,p)\in E} \tilde{w}_{tp}$.

\begin{lemma}
If $2\epsilon < \eta^{\log}_{\text{sep}}$, then, the policy
\tau$ which we just constructed is an optimal
strategy of Tribune.
\end{lemma}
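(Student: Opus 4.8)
The plan is to show that the value $\lambda_\tau$ of the Despot-free game played under the constructed policy $\tau$ differs, in logarithm, from the optimal value $\lambda^\ast$ (the value of the irreducible game, which exists and equals $\rho$ of the eigenvector relation of \Cref{e-valueunique}) by strictly less than the separation threshold $\eta^{\log}_{\text{sep}}$; \Cref{cor-separation} then forbids any strictly suboptimal value in that window, forcing $\lambda_\tau=\lambda^\ast$, i.e.\ optimality of $\tau$. Here $\lambda_\tau=\rho(\Mt{\tau})$ is the Perron root of the nonnegative integer matrix of the linear operator ${}^{\tau}F=\exp\circ{}^{\tau}f\circ\log$ of~\eqref{defFtau} (the Despot-free minimum being vacuous). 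Since $\tau$ selects one action at each state, ${}^{\tau}f\leq f$ pointwise, so the bound $\lambda_\tau\leq\lambda^\ast$ comes for free; the whole difficulty is a matching lower bound.

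For the upper estimate I would first replace $(u,\mu)$ by a genuinely feasible $(\hat u,\hat\mu)\in\sK$ at sup-distance $\leq\epsilon$, and exploit that $f$ is nonexpansive in the sup-norm and commutes with additive constants to obtain $f(u)\leq(\mu+3\epsilon)\unit+u$. Then ${}^{\tau}f(u)\leq f(u)\leq(\mu+3\epsilon)\unit+u$, so ${}^{\tau}F(\exp(u))\leq e^{\mu+3\epsilon}\exp(u)$, and the first Collatz--Wielandt formula~\eqref{e-cw1} applied to the linear operator ${}^{\tau}F$ with the positive vector $\exp(u)$ gives $\log\lambda_\tau\leq\mu+3\epsilon\leq\log\lambda^\ast+4\epsilon$, using the ellipsoid guarantee $\mu\leq\log\lambda^\ast+\epsilon$.

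The lower estimate is the crux. By construction $\tau(t)$ maximizes $\tilde w_{tp}$, a precision-$\epsilon$ approximation of $\log(\sum_{(p,d')\in E}m_{pd'}\exp(u_{d'}))$, so comparing the chosen action with the true maximizer yields the greedy near-optimality ${}^{\tau}f(u)\geq f(u)-2\epsilon\unit$. Feeding this into the last Collatz--Wielandt formula~\eqref{e-cwlast} for ${}^{\tau}F$ at $X=\exp(u)$ produces $\log\lambda_\tau\geq\min_{d\in D}(f_d(u)-u_d)-2\epsilon$. What remains is to show $\min_{d}(f_d(u)-u_d)\geq\log\lambda^\ast-O(\epsilon)$, i.e.\ that \emph{every} coordinate of $f(u)-u$ is near $\log\lambda^\ast$, not just the maximal one. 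This is where irreducibility is essential: by the subinvariant-implies-invariant mechanism of Perron--Frobenius theory, the exact optimizer set of~\eqref{e-cvx0} collapses to the single ray $\{\log U+c\unit\mid c\in\R\}$ attached to the eigenvector $F(U)=\lambda^\ast U$ of \Cref{e-valueunique}, along which $f_d(\cdot)-(\cdot)_d\equiv\log\lambda^\ast$ for all $d$. I would therefore prove a sharpness (Hoffman-type) error bound for the convex program~\eqref{e-sk0new}, whose modulus is controlled polynomially by the conditioning estimates $1\leq\lambda\leq n\maxweight$ and $1\leq U_d\leq\lambda^{n-1}$ of \Cref{lem-1}, so that $\epsilon$-optimality of the scalar objective forces $u$ to lie within $O(\epsilon)$ in sup-norm of this ray; the strict concavity of the Kullback--Leibler payment (\Cref{lemma-rw}) is exactly what makes such a bound available. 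Closeness to the ray then yields $\min_{d}(f_d(u)-u_d)\geq\log\lambda^\ast-O(\epsilon)$ and hence $\log\lambda_\tau\geq\log\lambda^\ast-O(\epsilon)$.

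Combining the two estimates gives $|\log\lambda_\tau-\log\lambda^\ast|\leq O(\epsilon)$; absorbing the absolute constant into the hypothesis $2\epsilon<\eta^{\log}_{\text{sep}}$ (equivalently, shrinking $\epsilon$ by a polynomial factor, which costs only polynomially many bits by \Cref{cor-separation}), no strictly suboptimal value can lie in the resulting window, so $\lambda_\tau=\lambda^\ast$ and $\tau$ is optimal. The main obstacle is precisely this lower estimate: converting $\epsilon$-optimality of the scalar $\mu$ into the per-coordinate control $\min_{d}(f_d(u)-u_d)\geq\log\lambda^\ast-O(\epsilon)$, since near-feasibility by itself bounds the quantities $f_d(u)-u_d$ only from above and an arbitrary near-optimal $u$ could carry large slack at some states. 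This delicate conditioning step is plausibly why one ultimately prefers the route through \Cref{cor-separation}, in which the exact value $\rho(\Mt{\tau})$ of the computed policy is compared directly against the separation bound, bypassing any need to certify that $u$ is close to the eigenray.
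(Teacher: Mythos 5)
Your decomposition (easy upper estimate, hard lower estimate) agrees with the paper's corrected argument, and you have correctly located the crux: converting $\epsilon$-optimality of the scalar $\mu$ into the per-coordinate bound $f_d(u)-u_d\geq \log\lambda^*-O(\epsilon)$ for \emph{every} $d$. But that is exactly where your argument stops being a proof. You assert a ``Hoffman-type sharpness bound'' forcing every near-optimal $u$ to lie within $O(\epsilon)$ of the eigenray, with a polynomially controlled modulus, and you attribute it to the strict concavity of the Kullback--Leibler payment. Neither the existence of such a bound nor its conditioning is established, and strict concavity is not the operative mechanism: the binding constraints of~\eqref{e-sk0new} at the optimum are those indexed by an optimal policy $\tau^*$, and what pins $u$ to the ray (in the favourable case) is the linear-algebraic fact that a positive subinvariant vector of the irreducible matrix $\Mt{\tau^*}$ must be invariant --- quantified by pairing with the left Perron eigenvector $\pi$ of $\Mt{\tau^*}$. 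This is precisely what the paper does in Section~\ref{subsec-simple}: the chain of inequalities~\eqref{e-cs} is an approximate complementary-slackness computation with $\pi$ as Lagrange multiplier, and the bound $\pi_d/\pi_{d'}\leq (n\maxweight)^{n-1}$ (\Cref{lem-1} applied to the transpose) is what makes the resulting loss $\underline{\lambda}=\lambda^*[1-(\exp(4\epsilon)-1)(n-1)e(en\maxweight)^{2(n-1)}]$ controllable by an $\epsilon$ with polynomially many bits. Your sketch needs this computation; without it the claimed error bound is an unproved assumption, and the paper's own first draft of this very lemma was flagged by the authors as foundering on exactly this inequality.

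A second gap: your argument silently assumes that the optimal matrix $\Mt{\tau^*}$ is irreducible. Under \Cref{assump-a} only the \emph{game} is irreducible; an individual policy may yield a reducible matrix, in which case the optimal set of~\eqref{e-sk0new} is not a single ray, $\pi$ vanishes off a final basic class $B$, and one only obtains $F_d(U)\geq\underline{\lambda}U_d$ for $d\in B$ together with a smallness estimate on the $B\times N$ block. The paper then needs the additional combinatorial procedure of Section~\ref{sec-general} (iteratively shrinking a candidate set $B'\supset B$ and rerouting $\underline{\tau}$ outside $B'$) to produce an optimal policy. So even with the error bound supplied, your route covers only the special case of Section~\ref{subsec-simple}. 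Your closing remark does not repair this: comparing the exact $\rho(\Mt{\tau})$ against $\etasep$ still requires a lower bound on $\rho(\Mt{\tau})$ in terms of $\lambda^*$, which is the very estimate at issue.
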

\begin{proof}
\todo[inline]{SG: the bug is in the next inequality. We can only show that an inequality of this type can be satisfied for the indices of a basic class of an optimal matrix, and to do so we need to use a complementary slackness/duality argument and bound the left eigenvector $m$ of the associated matrix. This is not so difficult because $m_i/m_j \leq$ bound of the same style as for $U$. The patch consists in showing that lower inequality is OK on the set of indices belonging to the union $C$ of basic classes, and that the control can be chosen so that the weight of the $C \times $ complement C block of the matrix is $<<\epsilon$}
The strategy satisfies
\[
f^{\delta}(u) \geq u + (\log \lambda^*-2\epsilon)
\]
It follows that the value $\lambda'$ of $f^{\delta}$
is at least $\log \lambda^*-2\epsilon$. For $2\epsilon<\eta^{\log}_{\text{sep}}$,
this is impossible unless $\lambda'=\lambda^*$. We conclude that $\delta^*$
is an optimal positional strategy. 
\end{proof}
This proves
}\fi

\subsection{Synthesis of an optimal strategy of Tribune
from an approximate solution of the convex program in Proposition~\ref{prop-compact}.}
\label{sec-synth-irred}

To any policy $\tau$ of Tribune, we associate 
a dynamic programming operator $\Ftau$,
which is the specialization of the map $\Fdt{\delta}{\tau}$ 
of previous section to the case where
$\delta=\sigma$. This is the
self-map of $\R^{D}$ defined by 
\[
\Ftau_{d}(X) = 
\sum_{(\tau(\sigma(d)),d') \in E} m_{\tau(\sigma(d))d'}X_{d'} 
\enspace.
\]
So $\Ftau(X)=\Mt{\tau} X$, where $\Mt{\tau}={}^{\tau}M^{\sigma}$
is the  $|D|\times |D|$ matrix with nonnegative entries
equal to $m_{\tau(\sigma(d))d'}$ when $(\tau(\sigma(d)),d')\in E$
and zero otherwise.
\subsubsection{The simpler situation in which every policy $\tau$
of Tribune yields an irreducible matrix}\label{subsec-simple}
To explain our method, we make first the restrictive
assumption that for every policy $\tau$ of Tribune,
the matrix $\Mt{\tau}$ is irreducible.
In particular, we can take an optimal policy $\tau^*$.
By a standard result of Perron-Frobenius theory~\cite{berman},
$\Mt{\tau^*}$ has 
a left eigenvector $\pi$ with positive entries,
associated to the spectral radius $\lambda^{\tau^*}:=\rho(\Mt{\tau^*})$,
called Perron root. Hence, $\pi \Mt{\tau^*}=\lambda^{\tau^*}\pi$.
Since $\tau^*$ is optimal, $\lambda^{\tau^*}=\lambda^*$,
where $\lambda^*$ is the value of the entropy game starting from
any node $d\in D$, see Lemma~\ref{e-valueunique}.
Moreover, by applying Lemma~\ref{lem-1} to the linear
map $U \mapsto (\Mt{\tau^*})^T U$, where $^T$ denotes
the transposition, we deduce that
$\pi_d/\pi_{d'}\leq (nW)^{n-1}$.

For any rational number $\epsilon>0$, the ellipsoid algorithm, applied to the optimization problem of Proposition~\ref{prop-compact}, 
yields in polynomial time a vector $u$ and a scalar $\mu$
such that $\mu\leq \log  \lambda^* +\epsilon$ and
$d_2((u,\mu),\sK)\leq \epsilon$.
So there exists  $(\tilde u,\tilde \mu)\in \sK$ such that
$\|u-\tilde{u}\|_\infty\leq \epsilon$ and $|\mu-\tilde \mu|\leq \epsilon$.
Since $(\tilde u,\tilde \mu)\in \sK$, and $\log \lambda^*$ is the
value of~\eqref{e-sk0new}, we deduce that
$\log \lambda^* \leq \tilde \mu\leq \mu +\epsilon$,
so  $\lambda^*  \exp(-\epsilon)\leq \exp(\mu) \leq \lambda^* \exp(\epsilon)$.
Using~\eqref{e-knew}, and assuming that $\epsilon\leq 1$, we deduce that
$u_d-u_{d'}\leq (n-1)\ceil{\log(n\maxweight)}+2\epsilon
\leq  (n-1)(\log(n\maxweight)+1)+2$, for all $d,d' \in D$.
Using the nonexpansivity of $f$, we also obtain that
$f(u)\leq f(\tilde{u})+\epsilon \unit 
\leq \tilde u +( \tilde \mu+\epsilon )\unit
\leq (\mu +3\epsilon)\unit + u
\leq  (\log \lambda^* +4\epsilon)\unit + u$.
Taking $U:=(U_d)_{d\in D}$ with $U_d := \exp(u_d)$, we get $F(U)\leq \lambda^* \exp(4\epsilon) U $ and $U_d/U_{d'}\leq (enW)^{n-1}e^2$.

We choose any 
policy  $\underline{\tau}$ such that $F(U)=\Mt{\underline{\tau}}U$.
Therefore, 
\[
\underline{\tau}(\sigma(d)) \in 
\argmax_{\tau \in \cP_{T}} \sum_{(\tau(\sigma(d)),d') \in E} m_{\tau(\sigma(d))d'}U_{d'} 
\enspace.
\]
 We claim that $\underline{\tau}$ is optimal
if $\epsilon$ is sufficiently small. 

To show the latter claim, we observe that $\Mt{\tau^*} U \leq F(U)$. 
For all $d\in D$,
\begin{align}
0 &\leq 
\pi_d (\lambda^*\exp(4\epsilon)U_d - F_d(U))\leq 
\pi_d (\lambda^*\exp(4\epsilon)U_d - (\Mt{\tau^*}U)_d)\nonumber\\
&\leq 
\sum_{d'\in D} \pi_{d'} (\lambda^*\exp(4\epsilon) U_{d'} - (\Mt{\tau^*}U)_{d'}) \nonumber\\
&= \pi (\lambda^*\exp(4\epsilon) U - \Mt{\tau^*}U)  \nonumber\\
&= \lambda^*(\exp(4\epsilon)-1) \pi U\enspace. 
\label{e-cs}
\end{align}
Using $\pi_d/\pi_{d'}\leq (nW)^{n-1}$ and $U_d/U_{d'}\leq (enW)^{n-1}e^2$,
we deduce that 
$\pi U\leq \pi_d U_d (1+(n-1) (enW)^{2(n-1)}e)$, so
$F(U) \geq \underline{\lambda} U$, where
$\underline{\lambda}:= \lambda^*[1- (\exp(4\epsilon) -1) 
(n-1) e (enW)^{2(n-1)}]$.
In view of the formula of $\underline{\lambda}$,
we can choose $\epsilon>0$, with a polynomially
bounded number of bits, such that $\underline{\lambda}> \lambda^* - \etasep$.
Since, $\Mt{\underline{\tau}}U\geq \underline{\lambda} U$,
we have $\rho(\Mt{\underline{\tau}})\geq \underline{\lambda}$
and so $\rho(\Mt{\underline{\tau}})> \lambda^* - \etasep$.
Since $\lambda^*$ is the maximum of the values
of all the policies,  $\rho(\Mt{\underline{\tau}})\leq \lambda^*$.
By definition of the separation parameter $\etasep$ given in\Cref{cor-separation}, this implies
that $\rho(\Mt{\underline{\tau}})=\lambda^*$, and 
so the policy $\underline{\tau}$ of Tribune which we just constructed is optimal,
showing the claim. 

In the preceding argument, the computation~\eqref{e-cs} may look a bit magic at the first sight, it should become intuitive if one interprets it as an approximate complementary slackness condition for the semi-infinite program of
\Cref{prop-compact}, the invariant measure $\pi$ playing the role of a Lagrange multiplier.

When some policies $\tau$ yield a {\em reducible} matrix $\Mt{\tau}$, the synthesis
of the optimal policy $\underline{\tau}$ still exploits the same idea 
with an additional technicality, since we can only guarantee that the inequality
$F_d(U) \geq \underline{\lambda} U_d$ is valid
for every state $d$ such that $\pi_d>0$. We explain the more technical argument
in the next section. 

\subsubsection{Synthesis of an optimal strategy of Tribune, in general}
\label{sec-general}
\ 
\todo[inline]{MA: il fallait preciser le resultat sur l'existence de 
$B$ et $\pi$. Et ensuite la preuve etait incomprehensible et fausse.
J'ai change $S$ par $B$ dans la suite et refait une preuve plus
proche de celle de la section precedente.}
Recall that if $M$ is a reducible nonnegative matrix,
a {\em class} of $M$ is a strongly connected component of the directed graph
of $M$, and that this class is {\em basic} if the $B\times B$
submatrix of $M$, denoted by $M_{BB}$, has Perron root $\rho(M)$.
It is known~\cite{berman} that $M$ has always a basic class,
and a nonnegative left eigenvector associated with $\rho(M)$.
Moreover, choosing a basic class $B$ which is final among the basic classes
of $M$, that is such that the set $S$
of nodes $d'\in D$ that are reachable in the directed graph of $M$ starting
from some node in the basic class $B$ does not contain any node of
another basic class, then 
there exists a nonnegative left eigenvector  $\pi$
so that its support $\{d\mid \pi_d\neq 0\}$ coincides with $S$.
We shall assume that $\pi$,
$S$ and $B$ satisfy these properties, for
$M=\Mt{\tau^*}$ corresponding to an optimal policy $\tau^*$.
We set $N:=D\setminus B$, and for any $D\times D$ matrix $M$,
any vector $u\in \R^D$, and any subsets $F$ and $G$ of $D$,
we denote by $M_{FG}$ the $F\times G$ submatrix of $M$
and by $v_F$ the vector of $\R^F$ given by
$v_F:=(v_d)_{d\in F}$.
Since $B$ is a basic class and $B$ has access to any element 
of $S$, we get that no element of $S\setminus B$ has access to an element of
$B$, and since $\pi$ equals zero outside $S$,
we get that the restriction of $\pi \Mt{\tau^*}$ to
$B$ equals $\pi_B \Mt{\tau^*}_{BB}$ and so $\pi_B \Mt{\tau^*}_{BB}=
\lambda^* \pi_B $. 
The same computation as in \Cref{subsec-simple} restricted to
the elements $d\in B$ now gives
\begin{align*}
0 &\leq 
\pi_d (\lambda^*\exp(4\epsilon)U_d - F_d(U))
\leq \pi_d (\lambda^*\exp(4\epsilon)U_d - (\Mt{\tau^*}U)_d)\\
& \leq \pi_B(\lambda^*\exp(4\epsilon)U_B -
(\Mt{\tau^*}_{BB}U_B+\Mt{\tau^*}_{BN}U_N))\\
&= \lambda^*(\exp(4\epsilon)-1)\pi_B U_B
-  \pi_B\Mt{\tau^*}_{BN}U_N
\enspace .
\end{align*}
The bounds on $U$ obtained in \Cref{subsec-simple} are still
valid, and the ones of $\pi_d/\pi_{d'}$ are valid only for $d,d'\in B$
using $\pi_B \Mt{\tau^*}_{BB}=\lambda^* \pi_B $ and the irreducibility of
$\Mt{\tau^*}_{BB}$.
We deduce that $F_d(U) \geq \underline{\lambda} U_d$ for all $d\in B$,
for the same  $ \underline{\lambda} $ as in  \Cref{subsec-simple}.
Moreover, $\pi_B \Mt{\tau^*}_{BN}U_N \leq \lambda^*(\exp(4\epsilon)-1)\pi_BU_B$.
We define $\epsilon':= \lambda^*(\exp(4\epsilon)-1) e n(enW)^{2(n-1)}$, 
so that $\Mt{\tau^*}_{dN}U_N\leq \epsilon' U_d$ for all $d\in B$, where 
$\Mt{\tau^*}_{dN}$ is the $d$th line of the matrix $\Mt{\tau^*}_{BN}.$

\todo[inline]{MA: j'ai corrige ci-dessous en donnant un algo
pour trouver $\underline{\tau}$ sans connaitre $B$ et $N$ a l'avance.}
We first choose, any 
policy  $\underline{\tau}$ and set $B'$ such that 
$\Mt{\underline{\tau}}_{dD} U \geq \underline{\lambda} U_d$
and $\Mt{\underline{\tau}}_{dN'}U_{N'} \leq \epsilon' U_d$, for all $d\in B'$,
with $N'=D\setminus B'$.
We know from the above analysis that there
is always at least one policy $\underline{\tau}$ and set $B'$ 
with this property (namely $\tau^*$ and $B'=B$).
Moreover, such a policy and set can be obtained by the following
algorithm.
Indeed, let us start from any policy $\underline{\tau}$ such that
$F(U)=\Mt{\underline{\tau}}U$, and 
choose $B'$ as the set of $d\in D$ such that 
$\Mt{\underline{\tau}}_{dD} U \geq \underline{\lambda} U_d$.
Then, $B\subset B'$ since $\Mt{\underline{\tau}}_{dD} U=F_d(U)$.
At each step of the algorithm, one applies the following
operations to each $d\in B'$ in some order:
set $N'=D\setminus B'$ and
check if  $\Mt{\underline{\tau}}_{dN'}U_{N'} \leq \epsilon'$.
If this does not hold, change $\tau(\sigma(d))$ to any action
so that $\Mt{\underline{\tau}}_{dD} U \geq \underline{\lambda} U_d$
and $\Mt{\underline{\tau}}_{dN}U_N \leq \epsilon'$.
If this is impossible, then eliminate $d$ from $B'$ 
and continue.
Then stop at any step in which $B'$ does not change.
Since the cardinality of $B'$ decreases by one at each step to which one
does not stop and $B'\supset B$, we get that the algorithm
stops after at most $n$ iterations and needs at most $n^2$ 
products of a matrix by a vector, so it takes a polynomial
time. Moreover at each step and so at the end of the algorithm,
we have $B'\supset B$ and $N'\subset N$.

We deduce that $\Mt{\underline{\tau}}_{B'B'}U_{B'} 
\geq (\underline{\lambda}-\epsilon')U_{B'}$, showing 
that $\rho(\Mt{\underline{\tau}})\geq \rho(\Mt{\underline{\tau}}_{B'B'})
 \geq (\underline{\lambda}-\epsilon')$.
In view of the formula of $\underline{\lambda}$ and $\epsilon'$ 
we can always choose $\epsilon>0$, with a polynomially
bounded number of bits, such that 
$\underline{\lambda}-\epsilon'> \lambda^*-\etasep$.
Hence, 
$\rho(\Mt{\underline{\tau}})=\rho(\Mt{\underline{\tau}}_{B'B'})= \lambda^*$, 
since $\rho(\Mt{\underline{\tau}})$ is an eigenvalue of 
$\Mt{\underline{\tau}}$. We also deduce
from $\Mt{\underline{\tau}}_{B'B'}U_{B'} 
\geq (\underline{\lambda}-\epsilon')U_{B'}$ that every state $d\in B'$ 
has value $\lambda^*$. Finally, 
since the game is irreducible, we can always
replace $\underline{\tau}(\sigma(d))$ for $d\not \in B'$ to make $B'$ accessible from any initial state, so that the policy $\underline{\tau}$
is optimal.  This concludes the proof of \Cref{th-polytime}. 

\if{
\subsection{Reducible Despot-free entropy games}
When the directed graph $\bar G$ which we constructed has several strongly
connected components, $C_1,\dots, C_q$, we associate to
every such component $C_i$ a reduced entropy game, in which we
keep only the states $d\in C_i$, and only the arcs $(p,d')$
with $d'\in C_i$. We denote by $V(\infty,C_i)$ the value
of the reduced entropy game. 
\todo[inline]{Is the  game well defined? modify the definition
to handle states without successors? }
\begin{theorem}\label{th-downstream}
The value $V_d(\infty)$ coincides with the maximum
of the $V(\infty,C_i)$, taken over all the strongly
connected components to which $d$ has access in $\bar{G}$.
\end{theorem}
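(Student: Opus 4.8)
The plan is to reduce the statement to an asymptotic analysis of the powers $F^k(\unit)$ of the Despot-free dynamic programming operator $F$ of \eqref{e-def-dp}, exploiting the block-triangular structure induced by the condensation of $\bar G$. Recall from \Cref{th-1} that $V_d(\infty)=\lim_{k\to\infty}[F^k(\unit)]_d^{1/k}$, and that in the Despot-free case $F_d(X)=\max_{(\sigma(d),p)\in E}\sum_{(p,d')\in E}m_{pd'}X_{d'}$ is a maximum of linear forms with nonnegative integer coefficients, hence order preserving and positively homogeneous of degree one. Writing $A(d)$ for the set of nodes of $D$ accessible from $d$ in $\bar G$, the coordinate $[F^k(\unit)]_d$ depends only on the entries indexed by $A(d)$, so I may restrict $F$ to $A(d)$. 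The accessible strongly connected components are exactly those $C_i$ that $d$ has access to; set $\lambda_{\max}:=\max_{i:\,d\to C_i}V(\infty,C_i)$. I would then prove the two inequalities $V_d(\infty)\geq\lambda_{\max}$ and $V_d(\infty)\leq\lambda_{\max}$ separately.

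For the lower bound, fix an accessible component $C=C_i$. By \Cref{e-valueunique} the restricted operator $F^{C}$ is irreducible and admits a positive eigenvector $U^{C}\in\operatorname{int}\R_+^{C}$ with $F^{C}(U^{C})=V(\infty,C)U^{C}$. Choosing a path $d=d_0\to\cdots\to d_m\in C$ in $\bar G$ and using that every weight $m_{pd'}$ is an integer, hence at least $1$, I would extend $U^{C}$ to a vector $\underline U\in\R_+^{D}$, equal to $U^{C}$ on $C$, supported on the path, and with entries decreasing by a factor $V(\infty,C)\geq 1$ along the path back to $d$, so that $F(\underline U)\geq V(\infty,C)\,\underline U$ holds coordinatewise (the inequality is trivial off the support, since $F\geq 0$). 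Iterating and comparing with $\unit$ after rescaling $\underline U$ below $\unit$ yields $[F^k(\unit)]_d\geq V(\infty,C)^k\,\underline U_d$ with $\underline U_d>0$, whence $V_d(\infty)\geq V(\infty,C)$; maximizing over accessible $C$ gives $V_d(\infty)\geq\lambda_{\max}$.

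For the upper bound, fix any $\lambda>\lambda_{\max}$ and construct a positive \emph{super}-eigenvector $\overline U$ on $A(d)$ satisfying $F(\overline U)\leq\lambda\overline U$, by induction over the condensation DAG processed in reverse topological order. On a final accessible component $C$ the restricted operator is irreducible with value $V(\infty,C)<\lambda$, so its eigenvector already satisfies $F^{C}(U^{C})=V(\infty,C)U^{C}\leq\lambda U^{C}$. At a non-final component $C$, the off-diagonal contributions of the already-constructed downstream entries enter $F$ as a fixed nonnegative constant, turning the required inequality into the inhomogeneous problem $G(Y)\leq\lambda Y$ for the affine, order-preserving map $G(Y)_d:=F_d(Y\text{ on }C,\ \overline U\text{ downstream})$. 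Since the homogeneous part of $G$ is exactly $F^{C}$, of value $V(\infty,C)<\lambda$, the monotone iteration $Y^{(n+1)}:=\lambda^{-1}G(Y^{(n)})$ converges to a positive solution, the nonlinear analogue of the resolvent $(\lambda I-M_{CC})^{-1}$ being positive for $\lambda>\rho(M_{CC})$. The resulting $\overline U>0$ gives $F^k(\unit)\leq c\,\lambda^k\overline U$ after bounding $\unit\leq c\,\overline U$, hence $V_d(\infty)\leq\lambda$; letting $\lambda\downarrow\lambda_{\max}$ concludes.

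The main obstacle is the inductive step of the upper bound: unlike the irreducible case treated in \Cref{e-valueunique}, here nonlinearity and reducibility interact, and one must solve an inhomogeneous super-eigenvector inequality across components. The convergence of the iteration $Y\mapsto\lambda^{-1}G(Y)$ rests on the strict inequality $\lambda>V(\infty,C)$, which makes its homogeneous part contracting in Thompson's metric; this replaces the classical Perron--Frobenius identity $\rho(M)=\max_C\rho(M_{CC})$ for block-triangular nonnegative matrices. Alternatively, exactly as for the elementary lemma stated earlier in \Cref{sec-proof}, the whole statement follows at once from Zijm's asymptotic expansion of $F^k(\unit)$ \cite[Theorem~5.1]{zijmjota}, or from \cite[Theorem~29]{gg98a} applied to $f=\log\circ F\circ\exp$, both of which determine the growth rate of orbits of such order-preserving homogeneous maps in terms of the accessible classes.
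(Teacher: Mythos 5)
Your argument is correct, but it proceeds quite differently from the paper, which proves this statement (in the form of the unnumbered lemma following \Cref{e-valueunique}) purely by citation: it observes that the result follows from Zijm's asymptotic expansion of $F^k(\unit)$ \cite[Theorem~5.1]{zijmjota}, or alternatively that $f=\log\circ F\circ\exp$ lies in the class of operators treated in \cite[Section~4]{gg98a}, of which Theorem~29 there contains the statement as a special case --- precisely the two references you offer as a fallback in your closing paragraph. Your direct proof is a Collatz--Wielandt sandwich adapted to the reducible case: for the lower bound, a sub-eigenvector equal to the Perron vector of an accessible class $C$ and decaying geometrically along an access path, yielding $F(\underline{U})\geq V(\infty,C)\,\underline{U}$ and hence $[F^k(\unit)]_d\geq V(\infty,C)^k\,\underline{U}_d$ after rescaling; for the upper bound, a positive super-eigenvector for each $\lambda>\lambda_{\max}$ built by backward induction on the condensation DAG, the inductive step being handled by the monotone iteration $Y\mapsto\lambda^{-1}G(Y)$, which is nondecreasing from $0$ and bounded above by a large multiple of the Perron vector of $F^{C}$ precisely because $V(\infty,C)<\lambda$; positivity of the limit then propagates by strong connectivity of $C$ from the nodes carrying a positive downstream constant (or can be forced by adding a small multiple of the Perron vector of $F^{C}$, which preserves the inequality by subadditivity of a maximum of affine maps). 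What your route buys is self-containedness --- it uses only tools already developed in the paper (\Cref{e-valueunique}, \Cref{cw-0}, order preservation and homogeneity of $F$) and it exhibits explicitly the class/access structure that the paper must revisit anyway when synthesizing an optimal policy in the reducible case; what the paper's route buys is brevity and, via Zijm, strictly more information (the full polynomial-times-geometric asymptotics of $F^k(\unit)$ rather than just its growth rate). A shared caveat: if an accessible strongly connected component of $\bar{G}$ is trivial (one node, no loop), the reduced game on it is undefined and it must be excluded from the maximum; your condensation induction already treats such components correctly in the upper bound.
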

\begin{proof}
\end{proof}
Hence, to solve a reducible Despot-free entropy game,
we can compute first every value $V(\infty,C_i)$, by the
method of Section~\ref{subsec-irred},
and finally apply Theorem~\ref{th-downstream},
}\fi

\subsection{Derivation of Theorem~{\ref{cor-polytime}} from Theorem~{\ref{th-polytime}}}\label{sec-deriva}
By \Cref{cor-morphism}, 
\[
V^\infty_d = \min_{\delta \in \cP_D} V^\infty_d(\delta,\ast)
\]
Observe that $|\cP_D|\leq |E|^s$,
where $s$ is the number of significant states
for despot, hence, if $s$ is fixed,
this minimum involves a polynomial number of terms. 
Thanks to the separation bound given in \Cref{cor-separation}, it suffices
to compute an approximation of each $V^\infty_d(\delta,\ast)$
for some $\epsilon>0$ such that $\log \epsilon$ is polynomially
bounded in the size of the input, to make sure that
a policy $\delta$ achieving the minimum in the above
expression, in which every term $V^\infty_d(\delta,\ast)$
is replaced by its approximate value, is an optimal
policy.

\section{Multiplicative policy iteration algorithm and comparison with the spectral simplex method of Protasov}\label{sec-algopo}

We now consider the question of solving 
entropy games in practice.

\subsection{Algorithms}
The equivalence 
between entropy games and some special class of
stochastic mean payoff games, through logarithmic glasses
(see Section~\ref{sec-equivalence}), will allow us to adapt
classical algorithms for one or two player zero sum games,
such as the value iteration and the policy iteration algorithm.
We next present a multiplicative version of the policy iteration
algorithm, which follows by adapting policy iteration ideas
for two player games by Hoffman and Karp~\cite{HoffmanKarp},
with ``multiplicative'' policy iteration techniques
of Howard and Matheson~\cite{Howard-Matheson},
Rothblum~\cite{rothblum} and Sladky~\cite{Sladky1976},
The latter ``multiplicative'' policy iteration techniques
apply to the Despot-free case.   
For clarity, we shall explain first policy iteration in the special Despot-free case: this is more transparent, and this also will allow us to interpret Protasov's spectral simplex method~\cite{protasov} as a variant of policy iteration.
The newer part here is the two player case,
which is dealt with in \Cref{algo-jeux}.

We assume that $D=T=\{1,\ldots, |T|\}$ and $\sigma$ is the identity in \eqref{e-cvx0} .
Let $\Ftau$ and $\Mt{\tau}$, $\tau \in \cP_{T}$, be defined as in the 
previous section.
If $\Mt{\tau}$ is irreducible, in particular if all its entries are 
positive, $\Mt{\tau}$ has an eigenvector $X^\tau>0$,  associated to
the Perron root $\lambda^\tau:=\rho(\Mt{\tau})$.
Moreover,  $X^\tau$ is unique up to a multiplicative constant and is
called a Perron eigenvector.
If  all the matrices $\Mt{\tau}$, $\tau\in \cP_{T}$ are irreducible,
one can construct a multiplicative version of 
the policy iteration, \Cref{algopo1p}.

\begin{algorithm}%
\caption{Multiplicative policy Iteration for Despot-free entropy games}\label{algopo1p}
\begin{algorithmic}[1]

\State Initialize $k=1$, $\tau^{0}$, $\tau^{1} \neq \tau^{0}$ randomly.
\While{$\tau^{k} \neq \tau^{k-1} $}

\State \label{step3}
Compute the Perron root $\lambda^{\tau^{k}}$ and a Perron eigenvector $X^{\tau^{k}}$ of $\Mt{\tau^{k}}$.

\State Compute a new policy $\tau^{k+1}$ such that, for all $t \in T$,
     
$$\tau^{k+1}(t) \in 
\argmax_{p\in P,\; (t,p) \in E} \; \sum_{t'\in T,\; (p,t')\in E}m_{p,t'}X^{\tau^{k}}_{t'}\enspace ,$$
   and set $\tau^{k+1}(t) = \tau^{k}(t) $ if this choice is compatible with the former condition.
\State $k \gets k + 1 $
\EndWhile

\State \textbf{return} 
the optimal policy $\tau^{k}$,
the Perron root $\lambda^{\tau^k}$ and Perron eigenvector $X^{\tau^k}$ of $\Mt{\tau^{k}}$.
\end{algorithmic}
\end{algorithm}

The following result shows that Algorithm~\ref{algopo1p}
does terminate. The proof relies on a multiplicative version
of the classical strict monotonicity argument in policy iteration,
which was already used by
Howard and Matheson~\cite{Howard-Matheson},
Rothblum~\cite{rothblum} and Sladky~\cite{Sladky1976}.
We reproduce the short proof for completeness.
\begin{proposition}\label{conv-policy} Consider  Algorithm~\ref{algopo1p}, where 
the computations are performed in {\em exact arithmetics}
and all the matrices $\Mt{\tau}$, $\tau\in \cP_{T}$ are supposed to be irreducible.
Then, the  sequence $\lambda^{\tau^k}$ is increasing 
as long as $\tau^{k}\neq \tau^{k-1}$. Moreover, the algorithm ends after 
a finite number of iterations $k$, and $\lambda^{\tau^k}$ is the value of
the game (at any initial state).
\end{proposition}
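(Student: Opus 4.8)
The plan is to prove the three assertions in turn---strict monotonicity of the Perron roots, termination, and optimality of the terminal value---using a single policy-improvement inequality as the engine throughout. Here $F$ denotes the Despot-free dynamic programming operator, which under the standing convention $D=T$, $\sigma=\mathrm{id}$ reads $F_t(X)=\max_{(t,p)\in E}\sum_{(p,t')\in E}m_{p,t'}X_{t'}$. By construction the improvement step selects $\tau^{k+1}$ so that $\Mt{\tau^{k+1}}X^{\tau^k}=F(X^{\tau^k})$ coordinatewise, the tie-breaking rule keeping $\tau^{k+1}(t)=\tau^k(t)$ precisely when $\tau^k(t)$ already attains the maximum.

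First I would record the improvement inequality. Since $\Mt{\tau^k}X^{\tau^k}=\lambda^{\tau^k}X^{\tau^k}$ and $\tau^{k+1}$ maximizes,
\[
\Mt{\tau^{k+1}}X^{\tau^k}=F(X^{\tau^k})\geq \Mt{\tau^k}X^{\tau^k}=\lambda^{\tau^k}X^{\tau^k}\enspace,
\]
and whenever $\tau^{k+1}\neq\tau^k$ there is a state $t$ at which $\tau^k(t)$ fails to attain the maximum, so the tie-breaking rule forces the $t$-th coordinate of this inequality to be strict. To turn this into strict monotonicity of $\lambda^{\tau^k}$, I would pair the inequality on the left with the left Perron eigenvector $\pi$ of $\Mt{\tau^{k+1}}$, which by irreducibility satisfies $\pi>0$ and $\pi\,\Mt{\tau^{k+1}}=\lambda^{\tau^{k+1}}\pi$. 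Left multiplication gives $\lambda^{\tau^{k+1}}(\pi X^{\tau^k})\geq\lambda^{\tau^k}(\pi X^{\tau^k})$, where $\pi X^{\tau^k}>0$ because $\pi>0$ and $X^{\tau^k}>0$; this already yields $\lambda^{\tau^{k+1}}\geq\lambda^{\tau^k}$, and when $\tau^{k+1}\neq\tau^k$ the coordinatewise strictness together with $\pi>0$ makes the paired inequality strict, so $\lambda^{\tau^{k+1}}>\lambda^{\tau^k}$. This proves the first assertion.

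Termination is then immediate: strict monotonicity makes the roots $\lambda^{\tau^k}$ pairwise distinct while the algorithm runs, hence the policies $\tau^k$ are pairwise distinct, and since $\cP_T$ is finite the loop halts after at most $|\cP_T|$ iterations. At termination the guard fails, i.e.\ $\tau^k=\tau^{k-1}$, which means $\tau^{k-1}$ already attained every maximum in the improvement step, so
\[
F(X^{\tau^k})=\Mt{\tau^k}X^{\tau^k}=\lambda^{\tau^k}X^{\tau^k}\enspace,
\]
exhibiting $(\lambda^{\tau^k},X^{\tau^k})$ as a positive nonlinear eigenpair of $F$. To identify $\lambda^{\tau^k}$ with the value I would invoke the Collatz--Wielandt formul\ae\ of Corollary~\ref{th-cw}: the eigenrelation gives $F(X^{\tau^k})\leq\lambda^{\tau^k}X^{\tau^k}$ with $X^{\tau^k}\in\operatorname{int}\R_+^D$, whence $\bar{V}^\infty\leq\lambda^{\tau^k}$ by~\eqref{e-cw1}, and it gives $F(X^{\tau^k})=\lambda^{\tau^k}X^{\tau^k}$, whence $\lambda^{\tau^k}\leq\bar{V}^\infty$ by~\eqref{e-cw2}. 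Thus $\lambda^{\tau^k}=\bar{V}^\infty$, which by Lemma~\ref{e-valueunique} is the value at every initial state.

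The hard part is the strict monotonicity step: the subtlety is that the improvement inequality is only coordinatewise, possibly strict in a single coordinate, so one must use the strict positivity of the left Perron eigenvector $\pi$---guaranteed exactly by the irreducibility hypothesis on $\Mt{\tau^{k+1}}$---to propagate that lone strict coordinate into a strict scalar inequality. The remaining steps are routine, combining finiteness of $\cP_T$ with the already-proved Collatz--Wielandt characterization of the value.
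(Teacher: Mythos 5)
Your proof is correct and follows essentially the same route as the paper: the same policy-improvement inequality $\Mt{\tau^{k+1}}X^{\tau^k}=F(X^{\tau^k})\geq\lambda^{\tau^k}X^{\tau^k}$ (strict in some coordinate by the tie-breaking rule), the same finiteness argument for termination, and the same identification of the terminal eigenvalue with the value via Lemma~\ref{e-valueunique}. The only difference is that you spell out the proof of the Perron--Frobenius fact the paper merely cites (that $Mu\geq\lambda u$, $Mu\neq\lambda u$, $u>0$, $M$ irreducible imply $\rho(M)>\lambda$) by pairing with the positive left Perron eigenvector, and you route the final step explicitly through the Collatz--Wielandt formul\ae\ of Corollary~\ref{th-cw}, which is exactly what the proof of Lemma~\ref{e-valueunique} does internally.
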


\begin{proof}%
The property that  $\lambda^{\tau^k}$ is increasing 
uses the general property that for an irreducible
nonnegative matrix $M$, and for a positive vector $u$, $Mu\geq \lambda u$ (component-wise)
and $Mu\neq \lambda u$ implies $\rho(M)>\lambda$. 
Then, the algorithm terminates since the number
of policies $\tau$ is a finite set,
and so $\{ \lambda^{\tau} \; | \; \tau \in \cP_{T} \}$ 
is finite.
When $\tau^{k}=\tau^{k-1}$, we get $F(X^{\tau^{k}})=\lambda^{\tau^k} X^{\tau^{k}}$,
and Lemma~\ref{e-valueunique} shows that $\lambda^{\tau^k}$ is the value of
the game (at any initial state).
\end{proof}

\Cref{algopo1p} has a dual version, in which maximization is
replaced by minimization, in order to solve the Tribune-free setting of entropy games.
For this dual version of Algorithm~\ref{algopo1p},
the sequence  $\lambda^{\tau^k}$ is  decreasing as long as
$\tau^{k}\neq \tau^{k-1}$.
This uses the property (dual to the previous one) that for an irreducible
nonnegative matrix $M$, and for a positive vector $u$, $Mu\leq \lambda u$
and $Mu\neq \lambda u$ implies $\rho(M)<\lambda$. 
Then the algorithm terminates as for the primal version.

In practice, \Cref{algopo1p} can only be implemented in an approximate
way. A bottleneck in this algorithm is the computation
of the Perron root and Perron eigenvector. The later
can be computed by standard double
precision algorithms, like the QR method. The
latter method requires $O(D^3)$ flops, where $D$ is
the size of the matrix $^{\tau}M$ associated
to a policy $\tau$. (Note that such complexity estimates
in an informal ``floating point'' arithmetic model
are meaningful only for well conditioned instances,
in contrast with the Turing-model complexity estimates
that we derived unconditionnally in \Cref{sec-proof}.)
One may also use a more scalable algorithm,
like the power algorithm. In fact,
we shall see in \Cref{algoPower2p} that the power
idea can be applied directly and in a simpler way to solve the non-linear
equation, avoiding the recourse to policy iteration.
So it is not clear that the power algorithm will be
the best choice to compute the eigenpair in situations
in which \Cref{algopo1p} is competitive. In the experiments
which follow, we used the QR method in \Cref{algopo1p}.

In~\cite{protasov}, Protasov introduced the Spectral Simplex Algorithm.
His algorithm is a variant of Algorithm~\ref{algopo1p}
in which at every iteration the policy is improved only at {\em one} state,
which is the first state $t$ such that
$F_t(X^{\tau^k})> \lambda^{\tau^k}X^{\tau^k}_t$.
We shall also consider another version of Algorithm~\ref{algopo1p},
in which we  also change the policy at only one state $t$, but we choose it in order to maximize the expression $F_t(X^{\tau^k})- \lambda^{\tau^k}X^{\tau^k}_t$. 
We shall refer to this algorithm  as "Spectral Simplex-D"
since this is analogous
to Dantzig's pivot rule in the original simplex method~\cite{ye2010simplex}.

The Spectral Simplex Algorithm introduced by Protasov in~\cite{protasov}
is described in the Despot-free setting in Algorithm~\ref{algossp}.

\begin{algorithm}%
\caption{Spectral Simplex Algorithm \cite{protasov}}\label{algossp}
\begin{algorithmic}[1]
\State Initialize $k=1$, $\tau^{0}$, $\tau^{1} \neq \tau^{0}$ randomly.
\While{$\tau^{k} \neq \tau^{k-1} $}
        \State Compute the Perron root $\lambda^{\tau^{k}}$ and a Perron eigenvector $X^{\tau^{k}}$ of $\Mt{\tau^{k}}$.
        \State Set $\tau^{k+1}(t)= \tau^{k}(t)$ for $t\in T=\{1,\dots,|T|\}$.
	\State bool = true, $t=1$.
	\While{(bool) and ($t \leq |T|$)}
		\If{ $\Ftauk_{t}(X^{\tau^{k}})\neq \max_{p\in P,\; (t,p) \in E} \; \sum_{t'\in T,\; (p,t')\in E}m_{p,t'}X^{\tau^{k}}_{t'}$}
		\State Choose 
$$\tau^{k+1}(t) \in 
\argmax_{p\in P,\; (t,p) \in E} \; \sum_{t'\in T,\; (p,t')\in E}m_{p,t'}X^{\tau^{k}}_{t'}\enspace .$$
		\State bool = false.
		\EndIf
		\State $t \gets t+1$.
	\EndWhile
\State $k \gets k + 1 $
\EndWhile

\State \textbf{return} the first policy $\tau^{k}$ such that $\tau^{k-1} = \tau^{k}$, the Perron root $\lambda^{\tau^k}$ and a Perron eigenvector $X^{\tau^k}$ of the matrix $\Mt{\tau^{k}}$.

\end{algorithmic}
\end{algorithm}

The Spectral Simplex Algorithm with Dantzig update is
shown in \Cref{algossp:DF-Dz}, again in the Despot-free setting.

\begin{algorithm}%
\caption{Spectral Simplex Algorithm for Despot free game - Dantzig update}\label{algossp:DF-Dz}
\begin{algorithmic}[1]
\State Initialize $k=1$, $\tau^{0}$, $\tau^{1} \neq \tau^{0}$ randomly.
\While{$\tau^{k} \neq \tau^{k-1} $}
        \State Compute the Perron root $\lambda^{\tau^{k}}$ and a Perron eigenvector $X^{\tau^{k}}$ of $\Mt{\tau^{k}}$.
        \State Set $\tau^{k+1}(t)= \tau^{k}(t)$ for $t=1,\ldots, |T|$.
	\State  Compute the vector $$(\epsilon_{t})_{1 \leq t \leq |T|}  = (F_t(X^{\tau^k})- \lambda^{\tau^k}X^{\tau^k}_t)_{1 \leq t \leq |T|}.$$ 
	\State Choose $t^{*} \in \text{arg} \max_{1 \leq t \leq |T|} \; \epsilon_{t}.$
	\State Choose $$\tau^{k+1}(t^{*}) \in 
\text{arg} \max_{p\in P,\; (t^{*},p) \in E} \; \sum_{t'\in T,\; (p,t')\in E}m_{p,t'}X^{\tau^{k}}_{t'}\enspace .$$
\State $k \gets k + 1 $

\EndWhile
\State \textbf{return} the first policy $\tau^{k}$ such that $\tau^{k-1} = \tau^{k}$, the Perron root $\lambda^{\tau^k}$ and a Perron eigenvector $X^{\tau^k}$ of the matrix $\Mt{\tau^{k}}$.

\end{algorithmic}
\end{algorithm}

One can also adapt the Algorithms~\ref{algossp} and~\ref{algossp:DF-Dz} to the Tribune-free setting, again using minimization instead of maximization and replacing $\Ftau$ by $F^{\delta}$.

\if{%
TODO
In~\cite{protasov}, the convergence of the Algorithm~\ref{algossp} is proved 
when all the matrices $\Mt{\tau}$, $\tau\in \cP_{T}$, have  positive entries. 
The arguments there also apply to the Dantzig variant. 
It is proved
by using the following properties.

\begin{lemma}\label{lem-ssp}
Assume that all matrices $\Mt{\tau}$, $\tau\in \cP_{T}$,
have  positive entries, and let
$\lambda^{\tau}, X^\tau$ be the Perron root and a Perron eigenvector
of $\Mt{\tau}$. We have
\begin{enumerate}
\item \label{lem-ssp1}Let $\tau, \tau'\in \cP_{T}$, be such that there exists a unique $t \in T$ with $\tau'(t) \neq \tau(t)$. Then $\Ftaup_{t}(X^{\tau}) > \Ftau_{t}(X^{\tau}) \Rightarrow \lambda^{\tau'} > \lambda^{\tau}$.

\item \label{lem-ssp2}
If for each $t \in T$,  $\tau^{*}\in \argmax_{\tau \in \cP_{T}} \Ftau_{t}(X^{\tau^{*}})$, then $\lambda^{\tau^{*}} = \max_{\tau \in \cP_{T}} \lambda^{\tau}$.
\end{enumerate}
\end{lemma}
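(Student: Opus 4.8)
The plan is to derive both assertions directly from the elementary Perron--Frobenius comparison inequalities for nonnegative matrices already recalled in the proof of Proposition~\ref{conv-policy}. The structural observation that drives everything is that, with $\sigma$ the identity, the $t$th row of $\Mt{\tau}$ depends only on the value $\tau(t)$, and that $\Ftau_t(X)=(\Mt{\tau}X)_t$ for every $X\in\R^D$. Consequently, modifying a policy at a single state alters exactly one row of the associated matrix, and maximizing $\Ftau_t(X)$ over $\tau\in\cP_T$ reconstructs coordinate $t$ of the full operator $F$.

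For the first assertion, I would start from the Perron relation $\Mt{\tau}X^{\tau}=\lambda^{\tau}X^{\tau}$ with $X^{\tau}>0$. Since $\tau'$ and $\tau$ agree off the single state $t$, the rows of $\Mt{\tau'}$ and $\Mt{\tau}$ coincide for every index $s\neq t$; hence $(\Mt{\tau'}X^{\tau})_s=(\Mt{\tau}X^{\tau})_s=\lambda^{\tau}X^{\tau}_s$ for $s\neq t$, while at $s=t$ the hypothesis gives $(\Mt{\tau'}X^{\tau})_t=\Ftaup_t(X^{\tau})>\Ftau_t(X^{\tau})=\lambda^{\tau}X^{\tau}_t$. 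Thus $\Mt{\tau'}X^{\tau}\geq\lambda^{\tau}X^{\tau}$ with strict inequality in coordinate $t$, so in particular $\Mt{\tau'}X^{\tau}\neq\lambda^{\tau}X^{\tau}$. As $\Mt{\tau'}$ is irreducible (its entries are positive) and $X^{\tau}>0$, the comparison property recalled in the proof of Proposition~\ref{conv-policy}, namely that $Mu\geq\lambda u$ together with $Mu\neq\lambda u$ implies $\rho(M)>\lambda$, yields $\lambda^{\tau'}=\rho(\Mt{\tau'})>\lambda^{\tau}$.

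For the second assertion, I would first rewrite the hypothesis: for every $t$ the policy $\tau^{*}$ attains $\max_{\tau\in\cP_T}\Ftau_t(X^{\tau^{*}})$, and since this quantity depends on $\tau$ only through $\tau(t)$, the maximum equals $F_t(X^{\tau^{*}})$. Hence $\Mt{\tau^{*}}X^{\tau^{*}}=F(X^{\tau^{*}})=\lambda^{\tau^{*}}X^{\tau^{*}}$, and for an arbitrary $\tau\in\cP_T$ we get $\Mt{\tau}X^{\tau^{*}}\leq F(X^{\tau^{*}})=\lambda^{\tau^{*}}X^{\tau^{*}}$ with $X^{\tau^{*}}>0$. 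The dual subinvariance bound, that $Mu\leq\lambda u$ and $u>0$ imply $\rho(M)\leq\lambda$, then gives $\lambda^{\tau}=\rho(\Mt{\tau})\leq\lambda^{\tau^{*}}$ for every $\tau$; since $\tau^{*}\in\cP_T$, this is exactly $\lambda^{\tau^{*}}=\max_{\tau\in\cP_T}\lambda^{\tau}$. Alternatively, the relation $F(X^{\tau^{*}})=\lambda^{\tau^{*}}X^{\tau^{*}}$ exhibits a positive nonlinear eigenvector of $F$, so Lemma~\ref{e-valueunique} and Corollary~\ref{th-cw} identify $\lambda^{\tau^{*}}$ with the value of the game, which equals $\max_{\tau}\lambda^{\tau}$.

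I do not anticipate a substantive obstacle: each assertion is a one-row perturbation argument feeding into a standard Perron--Frobenius inequality. The only points that require care are to use the strict version of the inequality for the first assertion (which relies on irreducibility of $\Mt{\tau'}$, guaranteed by positivity of the entries) as opposed to the non-strict subinvariance bound for the second, and to justify that $\Ftau_t(X^{\tau^{*}})$ depends on $\tau$ solely through $\tau(t)$, so that the coordinatewise maxima over $\cP_T$ genuinely recover the operator $F$.
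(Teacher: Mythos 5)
Your proof is correct and follows essentially the same route the paper intends: the paper states this lemma without a detailed argument (deferring to Protasov~\cite{protasov}), but the one-row-perturbation reasoning combined with the Perron--Frobenius comparison inequalities is exactly what the paper itself invokes in the proof of Proposition~\ref{conv-policy}. Your care in distinguishing the strict comparison (needing irreducibility of $\Mt{\tau'}$) for the first assertion from the non-strict subinvariance bound for the second, and your observation that $\Ftau_t$ depends on $\tau$ only through $\tau(t)$, correctly supply the details the paper omits.
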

Indeed, using Point~\ref{lem-ssp1}, one gets that the sequence of spectral radii $(\lambda^{\tau^{k}})_{k \geq 1}$ constructed by the algorithm \ref{algossp} is non-decreasing as long as $\tau^{k}\neq \tau^{k-1}$. Since this sequence belongs to the finite set $\{ \lambda^{\tau} \; | \; \tau \in \cP_{T} \}$, the sequence $(\tau^{k})_{k}$ has to be stationary, so
the algorithm ends.
Point~\ref{lem-ssp2} of Lemma~\ref{lem-ssp} shows that it returns the best policy $\tau^{*}$.}\fi

\subsection{Numerical experiments}

We next report numerical experiments in the case 
of Despot-free in order to compare 
Protasov's spectral simplex algorithm (with and without the improvement of Dantzig' pivot rule) with the multiplicative policy iteration algorithm (Algorithm~\ref{algopo1p}).
In the log-log figures \ref{fig:DF-n} and \ref{fig:DF-m}, these algorithms are respectively named "Policy Iteration", "Spectral Simplex" and "Spectral Simplex-D".

We constructed random Despot-free instances in which $D=T$ has cardinal $n$,
and every coordinate of the operator is of the form $F_t(X)= \max_{1\leq p \leq {m}} \sum_{t'}A^p_{tt'} X_{t'}$, where $(A^p_{tt'})$ is a 3-dimensional tensor whose
entries are independent random variables drawn with the uniform
law on $\{1,\dots,15\}$. Remember that $m$ is an integer which represents the number of possible different actions per state $t$.
All the results shown on the figures are the average made over 30 simulations, they concern the situation in which one of the two parameters $m,n$ is kept constant while the other one increases. The time is given in seconds.
The computations were performed on Matlab R2016a, using an Intel(R) Core(TM) i7-6500 CPU @ 2.59GHz processor with 12,0Go of RAM.

\begin{figure}
\caption{Performance of Algorithms~\ref{algopo1p},~\ref{algossp}, and~\ref{algossp:DF-Dz} for different $n=10,...,500$.}
\label{fig:DF-n}
\includegraphics[scale=0.5]{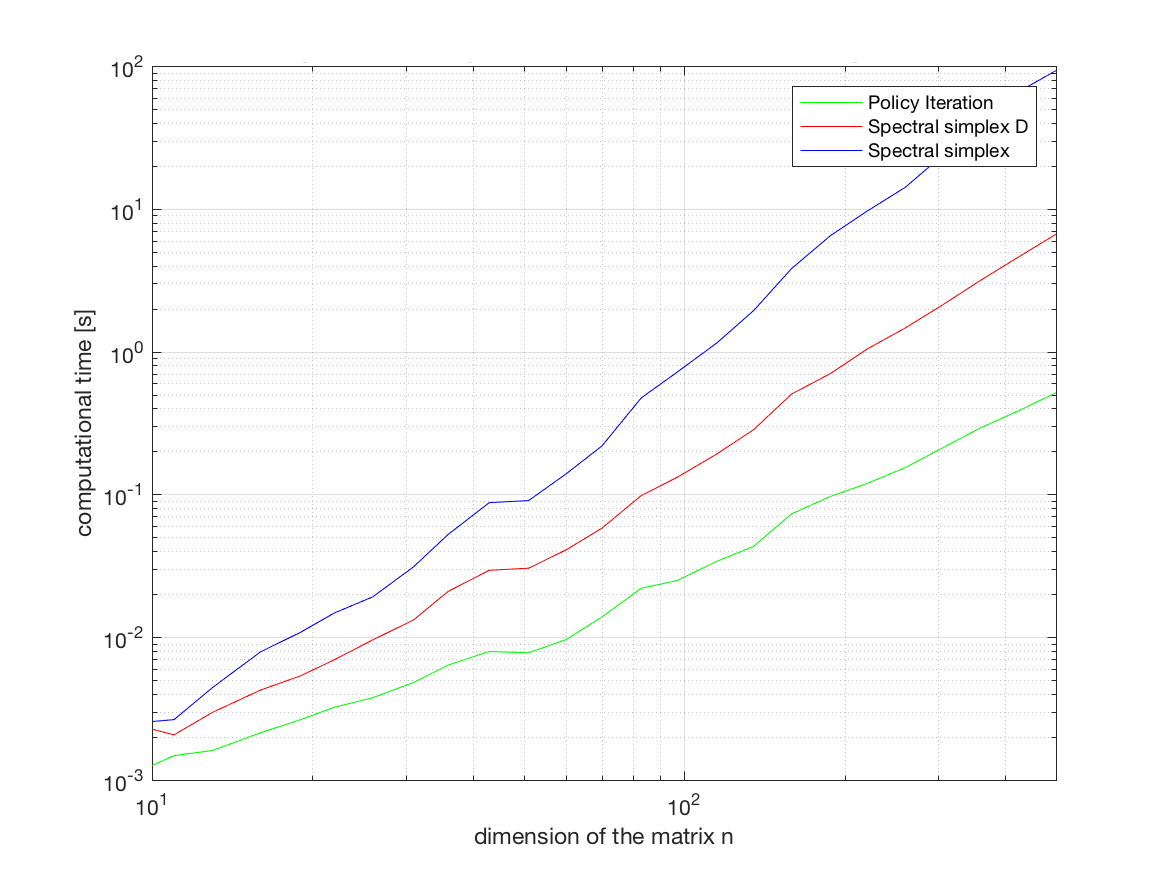}
\end{figure}

\begin{figure}
\caption{Performance of Algorithms~\ref{algopo1p},~\ref{algossp}, and~\ref{algossp:DF-Dz} for different $m=10,...,500$.}
\label{fig:DF-m}
\includegraphics[scale=0.5]{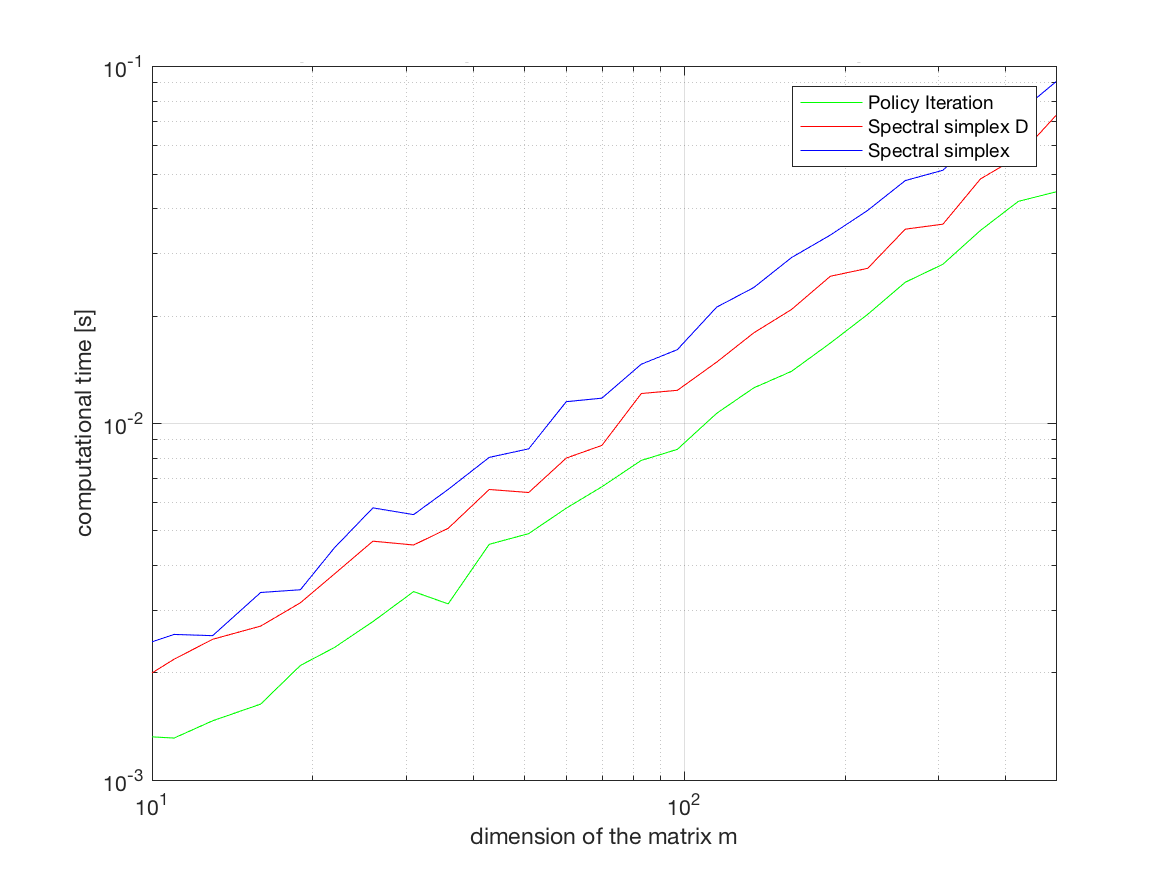}
\end{figure}

In both figures, Spectral Simplex-D appears to be more efficient than the Spectral Simplex algorithm with its original rule. However both algorithms are experimentally outperformed by policy iteration, by one to two order of magnitude, when $n\to\infty$, whereas when $m \to \infty$ (for constant parameter $n$), the performance of the three algorithms seem to deteriorate at the same rate. 
\section{Two-player entropy games: multiplicative policy iteration and power algorithm}\label{algo-jeux}

Let us now consider the general two-player case.
For $\delta\in \cP_D$ and $\tau\in\cP_T$,
let $F^{\delta}$ (resp.\ $\Fdt{\delta}{\tau}$) be the dynamic programming
operator of the game in which the strategy of Despot, $\delta$, is fixed
(resp.\ the strategies of Despot and  of Tribune,
$\delta$ and $\tau$ are fixed), see~\eqref{defFdelta} (resp.\ 
\eqref{defFtaudel}).
We assume here that the matrices 
$\Mdt{\delta}{\tau}$  %
of the linear operators $\Fdt{\delta}{\tau}$, $\delta\in \cP_D,\;
\tau\in \cP_{T}$, see~\eqref{defFtaudel}, are irreducible.

\todo[inline]{MA: je mets le para sur le cas deux joueurs ici}
Then, the Hoffman-Karp's idea~\cite{HoffmanKarp} is readily
adapted to the multiplicative setting: in the 
following algorithm, a sequence $\delta^k$ is constructed
in a similar way as $\tau^k$ in the dual version of Algorithm~\ref{algopo1p},
except that in Step~\ref{step3}, $\lambda^{\delta^{k}}$ and $X^{\delta^{k}}$  are 
computed by applying Algorithm~\ref{algopo1p} 
to the dynamic programming operator $F^{\delta^{k}}$ 
in which the strategy of Despot is fixed to $\delta^k$.
We call this the {\em multiplicative} Hoffman-Karp algorithm.
ta%
It can also be viewed as an ``exact'' version of the 
 policy iteration algorithm of Hoffman 
and Karp~\cite{HoffmanKarp} for $f$.

\begin{algorithm}
\caption{Policy Iteration for two-player entropy games}\label{algopo2p}
\begin{algorithmic}[1]

\State Initialize $k=1$, $\delta^{0}$, $\delta^{1} \neq \delta^{0}$ randomly.

\While{$\delta^{k} \neq \delta^{k-1} $}

\State Apply Algorithm~\ref{algopo1p}  to $F^{\delta^{k}}$. 
This returns a policy $\tau^{k}\in \cP_T$, the Perron root $\lambda^{\delta^{k}}$
and the Perron eigenvector $X^{\delta^k}$ of $\Mdt{\delta^k}{\tau^{k}}$.

\State Compute a new policy $\delta^{k+1}$ such that, for all $d\in D$,

$$ \delta^{k+1}(d) \in
\argmin_{t\in T,\; (d,t)\in E} \; \max_{(t,p)\in E} \left(\sum_{(p,d') \in E} m_{pd'}X^{\delta^{k}}_{d'}\right) \enspace ,$$ %

     taking  $\delta^{k+1}(d) = \delta^{k}(d) $ if it belongs to the latter argmin.

\State $k \gets k + 1 $

\EndWhile

\State \textbf{return} the first policies $(\delta^{k},\tau^{k})$ such that $\delta^{k} = \delta^{k-1}$, 
and the Perron root $\lambda^{\delta^{k}}$
and the Perron eigenvector $X^{\delta^k}$ of $M^{\delta^k,\tau^{k}}$.

\end{algorithmic}
\end{algorithm}

A similar proof as the one of Proposition~\ref{conv-policy} shows that 
Algorithm~\ref{algopo2p}, implemented in exact arithmetics, 
terminates and is correct 
under the previous assumption that for any pair of policies
of the two players,  the associated transition matrix is irreducible. 
Indeed, as for the dual version of Algorithm~\ref{algopo1p},
 $\lambda^{\delta^k}$ is decreasing
as long as $\delta^{k}\neq \delta^{k-1}$.
Then, since again the set of policies of Despot 
is finite, the algorithm ends.

To have an additional point of comparison, we used a power type algorithm,
more precisely,  the projective version of the Krasnoselski-Mann iteration,
proposed in~\cite{stott}. The original Krasnoselski-Mann iteration was developed
in~\cite{mann,krasno}, its analysis was extended and refined by Ishikawa~\cite{ishikawa} and Baillon and Bruck~\cite{baillonbruck}.
Recall that
$$F_{d}(X) = \min_{(d,t)\in E} \max_{(t,p)\in E} \sum_{(p,d') \in E} m_{pd'}X_{d'}\enspace ,$$
and let us define 
$$ H_{d}(X) = (X_{d}G_{d}(X))^{1/2}, \text{ where }G_{d}(X)=\dfrac{F_{d}(X)}{(\prod_{d'\in E}F_{d'}(X))^{1/|D|}}.$$
Every fixed point of $H$ is an eigenvector $X \in R_{+}^{D}$
of $F$ such that $\prod_{d'\in E}X_d=1$, indeed
\begin{align*}
H(X)=X & \iff (X_{d}G_{d}(X))^{1/2}=X_{d}, \forall d \in D \\
& \iff G_{d}(X)=X_{d} , \forall d \in D \\
& \iff \dfrac{F_{d}(X)}{(\prod_{d'}F_{d'}(X))^{1/|D|}} = X_{d}, \forall d \in D \\
&\iff F_{d}(X) = \lambda X_{d},\forall d \in D,\;\text{and}\; \prod_{d'\in E}X_d=1
\enspace,
\end{align*} 
for some $\lambda >0$ (since then $\lambda= (\prod_{d'}F_{d'}(X))^{1/|D|}$).
 It can be shown
as a corollary of a general result of Ishikawa~\cite{ishikawa} concerning
nonexpansive mappings in Banach spaces that Algorithm~\ref{algoPower2p}
does converge if $F$ has an eigenvector in the interior of the cone. We
refer the reader to~\cite{stott} for more details on the analysis
of the projective Krasnoselski-Mann iteration. We use 
Hilbert's projective metric $d_{H}(X,Y) = \| \log(X) - \log(Y) \|_{H}$ (with $\|X\|_{H} = \max_{d} X_{d}- \min_{d} X_{d}$) to test the approximate termination.

\begin{algorithm}
\caption{Power algorithm for two-player entropy games}\label{algoPower2p}
\begin{algorithmic}[1]

\State Initialize $k=0$, $V^{0} = \boldsymbol{e} \in \R^{D}, V^{1} = F(X^{0})$.

\While{$ d_{H}(V^{k+1}, V^{k}) > \epsilon $}
\State $V^{k} \gets V^{k+1} $
\State $V^{k+1} \gets H(V^{k}) $
\State $k \gets k + 1 $
\EndWhile

\State \textbf{return} the first vector $V^{k}$ such that $ d_{H}(V^{k+1}, V^{k}) \leq \epsilon.$

\end{algorithmic}
\end{algorithm}

The log-log figures~\ref{fig:2P-n} and~\ref{fig:2P-m} show the performances of the power algorithm and the two-player policy iteration algorithm. The computation were performed on the same computer as in the previous section;
the time is still given in seconds. The policy iteration algorithm outperforms the power algorithm asymptotically for large number of actions $m$, whereas for large number of states $n$, the power algorithm is more efficient. The experimentally observed efficiency of the --naive-- power algorithm actually reveals
that the random instances we considered are relatively ``easy''. 

\begin{figure}[H]

\caption{Performance of Algorithms~\ref{algopo2p} and~\ref{algoPower2p} for $n=10,...,2000$.}
\label{fig:2P-n}
\includegraphics[scale=0.21]{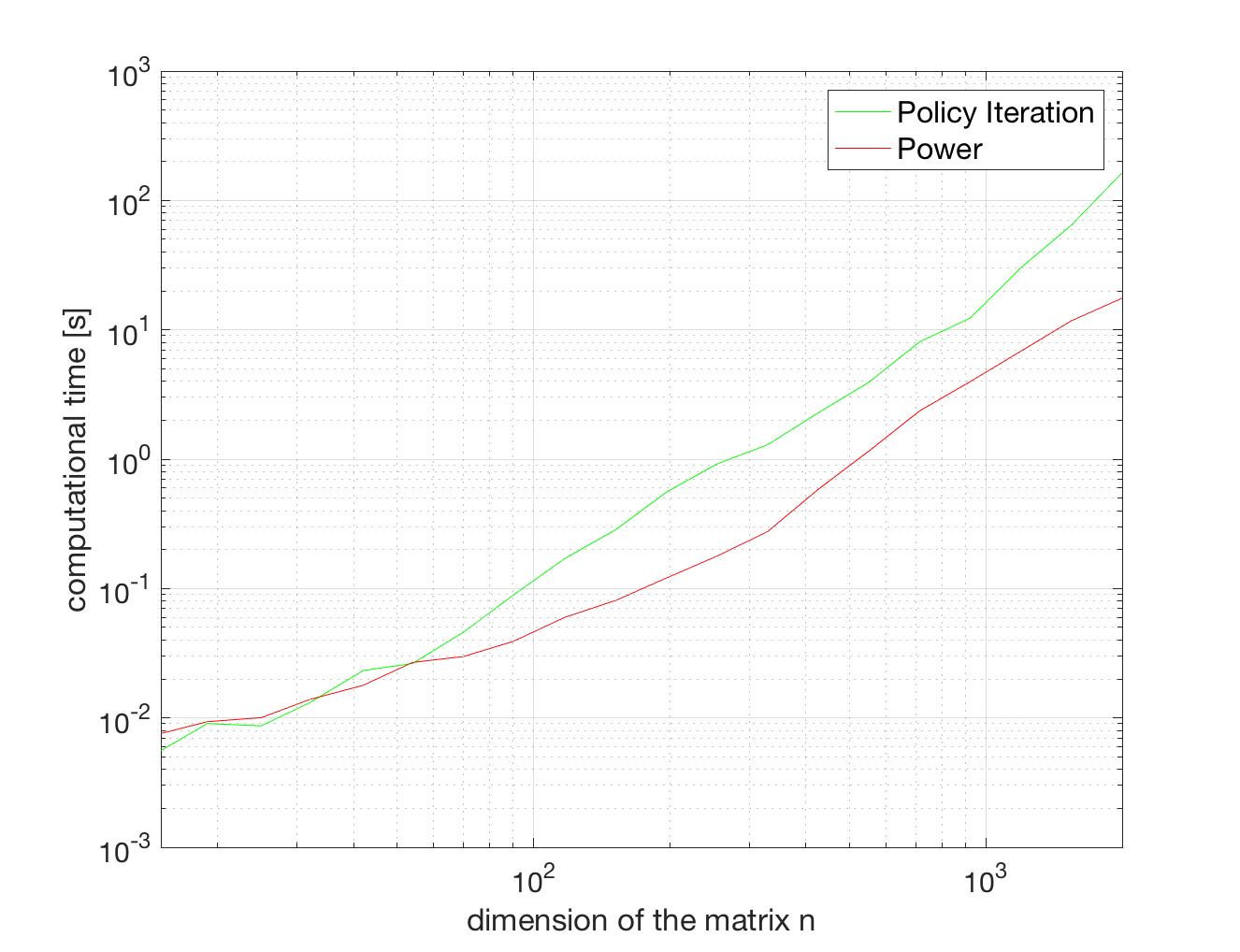}
\end{figure}
\begin{center}

\begin{figure}[H]

\caption{Performance of Algorithms~\ref{algopo2p} and~\ref{algoPower2p} for $m=10,...,200$.}
\label{fig:2P-m}
\includegraphics[scale=0.21]{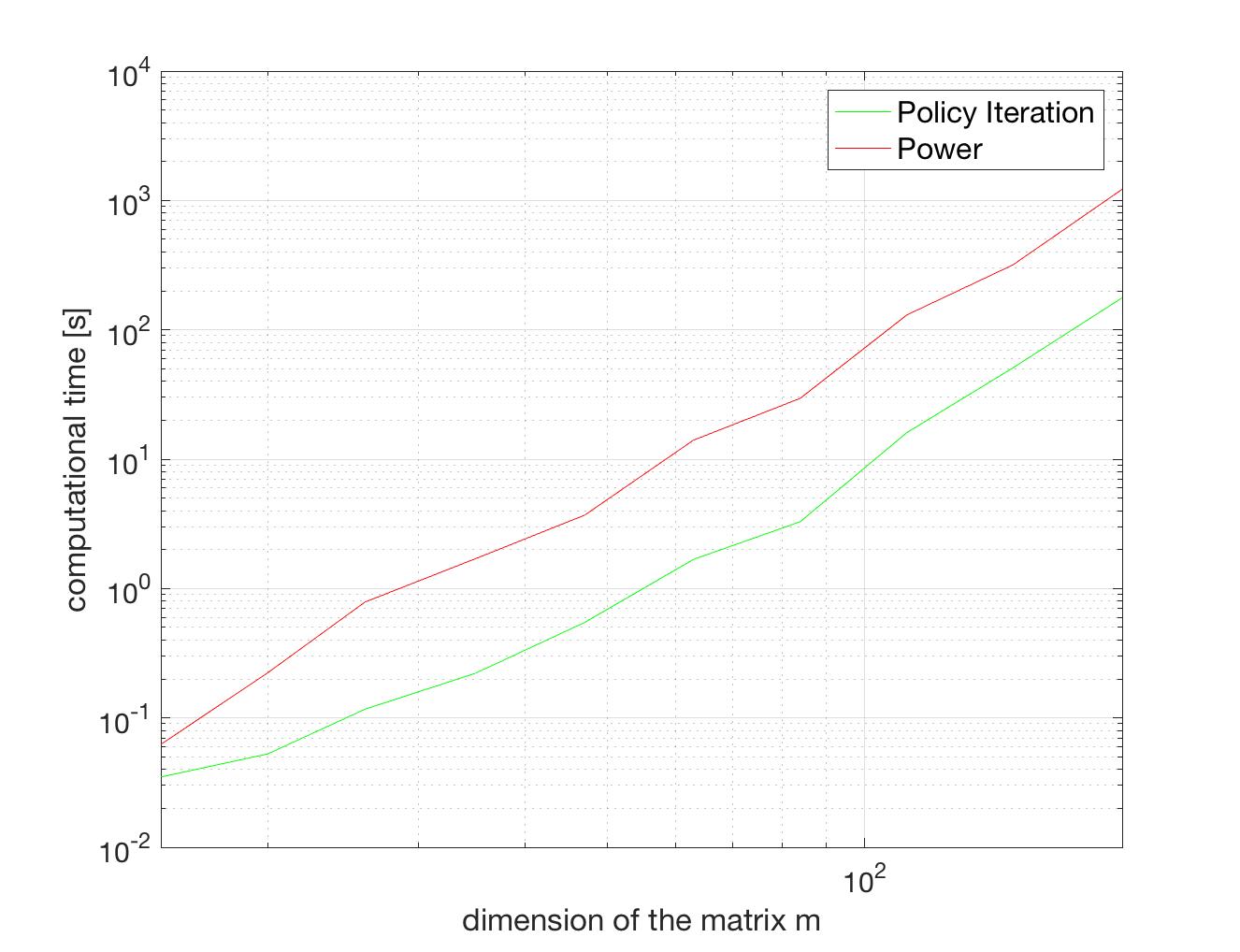}
\end{figure}

\end{center}

\section{Concluding remarks}
We developed an operator approach for entropy games,
relating them with risk sensitive control via non-linear
Perron-Frobenius theory. This leads to 
a theoretical result (polynomial time solvability
of the Despot-free case), and this allows one to adapt
policy iteration to these games. Several issues concerning policy iteration
in the spectral setting remains unsolved. A first issue is to understand
what kind of approximate eigenvalue algorithms are best suited. 
A second issue is to identify significant classes 
of entropy games on which the Hoffman-Karp type policy iteration algorithm
can be shown to run in polynomial time (compare with~\cite{ye2010simplex,hansen2011strategy} in the case of Markov decision processes).
In view of the asymmetry between Despot and Tribune,
one may expect that Tribune-free entropy games are at least as hard as  deterministic mean payoff games, it would be interesting to confirm that this is the case. 

\subparagraph*{Acknowledgments.}
An announcement of the present results appeared in the proceedings of STACS, \cite{akian_et_al:LIPIcs:2017:7026}. We are very grateful to the referees of this STACS paper and also to the referees of the present extended version,
for their detailed comments which helped us to
improve this manuscript.
\bibliographystyle{alpha}
\bibliography{nonexpansive}

\end{document}